\newcommand*\samethanks[1][\value{footnote}]{\footnotemark[#1]}
\newcommand{\Ip}[2]{\langle #1,#2\rangle}
\newcommand{\rank}{\mathrm{rk}}
\newcommand{\nrank}{\rank^+}
\newcommand{\conv}{\mathrm{conv}}
\newcommand{\xc}{\mathrm{xc}}
\newcommand{\bE}{\mathbb{E}}
\newcommand{\bN}{\mathbb{N}}
\newcommand{\bR}{\mathbb{R}}
\newcommand{\cI}{\mathcal{I}}
\newcommand{\cL}{\mathcal{L}}
\newcommand{\cM}{\mathcal{M}}
\newcommand{\cP}{\mathcal{P}}
\newcommand{\cT}{\mathcal{T}}
\newcommand{\cR}{\mathcal{R}}
\newcommand{\bool}{\{0,1\}}
\newcommand{\boA}{\mathbf{A}}
\newcommand{\boB}{\mathbf{B}}
\newcommand{\boC}{\mathbf{C}}
\newcommand{\boD}{\mathbf{D}}
\newcommand{\boF}{\mathbf{F}}
\newcommand{\bone}{\mathbf{1}}
\newcommand{\Mall}{\cM_{\mathrm{all}}}
\newcommand{\Lall}{\cL_{\mathrm{all}}}
\newcommand{\eG}{\mathrm{GOOD}}
\newcommand{\eS}{\mathrm{SMALL}}
\newcommand{\eB}{\mathrm{BAD}}
\newcommand{\pos}{\mathrm{pos}}
\newcommand{\BalM}{\mathrm{Bal_M}}
\newcommand{\BalL}{\mathrm{Bal_L}}
\begin{document}

\title{The Exact Bipartite Matching Polytope Has Exponential Extension Complexity} 

\author{Xinrui Jia\thanks{Emails: \textit{xinrui.jia, ola.svensson, weiqiang.yuan@epfl.ch.} This work was supported by the Swiss National Science Foundation project 200021-184656 ``Randomness in Problem Instances and Randomized Algorithms".}, Ola Svensson\samethanks, Weiqiang Yuan\samethanks }
\affil{EPFL}
\date{}

\maketitle

% Default Copyright Statement
%\fancyfoot[R]{\scriptsize{Copyright \textcopyright\ 2023 by SIAM\\
%Unauthorized reproduction of this article is prohibited}}

\begin{abstract}
Given a graph with edges colored red or blue and an integer $k$, the 
\textit{exact perfect matching} problem asks if there exists a perfect matching with exactly $k$ red edges.
There exists a randomized polylogarithmic-time parallel algorithm to solve this problem, dating back to the eighties, but no deterministic polynomial-time algorithm is known, even for bipartite graphs.
In this paper we show that there is no sub-exponential sized linear program that can describe the convex hull of exact matchings in bipartite graphs. 
In fact, we prove something stronger, that there is no sub-exponential sized linear program to describe the convex hull of perfect matchings with an odd number of red edges.
\end{abstract}

\section{Introduction} \label{sec:intro}

The perfect matching problem is a central question in combinatorial optimization with a long and rich history. The first efficient (polynomial-time) algorithm for bipartite graphs can be traced back more than a century to Jacobi~\cite{Jacobi} and, for general graphs, a celebrated result by Edmonds~\cite{edm65matching} gave an efficient algorithm in 1965. Furthermore, a very recent breakthrough  shows that we can even solve the problem in bipartite graphs  in near-linear time~\cite{DBLP:journals/corr/abs-2203-00671}. 
We thus have a very good understanding of the basic problem. In contrast, and perhaps surprisingly, it remains a notorious problem to devise  efficient deterministic algorithms for slight variations. A prominent example is the \emph{exact perfect matching} problem: given an integer $k$ and a graph where each edge is colored either red or blue, the goal is to find a perfect matching with exactly $k$ red edges or to determine that such a matching does not exist.

This simple-to-state problem admits a very intriguing status that is related to the power of randomness in algorithms.
%one of the unsolved mysteries in computer science: the power of randomness in the design of algorithms (see e.g.\ the books~\cite{AroraBarak,Wigderson}). 
While there are several problems that admit efficient \emph{randomized} algorithms but for which no efficient deterministic algorithms are known, celebrated results in complexity theory strongly indicate that all efficient randomized algorithms can in fact be derandomized~\cite{ImpagliazzoW97}. 
Thus, our inability to find deterministic algorithms for problems that admit randomized ones  is very likely due to a lack of algorithmic techniques.
%and not due to a fundamental difference in the power of randomized and deterministic polynomial-time computation.
The gap in our understanding of the exact matching problem is especially large. If we restrict ourselves to deterministic algorithms, then efficient  algorithms are only known for very special graph families such as complete graphs, complete bipartite graphs~\cite{GKMT17}, bounded independence number graphs, and FPT algorithms~\cite{ElMaalouly},\cite{ElMaalouly-Steiner}. 
In particular, we have no deterministic \emph{sequential} algorithm that solves the above-mentioned problem  efficiently even on \emph{bipartite graphs}.
In contrast, if we allow for randomness, then we  have polylogarithmic-time \emph{parallel}  algorithms that  solve them on \emph{general graphs}.  Specifically,  
Mulmuley, Vazirani, and Vazirani~\cite{MVV87} gave a beautiful randomized efficient parallel algorithm for finding perfect matchings that generalizes nicely to variations such as the exact matching problem. 

The versatile randomized approach by Mulmuley, Vazirani, and Vazirani was published in 1987 and there was little progress in derandomizing it for decades. Then, in a relatively recent breakthrough~\cite{FennerGT16}, Fenner, Gurjar, Thierauf almost completely derandomized their approach for the perfect matching problem in bipartite graphs (almost since their deterministic parallel algorithm requires quasi-polynomially many processors instead of polynomially many). Building on the techniques in~\cite{FennerGT16}, the results have now been extended to perfect matchings in general graphs~\cite{nST17}, matroid intersection~\cite{GurjarT16}, and totally unimodular polytopes~\cite{GurjarTV18}. In all of these works, our excellent polyhedral understanding of these problems have been crucial in the derandomization of~\cite{MVV87}. A major difficulty in extending these techniques to obtain a deterministic algorithm for the exact matching problem is that we do not have a good polyhedral understanding, i.e.,  an amenable description of the convex hull of the indicator vectors of exact perfect matchings. The goal of this paper is  to advance our polyhedral understanding of the exact matching problem. 

The convex hull of perfect matchings in bipartite graphs has a very simple and small (linearly many inequalities) description.  But what happens to the exact matching problem in bipartite graphs? In other words, how much more complex is this polytope when we introduce the constraint that the perfect matchings must have exactly $k$ red edges?  
%If we allow for randomness, then there is a very versatile algorithmic approach by Mulmuley, Vazirani, Vazirani~\cite{MVV87} from 1987 that yields a randomized polylogarithmic-time paralell algorithm. How is a basic question where our gap in understanding determisi
%
%have known  a  randomized polynomial-time algorithm~\cite{MVV87}  since the 80's but it remains a notorious problem to find a deterministic polynomial-time algorithm. Moreover, complexity theory strongly indicates that such an algorithm must exist. and a deterministic polynomial-time algorithm for both complete graphs and complete bipartite graphs~\cite{GKMT17}. However, no deterministic polynomial-time algorithm is known for general bipartite graphs.
%
In this paper, we show that the description of the polytope becomes significantly more complex and the exact perfect matching problem cannot be solved by a linear program of subexponential size even for bipartite graphs\footnote{We remark that this question only makes sense for bipartite graphs and not for general graphs due to the breakthrough result by Rothvoss~\cite{Rothvoss17}: already the perfect matching problem has exponential extension complexity in general graphs.}. More formally, we are given an instance $\cI_{n,k}=(G_n=(U,V,E),k)$, where  $|U|=|V|=n$, and there are two parallel edges, one each of red and blue, between every pair $u\in U$ and $v\in V$. 
We define the associated polytope of $\cI_{n,k}$ as the convex hull of the characteristic vectors of all perfect matchings with exactly $k$ red edges, i.e. $P_{\cI_{n,k}}:=\conv(\{\chi^M:M \text{ is a perfect matching}\land |\{e:e\in M\land e\text{ is red}\,\}|=k \})\subseteq \bR^E$. Our main result is that the polytope $P_{\cI_{n,k}}$ has exponential extension complexity.%, denoted xc$(P_{\cI_{n,k}})$.

\begin{theorem}\label{thm:xc_exact}
For all $n>0$, there exists some $k:=k(n)$ such that $\xc(P_{\cI_{n,k}})=2^{\Omega(n)}$.
\end{theorem}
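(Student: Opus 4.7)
The plan is to prove the stronger claim announced in the abstract --- that the ``parity polytope''
\[
P^{\mathrm{odd}}_n := \conv\{\chi^M : M \text{ is a perfect matching of } G_n \text{ with an odd number of red edges}\}
\]
satisfies $\xc(P^{\mathrm{odd}}_n) = 2^{\Omega(n)}$ --- and then deduce \Cref{thm:xc_exact} by a standard disjunctive-programming argument. Since $P^{\mathrm{odd}}_n = \conv\bigl(\bigcup_{k \text{ odd},\, 1 \le k \le n} P_{\cI_{n,k}}\bigr)$, Balas's bound $\xc(\conv(\bigcup_i P_i)) \le \sum_i \xc(P_i) + O(m)$ for a union of $m$ polytopes, together with pigeonhole over the $O(n)$ odd values of $k$, produces some $k = k(n)$ with $\xc(P_{\cI_{n,k}}) = 2^{\Omega(n)}$.

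To lower-bound $\xc(P^{\mathrm{odd}}_n)$, I invoke Yannakakis's factorization theorem, which identifies $\xc(P^{\mathrm{odd}}_n)$ with the nonnegative rank $\nrank(S)$ of the slack matrix $S$ of $P^{\mathrm{odd}}_n$ (rows indexed by odd-red perfect matchings, columns by valid inequalities). The goal is to exhibit a submatrix of $S$ of large nonnegative rank. The canonical target is the Unique Disjointness matrix $\Disj_m$ over $\{0,1\}^m \times \{0,1\}^m$ with $m = \Omega(n)$, whose nonnegative rank is $2^{\Omega(m)}$ by the result of Fiorini--Massar--Pokutta--Tiwary--de Wolf.

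The heart of the argument is constructing the embedding: to each $a \in \{0,1\}^m$ I want to assign a vertex $M_a$ of $P^{\mathrm{odd}}_n$, and to each $b \in \{0,1\}^m$ a valid inequality $\ell_b(x) \le \rho_b$, so that the slack $\rho_b - \ell_b(\chi^{M_a})$ vanishes iff $a \cap b = \varnothing$ and takes a fixed positive value whenever $|a \cap b| = 1$. The natural construction is a block gadget: partition $G_n$ into $m = \Theta(n)$ blocks of constant size, with Alice's bit $a_i$ selecting (inside block $i$) between two local matching-with-coloring configurations of opposite local red-parity, and Bob's bit $b_i$ selecting between two local inequalities; a global aggregation together with a parity-enforcing correction term is then meant to reproduce the $\Disj$ slack pattern.

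The main obstacle is producing the family of valid inequalities. Bipartite matching polytopes alone have polynomial extension complexity (Birkhoff) and parity polytopes over $\{0,1\}^m$ alone have polynomial extension complexity (Yannakakis's classical construction), so the hardness must come entirely from the \emph{coupling} between matching structure and red-edge parity. Unlike Rothvoss's argument for the matching polytope on general graphs, where hard facets are ready-made as odd-set (blossom) inequalities, in the bipartite setting the hard facets must be manufactured from the parity constraint itself. Identifying a combinatorially rich family of such facets, proving their validity, and verifying that their slacks on the chosen hard vertices reproduce $\Disj$ --- most likely via the rectangle-cover / hyperplane-separation machinery that Rothvoss developed for matching --- is the crucial non-trivial step.
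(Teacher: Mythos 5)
Your top-level reduction is exactly the paper's: write $P^{\mathrm{odd}}_n = \conv\bigl(\cup_{k\text{ odd}} P_{\cI_{n,k}}\bigr)$, invoke Balas's disjunctive bound, and pigeonhole over the $O(n)$ values of $k$. This is literally how the paper derives Theorem~\ref{thm:xc_exact} from Theorem~\ref{thm:xc_odd}.

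The gap is in your plan for the core bound $\xc(P^{\mathrm{odd}}_n) = 2^{\Omega(n)}$. You propose a block gadget that embeds Unique Disjointness as a submatrix of the slack matrix, with slack vanishing iff $a\cap b = \emptyset$ and taking a fixed positive value when $|a\cap b|=1$. This is the Fiorini--Massar--Pokutta--Tiwary--de Wolf strategy, but it is not what the paper does, and no such $\Disj$ embedding is known here, just as none is known for the perfect matching polytope of general graphs. The paper follows Rothvoss's alternative route: after observing at the end of Section~\ref{sec:relax-poly} that rectangle-cover reasoning is too weak for the labeling constraints (an entry of slack $\ell$ lies in $\binom{\ell+1}{2}$ of the natural rectangles, so the $O(n^4)$ rank-one $0/1$ matrices covering the support cannot sum to the slack matrix), it applies the hyperplane separation bound of Lemma~\ref{lem:hyperplane} to a weight matrix $W$ built from measures $\mu_3$ and $\mu_{4k+3}$ on slack-$2$ and slack-$(4k+2)$ entries, the technical core being the over-covering Lemma~\ref{lem:over_cover}. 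Your closing sentence gestures at this machinery, but you conflate it with the $\Disj$-embedding strategy, and these are genuinely different tools --- the former is used precisely because the latter is unavailable. Moreover, your proposal never actually produces the family of valid inequalities; that is the paper's central construction. The labeling constraints of Section~\ref{sec:relax-poly} --- one inequality $\sum_{e\in E_L} x_e \ge 1$ for each $0/1$ labeling $L$ of the vertices whose number of ones has the same parity as $n$, proved valid in Claim~\ref{clm:parity} --- together with the partition, triple, and permutation apparatus of Sections~\ref{subsec:partitions}--\ref{subsec:bad} that makes the Rothvoss-style argument go through for these constraints, constitute the substance missing from your sketch.
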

\begin{Remark}
 Theorem~\ref{thm:xc_exact} can be strengthened in the following way: fix any constant $\epsilon>0$, then for all $n>0$ and $\epsilon n\le k\le (1-\epsilon)n$, $\xc(P_{\cI_{n,k}})=2^{\Omega(n)}$. This stronger result directly follows from the simple observation that for any $n,k,n',k'$ such that $k'\ge k,n'-k'\ge n-k$, $P_{\cI_{n,k}}$ is isomorphic to some face of $P_{\cI_{n',k'}}$.
\end{Remark}

In fact, our paper proves a stronger result. Consider the following problem: we are given a bipartite graph where each edge is colored either red or blue. Instead of finding a perfect matching with exactly $k$ red edges, we aim to find a perfect matching with an odd number of red edges. We call this problem \emph{parity bipartite perfect matching}. Given the graph $G_n=(U,V,E)$, let $\Mall(G_n)$ denote the set of all perfect matchings of $G_n$ with an odd number of red edges. The (odd-)parity bipartite perfect matching polytope associated with $G_n$ is defined as $P_{G_n}:=\conv(\{\chi^M:M\in \Mall(G_n)\})$. We show that this polytope has exponential extension complexity.

\begin{theorem}\label{thm:xc_odd}
For all $n>0$, $\xc(P_{G_{n}})=2^{\Omega(n)}$.
\end{theorem}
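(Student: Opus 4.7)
The plan is to invoke Yannakakis' theorem $\xc(P_{G_n}) = \nrank(S_{P_{G_n}})$ and lower bound the nonnegative rank of the slack matrix (rows indexed by valid inequalities $\Lall$, columns by vertices $\Mall(G_n)$) by reducing from a matrix of known exponential nonnegative rank, e.g.\ the unique-disjointness (UDISJ) matrix. I would partition $\Mall$ and a chosen family of valid inequalities into three classes $\eG, \eS, \eB$ (\emph{good}, \emph{small}, \emph{bad}), design the hard instance on the $\eG \times \eG$ block, and control $\eS$ and $\eB$ by a balance-style argument in the spirit of $\BalM, \BalL$.

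For the GOOD matchings, I would tile $G_n$ with $m = \Theta(n)$ constant-size ``bit'' gadgets, each admitting two local perfect sub-matchings differing in red count by exactly one. For $S \subseteq [m]$, the matching $M_S$ uses the ``on'' gadget for $i \in S$ and the ``off'' gadget for $i \notin S$, with a single dedicated correction gadget forcing the global red count to be odd, so $M_S \in \Mall(G_n)$. For each $T \subseteq [m]$ I would build a valid inequality $\ell_T \ge 0$ as a carefully tuned nonnegative combination of (i) gadget-local degree constraints restricted to coordinates $i \in T$, (ii) the global parity-derived inequality $\sum_{e \in R} x_e \ge 1$, where $R$ denotes the red edges (valid because every $M \in \Mall(G_n)$ has at least one red edge), and (iii) box constraints $0 \le x_e \le 1$, engineered so that the slack $\ell_T(\chi^{M_S})$ depends on $(S,T)$ only through $|S \cap T|$ and realizes a hard pattern such as $(|S \cap T|-1)^2$. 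The standard $2^{\Omega(m)}$ lower bound on $\nrank$ of UDISJ then yields $\xc(P_{G_n}) = 2^{\Omega(n)}$.

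The main obstacle is constructing the $\ell_T$ so that the cross-gadget contributions cancel cleanly and the slack is genuinely a univariate function of $|S \cap T|$; validity of $\ell_T$ is restrictive because the bipartite matching polytope has polynomial extension complexity, so essentially all the hardness must be channeled through the parity constraint. I expect the $\eG/\eS/\eB$ decomposition and the balance functions $\BalM, \BalL$ to be crucial here: they should isolate a structured submatrix on which the reduction formula holds exactly, while the discarded entries are shown not to compensate, in the style of Rothvoss' lower bound for perfect matchings in general graphs.
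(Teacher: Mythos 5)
Your proposal has a structural gap that prevents it from working, and it also diverges fundamentally from the paper's approach.

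The core problem is with the UDISJ reduction. You encode $S\subseteq[m]$ as a matching $M_S$ by setting gadget~$i$ to ``on'' or ``off'' depending on $\mathbf{1}[i\in S]$. This makes $\chi^{M_S}$ an \emph{affine} function of the indicator vector of $S$. Every slack entry $b_T - a_T^\top \chi^{M_S}$ is linear in $\chi^{M_S}$ and hence affine in $\chi^S$; a nonnegative combination of valid inequalities cannot change that. So the slack can depend on $(S,T)$ only through affine forms in $\chi^S$ and $\chi^T$, which at best gives you $|S\cap T|$, $|S|$, $|T|$ linearly. You cannot realize the quadratic pattern $(|S\cap T|-1)^2$ this way; the resulting matrix factors through low-dimensional linear data and has polynomial nonnegative rank. (The standard UDISJ embeddings, e.g.\ via the correlation polytope, get quadratic slack because the \emph{vertex encoding} is itself quadratic, $\chi^S\mapsto \chi^S(\chi^S)^\top$; a bipartite matching has no such bilinear structure to exploit.) Separately, your inequality family is far too weak: gadget-local degree constraints, box constraints, and the single red-count inequality $\sum_{e\in R}x_e\ge 1$ together form a polynomial-size LP, so the slack matrix with respect to that family automatically has polynomial nonnegative rank, no matter how cleverly you combine them. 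To get any exponential lower bound you must use an exponentially large family of valid inequalities; the paper introduces the full family $\{E_L : L\in\Lall(G)\}$ of labeling constraints (of which $\sum_{e\in R}x_e\ge 1$ is only one member) precisely for this reason.

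The paper also does \emph{not} reduce to UDISJ, and it explains why such an approach (or the simpler covering-number bound) is insufficient for this polytope: the $\binom{\ell+1}{2}$-fold over-covering of slack-$\ell$ entries by natural rectangles spoils any direct rectangle-based factorization, exactly as Yannakakis observed for general matching. Instead the paper uses Rothvoss's hyperplane separation bound: it builds a signed weight matrix $W$ that puts $-\infty$ on slack-$0$ entries, a positive measure $\mu_3$ on slack-$2$ entries, and a negative measure on slack-$(4k+2)$ entries, and then shows that any rectangle disjoint from slack-$0$ entries over-covers the high-slack entries relative to the slack-$2$ entries. The GOOD/SMALL/BAD classification and the balance sets $\BalM,\BalL$ that you mention are applied to \emph{triples} $(T,H,\boD)$ arising from a random partition of $[n]$, not to matchings and inequalities themselves; they serve to compare $\mu_3(\cR)$ with $\mu_{4k+3}(\cR)$, which is a very different use than isolating a hard submatrix for a reduction. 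You have imported the vocabulary of the argument without the underlying machinery, and the machinery you substitute (a direct UDISJ embedding on the GOOD block) cannot produce the required quadratic slack pattern from an affine encoding of $S$ inside a bipartite matching.
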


Note that $P_{G_n}=\conv(\cup_{k \text{ odd}} P_{\cI_{n,k}})$. Using the following basic fact in~\cite{Balas1998} on the extension complexity of the convex hull of a union of polytopes, one can directly deduce Theorem~\ref{thm:xc_exact} from Theorem~\ref{thm:xc_odd}.
\begin{fact}[\cite{Balas1998}]
Let $P_1,\ldots,P_k$ be a set of polytopes. Then $\xc(\cup_{1\le i\le k} P_i)\le \sum_{i=1}^{k} \xc(P_i)$.
\end{fact}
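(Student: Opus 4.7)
The plan is to give Balas's disjunctive programming construction: I would build an extended formulation of $\conv(\cup_i P_i)$ by gluing together minimum extended formulations of the individual $P_i$'s, introducing one extra scalar variable $\lambda_i$ for each $i$ that encodes its weight in the convex combination. The starting observation is that every point of $\conv(\cup_i P_i)$ is of the form $\sum_i \lambda_i p_i$ with $\lambda_i \ge 0$, $\sum_i \lambda_i = 1$, and $p_i \in P_i$, and the nonlinear product $\lambda_i p_i$ is linearized by \emph{homogenizing} each extended formulation, i.e.\ replacing every inequality $a^\top y \le b$ by $a^\top y \le \lambda_i b$.

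Concretely, I would first fix for each $i$ a minimum extended formulation $Q_i = \{y \in \bR^{d_i} : A_i y \le b_i\}$ consisting of exactly $\xc(P_i)$ inequalities, together with a linear map $\pi_i : \bR^{d_i} \to \bR^n$ satisfying $\pi_i(Q_i) = P_i$. (Equalities in $Q_i$ are handled identically and, by convention, do not count towards $\xc$.) Then I would introduce variables $x \in \bR^n$, $y_i \in \bR^{d_i}$, and $\lambda_i \in \bR$ for each $i$, and define
\[
R := \Bigl\{(x, y, \lambda) : A_i y_i \le \lambda_i b_i \text{ and } \lambda_i \ge 0\ \forall i,\ \sum\nolimits_i \lambda_i = 1,\ x = \sum\nolimits_i \pi_i(y_i) \Bigr\}.
\]
The heart of the proof is then to verify that the projection of $R$ onto the $x$-coordinates equals $\conv(\cup_i P_i)$. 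Forward direction: when $\lambda_i > 0$, homogeneity gives $y_i/\lambda_i \in Q_i$ and hence $\pi_i(y_i)/\lambda_i \in P_i$; when $\lambda_i = 0$, the boundedness of $Q_i$ forces $y_i$ into its (trivial) recession cone, so $\pi_i(y_i) = 0$; thus $x$ is exhibited as a convex combination of points of $\cup_i P_i$. Reverse direction: given $x = \sum_i \lambda_i p_i$ with $p_i \in P_i$, I would lift each $p_i$ to some $\tilde y_i \in Q_i$ and set $y_i := \lambda_i \tilde y_i$; this satisfies all constraints of $R$ and projects to $x$.

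Finally, the bookkeeping step: the lifted systems $A_i y_i \le \lambda_i b_i$ contribute exactly $\sum_i \xc(P_i)$ inequalities, while the non-negativities $\lambda_i \ge 0$ and the equalities $\sum_i \lambda_i = 1$ and $x = \sum_i \pi_i(y_i)$ are ancillary and, under the usual extension-complexity convention (or, equivalently, by reading $\xc$ through Yannakakis's nonnegative-factorization definition, where Balas's construction directly yields a factorization of inner dimension $\sum_i \xc(P_i)$), do not contribute to the count. This gives the claimed bound. The one subtle point of the whole argument is precisely the $\lambda_i = 0$ case in the forward inclusion, where the boundedness of $Q_i$ is essential to rule out spurious projected points; since each $P_i$ is a polytope, a bounded $Q_i$ always exists, so the issue is benign.
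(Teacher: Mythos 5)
The paper does not prove this statement at all --- it is quoted as a black-box fact with a citation to Balas --- so there is no in-paper argument to compare against. Your proof is the standard disjunctive-programming construction, and its skeleton (homogenize each extended formulation by $\lambda_i$, glue with $\sum_i \lambda_i = 1$, verify both inclusions of the projection, with the recession-cone argument handling $\lambda_i = 0$) is correct.

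The one place where you substitute a convention for an argument is the bookkeeping of the $k$ inequalities $\lambda_i \ge 0$: these are not equalities, so under the paper's definition ($\xc(P)$ = fewest facets of a polytope linearly projecting onto $P$) they would naively push the count to $\sum_i \xc(P_i) + k$, and the Yannakakis-factorization detour you offer as an alternative does not obviously avoid the same $+k$ (expressing a valid inequality's slack over $P_i$ via Farkas generally needs an additive constant, i.e.\ an extra rank-one term per block). The honest fix is to show that $\lambda_i \ge 0$ is \emph{implied} by the homogenized system and can be dropped from the description: if $A_i y_i \le \lambda_i b_i$ with $\lambda_i < 0$, then for every $z \in Q_i$ one has $A_i\bigl(y_i/(-\lambda_i) + z\bigr) \le -b_i + b_i = 0$, so $y_i/(-\lambda_i) + z$ lies in the recession cone of the bounded polytope $Q_i$, which is $\{0\}$; hence $Q_i = \{-y_i/(-\lambda_i)\}$ is a single point. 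So the constraint $\lambda_i \ge 0$ is redundant unless $P_i$ is a point, and that degenerate case is exactly where the convention $\xc(P_i) \ge 1$ (forced by the paper's requirement $r > 0$) absorbs the one extra inequality. With that substitution for ``by convention,'' the proof is complete.
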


\begin{Remark}
First, our result also applies to perfect matching with an even (instead of odd) number of red edges (See Claim~\ref{clm:parity} and  Remark~\ref{rmk:even-matching}). Second, our result also applies to simple graphs because we can replace parallel edges with a gadget to get a simple graph.
The gadget is as follows: replace each parallel edge $(u,v)$ with a path of length three by adding two vertices, $x_r$, $y_r$ for the vertices subdividing the red edge and $x_b$, $y_b$ for subdividing the blue edge.
The color of the new edge $(u, x_r)$ is red and all others are blue. 
Then if neither color edge of $(u,v)$ is used in a perfect matching in the multigraph it means that vertices $u$ and $v$ are both matched by other edges so we can find a perfect matching in the simple graph by taking $(x_r,y_r)$ and $(x_b,y_b)$ of the simple graph, and this doesn't change the number of red edges.
If the blue edge of $(u,v)$ is taken in the multigraph then in the simple graph we take $(u, x_b)$, $(y_b, v)$, and $(x_r, y_r)$.
Otherwise take $(u, x_r)$, $(y_r, v)$, and $(x_b, y_b)$, and note that we take one red edge in the simple graph exactly when one red edge is taken in the multigraph.
It is not difficult to see that this gadget reduction preserves the extension complexity.
In the simple graph we can let $x_{(u,v)}$ be the indicator variable for $(u, x_r)$ (and $(y_r, v)$) and then $1 - x_{(u,v)}$ is the indicator variable for $(x_r, y_r)$ since $(u, x_r)$ and $(y_r, v)$ are always taken together, exactly when $(x_r, y_r)$ is not taken.
Then it is clear that the polytope for perfect matchings with an odd number of red edges in the simple graph can be projected down to the original multigraph and thus has no smaller extension complexity (in fact they are isomorphic so they have the same extension complexity).
% polytope associated with new is extension
\end{Remark}

\paragraph{Further related work.}
While we show that the extension complexity of the parity bipartite perfect matching problem is exponential, the weighted version of parity bipartite perfect matching admits a deterministic polynomial-time algorithm. 
One first computes a min-weight perfect matching, and should this matching have an even number of red edges, then compute a min-weight alternating cycle with an odd number of red edges and output the symmetric difference. This can be done in polynomial time which is implied by a more general result by Artmann, Weismantel, and Zenklusen~\cite{AWZ17} on so-called network matrices, which also capture bipartite matchings.
\cite{AWZ17} also gave a strongly polynomial-time algorithm for solving integer programs where the constraint matrix with rank $n$ is \textit{bimodular}, i.e. all determinants of $n \times n$ sub-matrices are bounded by 2 in absolute value.
Their first step is to reduce bimodular integer programming to a totally unimodular integer linear program with an additional parity constraint, which is a generalization of the parity matching problem in bipartite graphs.
While it is known that certain such problems have exponential extension complexity~\cite{Cevallos_2018} via a close connection to the perfect matching problem in general graphs, it is unlikely that there is a simple reduction to our problem. In particular,   
%Contrastingly,~\cite{Cevallos_2018} shows that bimodular integer programs can lead to polytopes with exponential extension complexity.
%This result is obtained through the inapproximability of the matching polytope result of Rothvoss~\cite{Rothvoss17}. 
%However, there is unlikely to be a reduction to our problem as 
the algorithm for totally unimodular matrices with a parity constraint is much more complex than the one for the parity bipartite perfect matching problem.

Other work related to the extension complexity of the matching polytope include~\cite{BraunPokutta} in which the authors generalize the result of~\cite{Rothvoss17} by exploring the approximability of the matching polytope by a linear program. A tight bound is obtained for a $(1+\epsilon)$-approximation when $\epsilon \in O(1/n)$. The work in~\cite{Sinha} extends the tight lower bounds to all $\epsilon$, $2/n \leq \epsilon \leq 1$.

\subsection{Proof Overview}

Our proof of the exponential extension complexity of the parity bipartite perfect matching polytope follows the recipe of Rothvoss's proof of the exponential extension complexity of the matching polytope~\cite{Rothvoss17}.
The end goal is to demonstrate an exponential lower bound on the non-negative rank of a slack matrix of inequalities valid for the polytope.
However, we have three fundamental differences when applying this recipe to our problem.
The first is that while the polytope of matchings in general graphs is well-understood (by the work of Edmonds~\cite{edm65}) and automatically provides a set of inequalities to be used in the slack matrix, there was no such set of inequalities known for the parity bipartite matching problem until this work.
In addition to the standard vertex-degree constraints of the bipartite perfect matching polytope, we have inequalities resulting from what we call \textit{labels} and these are used to exhibit an exponential lower bound on the non-negative rank of the slack matrix.
In Section~\ref{sec:relax-poly}, we describe this relaxation in detail and in {Section}~\ref{sec:append} we prove that our relaxation is empty if and only if the graph has no perfect matching with an odd number of red edges.
We leave it as an interesting open problem to resolve whether or not our relaxation is in fact integral.

%This is possible despite the existence of a polynomial-sized rectangle covering lower bound. 
%The key observation here is the same as in~\cite{Rothvoss17}, which is that entries with large slack are \textit{over-covered} by any polynomial-sized rectangle covering, that is, an entry of slack $\ell$ is covered by $\Theta(\ell^2)$-many rectangles.
%We use the so-called hyperplane separation lower bound, given in Lemma~\ref{lem:hyperplane}, to prove that large entries are over-covered.

%In Appendix~\ref{sec:append} we prove that our relaxation is empty if and only if the graph has no perfect matching with an odd number of red edges.

The second issue we now face is that we have a problem in a \textit{bipartite} graph.
The convex hull of integral matchings in bipartite graphs can be expressed by just linearly many inequalities and we lose the structure of odd-cuts that Rothvoss uses in his slack matrix. 
In other words, we have to define our own concept of partitions that exploits the characteristics of the parity problem.
Our partitions are defined in Section~\ref{subsec:partitions} and the remainder of Section~\ref{sec:main} proves the main technical ingredient, Lemma~\ref{lem:over_cover}, which states that large-enough rectangles cover too many entries of large slack to be able to decompose into a subexponential-sized non-negative factorization.
In order to follow the framework of Rothvoss, we need our definition of partitions and \textit{triples} of sets to allow us to generate entries in the slack matrix.
For a fixed triple and fixed rectangle, we can look at the probability that a uniformly random matching (or labeling) selected from all matchings (or labelings) that \textit{respect} the triple lies in the rectangle.
In particular, these triples need to be defined in such a way that the measure of small- and large-slack entries is an expectation over all partitions of the \textit{product} of the above described probabilities. 
Then instead of directly comparing the entries in the slack matrix we can compare the contribution of triples to small- and large-slack entries.
We use the same naming convention as Rothvoss and split our triples into three categories: good, small, and bad.
Roughly speaking, good triples are those where generated matchings and labelings are almost uniform, small triples have small contribution anyway, and bad triples are neither of the first two.
Their formal treatments can be found in Sections~\ref{subsec:classes},~\ref{subsec:good}, and~\ref{subsec:bad}, respectively.

Our third challenge is a technical one and is regarding the contribution of good triples.
The details can be found in Section~\ref{subsec:good}.
%So far, our description of Section~\ref{sec:main} follows the method of Rothvoss~\cite{Rothvoss17}. 
%Our crucial technical contribution happens in Section~\ref{subsec:good}, regarding the contribution of good triples.
Rothvoss uses a partition structure given in a pair of sets and is able to show that for a large enough non-zero rectangle, the second set in good pairs that generate slack-2 entries cannot overlap on more than one element.
Hence, large-enough rectangles cover too few small-slack entries to be decomposed into a small non-negative factorization of the slack matrix.
This argument does not quite work for us.
We require an extra condition on the structure of the triples to be able to use this approach, which we get in part as a consequence of the partitions we set up.
We also need to take permutations on the part of the partition that generates positive slack entries and use the probabilistic method to show that a certain family of permutations exists.

%\paragraph{Proof overview.}

%Our proof of the exponential extension complexity of the parity bipartite perfect matching polytope follows the recipe of Rothvoss's proof of the exponential extension complexity of the matching polytope~\cite{Rothvoss17}.
%This recipe requires analyzing three different cases in order to provide a lower bound on the non-negative rank of the slack matrix.
%In one of the cases, we cannot guarantee that we always find a slack-0 entry for a contradiction as Rothvoss does, so we need to employ the probabilistic method to show that it happens with an additional condition.

\section{Preliminaries} \label{sec:prel}
In our paper, a set can be either ordered or unordered. We will use bold letters to denote sequences (ordered sets) and non-bold letters to denote unordered sets. For example, $\mathbf{A}$ is a sequence while $B$ is an unordered set.

We specify each edge $e\in E$ by a triple $(u,v,c)$, where $u\in U$ and $v\in V$ are two vertices and $c$ is the color of the edge (either red or blue).
%If the color is irrelevant we simply write $(u,v)$.

The extension complexity of a polytope $P$, denoted by $\xc(P)$, is the smallest integer $r>0$ such that $P$ is a linear projection of some $r$-facet polytope $Q$.

Let $P$ be a polytope with vertices $x_1,\ldots,x_l$, i.e. $P=\conv(x_1,\ldots,x_l)$, and let $C$ be a set of inequalities $C=\{a_i^{\top}x\le b_i:1\le i\le m\}$ such that $P$ is {feasible w.r.t.} all the inequalities in $C$, i.e. $P\subseteq\{x:a_i^{\top}x\le b_i,\forall 1\le i\le m\}$. Define the slack matrix of $P$ w.r.t. $C$, denoted $S(P,C)$, as an $l\times m$ non-negative real matrix with $S(P,C)_{i,j}:=b_j-a_j^{\top}x_i$.

Given a non-negative rank matrix $S\in \bR^{n\times m}_{\ge 0}$, we use $\nrank(S)$ to denote the non-negative rank of $S$, which is defined as the smallest positive integer $r$ such that $S$ can be written as the sum of $r$ non-negative rank-1 matrices, or equivalently, $S=\sum_{i\in [r]} u_iv^{\top}_i$, where $u_i\in \bR^{n}_{\ge 0},v_{i}\in \bR^m_{\ge 0}$ are non-negative vectors.

\begin{theorem}[\cite{Yannak88}]
Given a polytope $P$ and a set of inequalities $C=\{a_i^{\top}x\le b_i:1\le i\le m\}$, if the inequalities in $C$ exactly characterize $P$, i.e. $P=\{x:a_i^{\top}x\le b_i,\forall 1\le i\le m\}$, then $\xc(P)=\nrank(S(P,C))$.
\end{theorem}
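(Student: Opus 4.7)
The plan is to prove both inequalities $\xc(P) \le \nrank(S(P,C))$ and $\nrank(S(P,C)) \le \xc(P)$ constructively: from a small non-negative factorization I build a small extended formulation, and conversely.

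\textbf{Direction 1: $\xc(P) \le \nrank(S(P,C))$.} Starting from a rank-$r$ non-negative factorization $S(P,C) = \sum_{k=1}^{r} u_k v_k^{\top}$ with $u_k \in \bR_{\ge 0}^{l}$ and $v_k \in \bR_{\ge 0}^{m}$, I would define the candidate extension
\[
Q := \bigl\{(x,y) \in \bR^n \times \bR^r : y \ge 0 \text{ and } a_j^{\top} x + \textstyle\sum_{k=1}^{r} (v_k)_j\, y_k = b_j \text{ for all } j \in [m]\bigr\},
\]
and let $\pi(x,y)=x$. Non-negativity of $v_k$ and $y$ forces $a_j^{\top} x \le b_j$, so $\pi(Q) \subseteq P$. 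Conversely, for each vertex $x_i$ I would write down the explicit lift $y^{(i)} \in \bR^r$ given by $(y^{(i)})_k := (u_k)_i \ge 0$; the identity $\sum_{k}(u_k)_i (v_k)_j = S(P,C)_{i,j} = b_j - a_j^{\top} x_i$ shows $(x_i,y^{(i)}) \in Q$, and convex combinations recover all of $P$. Since $Q$ uses exactly the $r$ non-negativity inequalities $y \ge 0$ (the equations do not count toward facets), this gives $\xc(P) \le r$.

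\textbf{Direction 2: $\nrank(S(P,C)) \le \xc(P)$.} Fix an optimal extension $Q = \{y \in \bR^N : Ey \le d,\ Fy = g\}$ with $r := \xc(P)$ inequalities, together with a linear projection $\pi$ satisfying $\pi(Q) = P$. For each vertex $x_i$, fix a preimage $q_i \in Q$ and record the slack $s_i := d - E q_i \in \bR_{\ge 0}^{r}$. For each inequality $a_j^{\top} x \le b_j$ in $C$, the pull-back $c_j^{\top} y := a_j^{\top} \pi(y)$ is a valid inequality on $Q$, so strong LP duality (equivalently, affine Farkas) applied to $\max\{c_j^{\top} y : y \in Q\}$ yields $\mu_j \in \bR_{\ge 0}^{r}$ and an unrestricted $\lambda_j$ with $E^{\top}\mu_j + F^{\top}\lambda_j = c_j$ and $d^{\top}\mu_j + g^{\top}\lambda_j = b_j$. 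For every $y \in Q$ this gives
\[
b_j - c_j^{\top} y \;=\; \mu_j^{\top}(d - Ey) + \lambda_j^{\top}(g - Fy) \;=\; \mu_j^{\top}(d - Ey),
\]
so $S(P,C)_{i,j} = \mu_j^{\top} s_i$. Assembling the $s_i^{\top}$ as rows of $U \in \bR_{\ge 0}^{l \times r}$ and the $\mu_j$ as columns of $V \in \bR_{\ge 0}^{r \times m}$ exhibits $S(P,C) = UV$, proving $\nrank(S(P,C)) \le r$.

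\textbf{Main obstacle.} The crux is achieving equality (not merely $\le$) in the dual step $d^{\top}\mu_j + g^{\top}\lambda_j = b_j$. Strong duality delivers this exactly when $b_j = \max_{x \in P} a_j^{\top} x$, i.e., when the inequality $a_j^{\top} x \le b_j$ is tight on $P$. For facet-defining inequalities this is automatic, but the hypothesis only says $C$ characterizes $P$, so $C$ may contain redundant inequalities whose dual optimum is strictly less than $b_j$, leaving an apparent rank overhead. I would resolve this by first reducing to the facet-defining subset: every redundant inequality of $C$ admits a representation $b - a^{\top} x = \sum_k \lambda_k (b_k - a_k^{\top} x) + \gamma$ with $\lambda_k, \gamma \ge 0$ and the sum taken over facets of $P$. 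For bounded full-dimensional $P$ the facet normals positively span $\bR^n$, so there exist $\nu_k \ge 0$ with $\sum_k \nu_k a_k = 0$ and $\sum_k \nu_k b_k > 0$; this places a scalar multiple of the all-ones vector in the non-negative span of facet columns of the slack matrix, which in turn lets the $\gamma$-constant be absorbed and shows that adding the redundant column does not increase the non-negative rank (the non-full-dimensional case is handled inside the affine hull of $P$). Combining this reduction with the two constructions above yields $\xc(P) = \nrank(S(P,C))$.
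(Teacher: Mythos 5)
The paper does not prove this statement at all --- it is Yannakakis's classical factorization theorem, imported verbatim with a citation to \cite{Yannak88} --- so there is no in-paper argument to compare against. Your proposal is the standard proof of that theorem and is essentially correct in both directions: the lift $Q=\{(x,y): y\ge 0,\ a_j^{\top}x+\sum_k (v_k)_j y_k=b_j\}$ for one direction, and Farkas/duality applied to the pulled-back inequalities for the other. You also correctly isolate the one genuinely delicate point, namely that strong duality only gives the exact identity $d^{\top}\mu_j+g^{\top}\lambda_j=b_j$ for inequalities that are tight on $P$, and your fix (redundant columns are nonnegative combinations of facet columns once one observes that the constant column $\mathbf{1}$ lies in the cone of facet slacks because the facet normals of a bounded polytope positively span) is sound; the only loose ends are routine degeneracies (ensuring $v_k\neq 0$ so that $Q$ is bounded, and the zero-dimensional case), which are standardly glossed over.
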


Note that the above theorem holds for any linear inequality characterization of $P$. This implies that the non-negative rank of the slack matrix associated with a loose linear inequality characterization is a lower bound for the extension complexity, since we can make the characterization tight by adding inequalities and the non-negative rank of a matrix is no smaller than that of its submatrix.

\begin{corollary} \label{coro:nrlbxc}
Given a polytope $P$ and a set of inequalities $C=\{a_i^{\top}x\le b_i:1\le i\le m\}$, if $P$ is {feasible w.r.t.} all the inequalities in $C$, then $\nrank(S(P,C))\le \xc(P)$.
\end{corollary}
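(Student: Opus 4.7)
The plan is to reduce the statement directly to Yannakakis's theorem (the theorem stated immediately before) by enlarging $C$ into a tight inequality characterization of $P$, and then observing that $S(P,C)$ sits as a column submatrix of the enlarged slack matrix.

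Concretely, I would proceed as follows. First, let $C^\star$ be a finite set of linear inequalities that tightly describes $P$, i.e.\ $P = \{x : a^\top x \le b \text{ for all } (a,b) \in C^\star\}$. Such a set exists since $P$ is a polytope (take, for example, its facet-defining inequalities). By appending the inequalities of $C$ to $C^\star$, we may assume without loss of generality that $C \subseteq C^\star$: adding already-valid inequalities does not change the feasible set and only adds columns to the slack matrix. By Yannakakis's theorem applied to the tight characterization $C^\star$, we obtain $\xc(P) = \nrank(S(P, C^\star))$.

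Next, the key structural observation is that $S(P,C)$ is exactly the column submatrix of $S(P,C^\star)$ indexed by the inequalities in $C$. Both matrices have a row for each vertex $x_i$ of $P$, and for every inequality $a^\top x \le b$ in $C$ the corresponding column entries $b - a^\top x_i$ agree in the two matrices. Thus I would invoke the elementary monotonicity of non-negative rank under taking submatrices: any non-negative rank-$r$ factorization $S(P,C^\star) = \sum_{j=1}^{r} u_j v_j^\top$ restricts, by deleting the coordinates of each $v_j$ corresponding to columns outside $C$, to a non-negative rank-$r$ factorization of $S(P,C)$. Hence $\nrank(S(P,C)) \le \nrank(S(P,C^\star))$.

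Combining the two inequalities yields $\nrank(S(P,C)) \le \nrank(S(P,C^\star)) = \xc(P)$, which is the claim. There is no real obstacle here; the only point that needs a brief justification is the submatrix monotonicity of $\nrank$, and the only subtlety is to be explicit that $P$ being \emph{feasible w.r.t.}\ $C$ (so $P \subseteq \{x : a^\top x \le b,\ \forall (a,b) \in C\}$) is precisely what guarantees the slack entries $b_j - a_j^\top x_i$ are non-negative, so that $S(P,C)$ is a genuine non-negative matrix to which the notion of non-negative rank applies.
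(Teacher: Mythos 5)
Your proposal is correct and matches the paper's own argument: the paper also enlarges $C$ to a tight characterization, applies Yannakakis's theorem, and concludes via the fact that the non-negative rank of a matrix is at least that of any of its submatrices. You simply spell out the submatrix-monotonicity step in slightly more detail.
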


Exactly characterizing the non-negative rank of a matrix is usually difficult. In~\cite{FMP+12}, the authors use the covering number to give a lower bound for the non-negative rank. Unfortunately, in our setting, the covering number is too loose.
We will use the following stronger tool from~\cite{Rothvoss17} based on hyperplane separation.

\begin{lemma}[\cite{Rothvoss17}]\label{lem:hyperplane}
Let $S\in \bR_{\ge 0}^{l\times m}$ be a non-negative real matrix. Then for any real matrix $W\in \bR^{l\times m}$ with at least one positive entry,
\begin{equation*}
\nrank(S)\ge \frac{\Ip{W}{S}}{\|S\|_{\infty}\cdot \max\{\Ip{W}{uv^{\top}}:u\in \{0,1\}^l,v\in \{0,1\}^m\}}.
\end{equation*}
Here, $\Ip{W}{S} := \sum_{i=1}^l \sum_{j=1}^m S_{ij} \cdot W_{ij}$ is the Frobenius inner product.
\end{lemma}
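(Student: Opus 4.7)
The plan is to lower-bound $\nrank(S)$ by taking an optimal non-negative factorization $S=\sum_{i=1}^{r}u_iv_i^{\top}$ with $r=\nrank(S)$ and $u_i,v_i\ge 0$ entrywise, and then to upper-bound $\Ip{W}{S}$ term by term. The difficulty is that each $u_i,v_i$ is an arbitrary non-negative vector, whereas the maximum in the denominator only ranges over $0/1$ vectors; the bridge between the two is a layer-cake decomposition that rewrites each non-negative rank-$1$ matrix as a non-negative combination of $0/1$ rank-$1$ matrices (indicators of super-level sets of $u_i$ and $v_i$).

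First I would rescale each pair $(u_i,v_i)$ so that $\|u_i\|_\infty=\|v_i\|_\infty$; this is legitimate because $u_iv_i^{\top}$ is invariant under $(u_i,v_i)\mapsto(\alpha u_i,\alpha^{-1}v_i)$ for $\alpha>0$. Under this normalization $\|u_i\|_\infty\cdot\|v_i\|_\infty=\|u_iv_i^{\top}\|_\infty$. Because the factorization is entrywise non-negative, $u_iv_i^{\top}\le S$ entrywise, hence $\|u_iv_i^{\top}\|_\infty\le\|S\|_\infty$ for every $i$; this is precisely where the $\|S\|_\infty$ factor in the denominator will enter.

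Next, using the identity $ab=\int_{0}^{\infty}\int_{0}^{\infty}\mathbf{1}[s\le a]\,\mathbf{1}[t\le b]\,ds\,dt$ coordinatewise, I would write $u_iv_i^{\top}=\int_{0}^{\|u_i\|_\infty}\int_{0}^{\|v_i\|_\infty}\mathbf{1}[u_i\ge s]\,\mathbf{1}[v_i\ge t]^{\top}\,ds\,dt$, where $\mathbf{1}[u_i\ge s]\in\{0,1\}^{l}$ is the indicator of the super-level set $\{j:(u_i)_j\ge s\}$. Taking Frobenius inner product with $W$ and pulling it inside the integral, every integrand has the form $\Ip{W}{uv^{\top}}$ with $u\in\{0,1\}^{l}$ and $v\in\{0,1\}^{m}$, so it is bounded above by the quantity $M:=\max\{\Ip{W}{uv^{\top}}:u\in\{0,1\}^{l},v\in\{0,1\}^{m}\}$, which is strictly positive by the assumption that $W$ has a positive entry (take $u,v$ to be the indicators of that entry). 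Integrating over a region of measure $\|u_i\|_\infty\|v_i\|_\infty\le\|S\|_\infty$ yields $\Ip{W}{u_iv_i^{\top}}\le M\cdot\|S\|_\infty$, and summing over $i$ gives $\Ip{W}{S}\le r\cdot M\cdot\|S\|_\infty$, which rearranges to the claimed inequality.

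The only non-trivial ingredient is the layer-cake decomposition, which is what makes the reduction to $0/1$ rectangles possible; everything else is bookkeeping. The two subtleties worth flagging are (i) the normalization step, which turns a product of two $L^{\infty}$ norms into the entrywise $L^{\infty}$ norm of the rank-$1$ term, and (ii) the use of non-negativity of the factorization to dominate each term by $S$ entrywise, which is the real reason the $\|S\|_\infty$ factor is the right normalizer. Positivity of some entry of $W$ is what guarantees $M>0$ so that the final division is well-defined.
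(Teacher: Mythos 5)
Your proof is correct and is essentially Rothvoss's own argument: use non-negativity of the factorization to get $\|u_iv_i^\top\|_\infty\le\|S\|_\infty$, then apply the layer-cake (super-level-set) decomposition to write each rank-one term as a non-negative combination of $0/1$ rectangles, bound each piece by $M$, and sum. The only cosmetic remark is that the preliminary rescaling to $\|u_i\|_\infty=\|v_i\|_\infty$ is unnecessary, since $\|u_iv_i^\top\|_\infty=\|u_i\|_\infty\|v_i\|_\infty$ already holds for all non-negative $u_i,v_i$.
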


\section{Relaxation of the Polytope}\label{sec:relax-poly}

%Instead of giving a tight characterization, 
In this section, we exhibit a set of inequalities with a nice structure that is valid for the parity bipartite perfect matching polytope and suffices to show this polytope has exponential extension complexity.
The intuition behind our relaxation is the following basic idea: if we have a collection of disjoint 4-cycles where each cycle has alternating red and blue edges then it is clear that there is no perfect matching with an odd number of red edges.
Similarly, take the bipartite graph in Figure~\ref{fig:intuit} where we partition $U$ into $U_0 \sqcup U_1$ and $V$ into $V_0 \sqcup V_1$ where for simplicity suppose for now that $|U_0| = |V_1|$.
Any perfect matching must take the same number of edges from $(U_0, V_0)$ as from $(U_1, V_1)$ since $|U_0| = |V_1|$. 
Hence, if all edges in $(U_0, V_0)$ and $(U_1, V_1)$ are red and all others are blue, this graph does not have a perfect matching with an odd number of red edges.
We now formally give the relaxation in full generality which is based on this construction.

As aforementioned, our inequalities come from the vertex labelings of $G$. Define
\begin{equation*}
\Lall(G):=\{L:U\cup V\rightarrow \{0,1\} : |L^{-1}(1)|\equiv n\pmod 2\}
\end{equation*}
as the set of 0-1 labelings for which the number of ones has the same parity as $n$. We associate each labeling $L\in \Lall(G)$ with a set of edges
\begin{equation*}
E_{L}:=\{e=(u,v,c):L(u)=L(v),c= \mbox{blue}\}\cup \{e=(u,v,c):L(u)\ne L(v),c= \mbox{red}\}.
\end{equation*}
We claim that for any perfect matching $M$ with an odd number of red edges, there must be some edge in $M$ that belongs to $E_{L}$ for every labeling $L$.
We say a colored edge $e=(u,v,c)$ \textit{violates} a labeling $L$ if $e \in E_L$.
Otherwise, we say that it is \textit{correct} or \textit{consistent} with respect to a labelling.

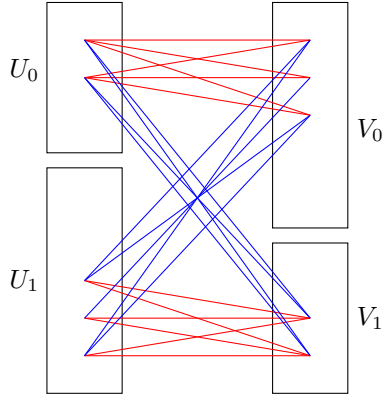
\begin{figure}
    \centering
    \begin{tikzpicture}
    \draw[draw=black] (0, 0) rectangle ++(1,3);
    \draw[draw=black] (0, 3.2) rectangle ++(1,2);
    \draw[draw=black] (3, 0) rectangle ++(1,2);
    \draw[draw=black] (3, 2.2) rectangle ++(1,3);
    
    \draw[color=red] (0.5, 0.5) -- (3.5, 1);
    \draw[color=red] (0.5, 0.5) -- (3.5, 0.5);
    \draw[color=red] (0.5, 1) -- (3.5, 1);
    \draw[color=red] (0.5, 1) -- (3.5, 0.5);
    \draw[color=red] (0.5, 1.5) -- (3.5, 0.5);
    \draw[color=red] (0.5, 1.5) -- (3.5, 1);
    
    \draw[color=red] (0.5, 4.2) -- (3.5, 3.7);
    \draw[color=red] (0.5, 4.2) -- (3.5, 4.2);
    \draw[color=red] (0.5, 4.2) -- (3.5, 4.7);
    \draw[color=red] (0.5, 4.7) -- (3.5, 3.7);
    \draw[color=red] (0.5, 4.7) -- (3.5, 4.2);
    \draw[color=red] (0.5, 4.7) -- (3.5, 4.7);
    
    \draw[color=blue] (0.5, 0.5) -- (3.5, 3.7);
    \draw[color=blue] (0.5, 1) -- (3.5, 4.2);
    \draw[color=blue] (0.5, 1.5) -- (3.5, 4.7);
    \draw[color=blue] (0.5, 4.2) -- (3.5, 0.5);
    \draw[color=blue] (0.5, 4.7) -- (3.5, 1);
    
    \draw[color=blue] (0.5, 0.5) -- (3.5, 4.7);
    \draw[color=blue] (0.5, 1.5) -- (3.5, 3.7);
    \draw[color=blue] (0.5, 4.7) -- (3.5, 0.5);
    \draw[color=blue] (0.5, 4.2) -- (3.5, 1);
    
    \node at (-0.3, 4.3) {$U_0$};
    \node at (4.3, 3.5) {$V_0$};
    \node at (-0.3, 1.5) {$U_1$};
    \node at (4.3, 1) {$V_1$};

    \end{tikzpicture}
    \caption{This graph does not have a perfect matching with an odd number of red edges.}
    \label{fig:intuit}
\end{figure}

\begin{claim}\label{clm:parity}
Let $M\in \Mall(G)$ and $L\in \Lall(G)$. Then $|M\cap E_L|\ge 1$.
\end{claim}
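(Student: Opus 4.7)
The plan is a short parity argument by contradiction. Suppose for some $M \in \Mall(G)$ and $L \in \Lall(G)$ we had $M \cap E_L = \emptyset$. By the definition of $E_L$, this means every edge $(u,v,c) \in M$ is \emph{consistent} with $L$ in the sense that blue edges satisfy $L(u) \neq L(v)$ and red edges satisfy $L(u) = L(v)$. The goal is to derive a contradiction from the two parity hypotheses: that $M$ has an odd number of red edges, and that $|L^{-1}(1)| \equiv n \pmod 2$.

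The key step is to count $|L^{-1}(1)|$ by summing contributions from the edges of $M$. Since $M$ is a perfect matching on the $2n$ vertices $U \cup V$, every vertex belongs to exactly one edge of $M$, so
\begin{equation*}
|L^{-1}(1)| \;=\; \sum_{(u,v,c) \in M} \bigl( L(u) + L(v) \bigr).
\end{equation*}
Under the assumption that $M \cap E_L = \emptyset$, each red edge of $M$ has $L(u) = L(v)$ and therefore contributes either $0$ or $2$ to this sum, i.e.\ an even amount; each blue edge of $M$ has $L(u) \neq L(v)$ and therefore contributes exactly $1$, an odd amount. Hence modulo $2$, only blue edges matter, and if $r$ denotes the number of red edges in $M$, then
\begin{equation*}
|L^{-1}(1)| \;\equiv\; n - r \pmod{2}.
\end{equation*}

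Since $r$ is odd by $M \in \Mall(G)$, this gives $|L^{-1}(1)| \equiv n - 1 \pmod 2$, which contradicts $L \in \Lall(G)$, i.e.\ $|L^{-1}(1)| \equiv n \pmod 2$. Therefore $|M \cap E_L| \geq 1$, as claimed. There is no real obstacle here; the only thing to be careful about is to verify that the two contribution cases (red gives an even number, blue gives an odd number) really cover all of $M$, which is immediate from the assumption that no edge of $M$ lies in $E_L$.
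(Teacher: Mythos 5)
Your proof is correct and, after unwinding the algebra, derives exactly the same congruence as the paper's proof: the number of red edges $r \equiv n - |L^{-1}(1)| \pmod 2$, which is incompatible with $r$ odd and $|L^{-1}(1)| \equiv n \pmod 2$. The paper organizes the bookkeeping via the partition $U_0,U_1,V_0,V_1$ and counts red edges directly, whereas you sum $L(u)+L(v)$ over matching edges to count $|L^{-1}(1)|$; this is a cleaner route to the same parity contradiction.
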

\begin{proof}
For a contradiction, suppose $M\subseteq E\setminus E_L$. Let $U_t:=\{u\in U:L(u)=t\}$ and $V_t:=\{v\in V:L(v)=t\}$. 
See Figure~\ref{fig:intuit} for a diagram.
Let $x:=|M\cap E(U_1,V_1)|$ be the number of edges between $U_1$ and $V_1$ in $M$. Since $M$ is perfect, $|M\cap E(U_0,V_1)|=|M\cap E(U,V_1)|-|M\cap E(U_1,V_1)|=|V_1|-x$. By the same argument, we have $|M\cap E(U_0,V_0)|=(n-|U_1|)-(|V_1|-x)$.

Since $M\subseteq E\setminus E_L$, $\{e\in M:e\text{ is red}\}=(M\cap E(U_1,V_1)) \cup (M\cap E(U_0,V_0))$. Thus
\begin{equation}\label{eq:parity}
   |\{e\in M:e\text{ is red}\}|=n-|U_1|-|V_1|+2x
\end{equation}
is even since $n$ and $|U_1|+|V_1|=|L^{-1}(1)|$ have the same parity, leading to a contradiction.
\end{proof}

By Claim~\ref{clm:parity}, the following relaxation is valid for the parity bipartite perfect matching polytope.
\begin{align*}
    \sum_{e\in \delta(u)} x_{e} &= 1 \quad \mbox{for all } u \in U\cup V \\
    \sum_{e \in E_L} x_{e} &\geq 1 \quad \mbox{for all } L \in \Lall(G) \tag{$C_G$}\\
    x_{e} &\geq 0 \quad \mbox{for all $e \in E$}.
\end{align*}

\begin{Remark}\label{rmk:even-matching}
Note that the relaxation can be formulated for an even number of red edges by letting $\Lall(G)$ to be the set of labels such that $|L^{-1}(1)|$ and $n$ have different parities for all $L \in \Lall(G)$ .
\end{Remark}

Let $C_{G}$ be the set of constraints above.
Recall that $P_G$ is the parity bipartite perfect matching polytope associated with $G$. As Theorem~\ref{clm:parity} shows, $P_{G}$ is {feasible w.r.t.} all the inequalities in $C_{G}$.
By~\Cref{coro:nrlbxc}, to prove Theorem~\ref{thm:xc_odd}, it suffices to show the following theorem.
\begin{theorem}\label{thm:nr_relax}
For all $n>0$, $\nrank(S(P_{G},C_{G}))=2^{\Omega(n)}$.
\end{theorem}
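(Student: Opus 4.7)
The strategy is to lower bound $\nrank(S(P_G, C_G))$ via the hyperplane separation bound (Lemma~\ref{lem:hyperplane}). Since the degree constraints in $C_G$ are tight on every perfect matching and thus contribute zero columns, I restrict attention to the submatrix indexed by labellings $L \in \Lall(G)$, whose entries $S_{M,L} = |M \cap E_L| - 1$ lie in $\{0, 1, \ldots, n-1\}$, so $\|S\|_\infty \le n$. It therefore suffices to exhibit a weight matrix $W$ of the form $W_{M,L} = \alpha\,\bone[|M \cap E_L| = \ell_s] - \beta\,\bone[|M \cap E_L| \ge \ell_b]$, for suitable constants $\ell_s < \ell_b$ and $\alpha, \beta > 0$, with
\[
\Ip{W}{S} \;\ge\; 2^{\Omega(n)} \cdot n \cdot \max\bigl\{\Ip{W}{uv^\top} : u \in \bool^{l},\, v \in \bool^{m}\bigr\}.
\]
Intuitively, $W$ measures a weighted excess of \emph{small-slack} entries over \emph{large-slack} entries, both globally and inside any fixed rectangle.

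The combinatorial device that makes the estimate tractable is the family of \emph{triples} introduced in Section~\ref{subsec:partitions}. A triple is, morally, a partition of $U \cup V$ together with extra data that simultaneously defines a random matching $M \in \Mall(G)$ \emph{respecting} the triple and a random labelling $L \in \Lall(G)$ respecting the triple. The defining design property is a factorization: for any rectangle $R = A \times B$ and any triple $t$,
\[
\Pr_t[\,M \in A,\; L \in B\,] \;=\; \Pr_t[M \in A] \cdot \Pr_t[L \in B],
\]
and the small- and large-slack weights of $W$ can both be written as expectations over triples of such products. This reduces the global estimate on $\Ip{W}{R}$ to per-triple bounds on $\Pr_t[M \in A]$ and $\Pr_t[L \in B]$.

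Triples will be classified as \emph{small}, \emph{bad}, or \emph{good} (Sections~\ref{subsec:classes},~\ref{subsec:bad},~\ref{subsec:good}). Small triples have negligible total measure by direct counting. Bad triples are those for which a random matching respecting the triple very likely hits $E_L$ many times, so with a suitable choice of $\beta$ their positive contribution to $\Ip{W}{R}$ is dominated by the $-\beta$ term and cancels. Good triples carry the main term: their induced distributions are close to uniform on the consistent supports, so the factorization lets us bound $\Ip{W_{\mathrm{good}}}{R}$ by a product of densities $|A|/|\Mall(G)|$ and $|B|/|\Lall(G)|$, both exponentially small unless $R$ is huge.

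The technical heart---and the step I expect to be the main obstacle---is the good-triple estimate (Lemma~\ref{lem:over_cover}). Rothvoss's matching argument works because the ``second set'' of a good slack-$2$ pair can overlap a fixed large rectangle in at most one element, which forces any non-trivial rectangle to cover too few small-slack entries. In our setting a triple carries an additional piece of structure coming from the labelling rather than a single odd cut, so the analogous overlap bound is not immediate; the plan is to impose the extra condition on triples suggested in the overview and then to average over a uniformly random permutation acting on the part of the partition that generates positive-slack edges. The required quasi-uniform overlap estimate will be established by the probabilistic method: a random polynomial-sized family of permutations succeeds after a union bound over rectangles and slack pairs. Assembling the small/bad/good contributions then yields the desired $2^{\Omega(n)}$ separation ratio and, via Lemma~\ref{lem:hyperplane}, Theorem~\ref{thm:nr_relax}.
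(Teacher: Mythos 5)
Your overall architecture matches the paper's — hyperplane separation, triples carrying a product structure over matchings and labelings, a good/small/bad classification, and a probabilistic construction of a permutation family to control the good contribution. But two things keep the argument from closing as written.

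First, your weight matrix $W_{M,L}=\alpha\,\bone[|M\cap E_L|=\ell_s]-\beta\,\bone[|M\cap E_L|\ge \ell_b]$ places no weight on slack-$0$ entries ($|M\cap E_L|=1$). The paper assigns $-\infty$ there, and this is load-bearing rather than cosmetic: the over-covering statement (\Cref{lem:over_cover}) holds only for rectangles disjoint from slack-$0$ entries, because the contradiction in \Cref{lem:not_both_good} is exactly that otherwise one can exhibit a slack-$0$ entry inside the rectangle. With a bounded two-level $W$, the $\max\{\Ip{W}{uv^\top}\}$ in the denominator of the hyperplane bound ranges over all rectangles, including huge ones (e.g.\ the all-ones rectangle) that do contain slack-$0$ entries; for those no overlap argument applies and $\Ip{W}{\cR}$ is not small. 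You must either set the slack-$0$ weight to $-\infty$, or equivalently argue up front that it suffices to bound the max over rectangles avoiding slack-$0$ entries.

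Second, you assign the wrong mechanisms to the bad and good classes. You propose that bad triples are ``cancelled by the $-\beta$ term'' and that good triples are bounded by a product of densities $|A|/|\Mall|\cdot|B|/|\Lall|$, which is only small when $R$ is small — but the estimate has to hold for rectangles of every size subject to the slack-$0$ restriction. In the paper's proof the $-\beta$ term (the $\mu_{4k+3}$ weight) cancels the \emph{good} contribution: good triples are shown, via \Cref{lem:not_both_good} and \Cref{lem:goodpermutations}, to contribute at most $O(1/k^2)\cdot\mu_{4k+3}(\cR)$, which is what makes the $W$-inner-product small. The \emph{bad} triples are handled by a self-bounding estimate using the entropy/biased-index lemma (\Cref{lem:biased}, through \Cref{lem:Lbad} and \Cref{lem:Mbad}), yielding $\le \epsilon\cdot\mu_3(\cR)+2^{-\eta m}$, which is absorbed because $\epsilon<1$. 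Swapping these roles as you do leaves neither class actually controlled: the bad triples have no over-covering structure to exploit, and the good-triple bound by density does not chain with the hyperplane lemma.
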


Now we can see why the covering number is loose.
In fact, it is very similar to the observation made in~\cite{Yannak88} that the covering number is insufficient to show that the perfect matching polytope has exponential extention complexity.
Consider only the labeling constraints. 
For any pair $e_1$, $e_2$ of non-adjacent edges, let $\mathcal{L}_{e_1, e_2}$ be the set of labelings that are violated by $e_1$ and $e_2$ and let $\mathcal{M}_{e_1, e_2}$ be the set of perfect matchings that contain $e_1$ and $e_2$. 
By Claim~\ref{clm:parity} we have that the slack entry $(M,L)$ is non-zero for any $M \in \mathcal{M}_{e_1, e_2}$ and $L \in \mathcal{L}_{e_1, e_2}$ so the $O(n^4)$ rectangles of the form $\mathcal{L}_{e_1, e_2} \times \mathcal{M}_{e_1, e_2}$ are a valid rectangle covering.
However, an entry with slack $\ell$ lies in ${\ell+1 \choose 2}$ rectangles since there are $\ell + 1$ edges that are violating.
So we do not get a valid non-negative factorization of the slack matrix as $O(n^4)$ many $0/1$ rank-1 matrices, for the same reason as in~\cite{Rothvoss17} that large entries are over-covered.
Nonetheless, we will use the hyperplane separation bound to give an exponential lower bound on the extension complexity of the parity bipartite matching polytope.

\section{Proof of Theorem~\ref{thm:nr_relax}}\label{sec:main}

In this section we prove Theorem~\ref{thm:nr_relax}. From now on, we will focus on the case when $n=4k(2m+1)+3$ is odd, where $k\ge 400$ is some large constant and $n$ grows with $m$. In particular, our proof works only when $m$ is sufficiently large, say greater than $20\cdot(4k)!$.

As mentioned in the previous section, our proof for Theorem~\ref{thm:nr_relax} relies on~\Cref{lem:hyperplane}. 
In~\Cref{subsec:weight_func}, we construct the weight function $W$ which gives an exponential lower bound in~\Cref{lem:hyperplane}. 
The weight function consists of two key components: $\mu_3$ and $\mu_{4k+3}$ which are distributions on slack-$2$ and slack-$(4k+2)$ entries respectively. 
Then we convert our problem into finding $\mu_3,\mu_{4k+3}$ and proving that any large rectangle w.r.t. $\mu_3$ disjoint from slack-$0$ entries must over-cover $\mu_{4k+3}$.
Before describing $\mu_3$ and $\mu_{4k+3}$, we introduce the notion of partitions in Section~\ref{subsec:partitions}, which will be crucial in generating the distributions.
The idea is that for any fixed partition, we can compare the measures $\mu_3$ and $\mu_{4k+3}$ on the entries of a rectangle that are induced by the partition.
In Section~\ref{subsec:generate}, we use partitions to express $\mu_3$ and $\mu_{4k+3}$ as sums of product distributions.
In particular, each product distribution for $\mu_3$ is represented by a triple $(T,H,\boD)$. 
To give an upper bound on $\mu_3$ in terms of $\mu_{4k+3}$, in~\Cref{subsec:classes}, we classify the triples into three categories: good, small, and bad, and then give a one line bound for the contribution of small triples. 
We will then bound the contribution of good triples and bad triples in~\Cref{subsec:good} and~\ref{subsec:bad}, respectively.

\subsection{The Weight Function for~\Cref{lem:hyperplane}}\label{subsec:weight_func}
We will use \Cref{lem:hyperplane} to show Theorem~\ref{thm:nr_relax}. First, let us construct the weight function $W$ in the hyperplane separation lemma.

For any $M\in \Mall(G)$ and $L\in \Lall(G)$, we call $(M,L)$ an $l$-violation pair if $|M\cap E_L|=l$, i.e., it is a slack-$(l-1)$ entry. Our ``hard'' weight function $W$ is defined as follows:
\begin{equation*}
    W_{M,L}:=\left\{\begin{matrix}
    -\infty & |M\cap E_L|=1 \\
    \mu_3(M,L) & |M\cap E_L|=3\\
    -\frac{1}{4k+2}\cdot\mu_{4k+3}(M,L) & |M\cap E_L|=4k+3\\
    0 & \text{otherwise},
    \end{matrix}\right.
\end{equation*}
where $\mu_3$ and $\mu_{4k+3}$ are some distributions on $3$-violation and $(4k+3)$-violation pairs respectively, which will be specified later.

It is easy to verify that
\begin{equation}\label{equ:wt_func}
  \Ip{W}{S(P_G, C_G)}=0\cdot{-\infty}+(3-1)-\frac{(4k+3)-1}{4k+2}=1. 
\end{equation}

We say that a two-party function $\cR:\Mall(G)\times \Lall(G)\rightarrow \bool$ is a rectangle if $\cR=uv^{\top}$ for some boolean vectors $u\in \bool^{\Mall(G)}$ and $v\in \bool^{\Lall(G)}$. Given a distribution $\mu$ over $\Mall(G)\times \Lall(G)$, we define $\mu(\cR):=\Ip{\mu}{\cR}$ as the probability that a pair $(M,L)$ sampled from $\mu$ belongs to $\cR$. We show that the inner product of $W$ and any rectangle is tiny.
\begin{lemma}\label{lem:rec}
Let $\cR:\Mall(G)\times \Lall(G)\rightarrow \bool$ be a rectangle. Then $\Ip{W}{\cR}\le 2^{-\gamma m+1}$ for some constant $\gamma:=\gamma(k)>0$.
\end{lemma}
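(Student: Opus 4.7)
First, the $-\infty$ entries of $W$ at every $1$-violation pair force $\cR$ to be entirely disjoint from the slack-$0$ entries: otherwise $\langle W, \cR\rangle = -\infty$ and there is nothing to prove. Once this is assumed, $W$ is supported only on the slack-$2$ and slack-$(4k+2)$ entries, where it is given by $\mu_3$ and $-\tfrac{1}{4k+2}\mu_{4k+3}$ respectively, so
\begin{equation*}
\langle W, \cR\rangle = \mu_3(\cR) - \frac{1}{4k+2}\,\mu_{4k+3}(\cR).
\end{equation*}
Thus the plan is to show that either $\mu_3(\cR)$ is already at most $2^{-\gamma m+1}$, or else the negative $\mu_{4k+3}$-term dominates $\mu_3(\cR)$ up to an exponentially small slack.

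The mechanism for comparing these two masses, as outlined in Section~\ref{sec:main}, is to realize $\mu_3$ and $\mu_{4k+3}$ as mixtures over a common family of \emph{triples} $\tau=(T,H,\boD)$, each of which, via the partition construction of Section~\ref{subsec:partitions}, induces a product distribution $\mu_3^\tau = \mu_3^\tau|_M \otimes \mu_3^\tau|_L$ on respecting matching-labeling pairs, and similarly $\mu_{4k+3}^\tau$. Since $\cR$ factors as $\cR_M\times \cR_L$, each summand evaluates as $\mu_i^\tau|_M(\cR_M)\cdot\mu_i^\tau|_L(\cR_L)$, which is the standard object attacked by the hyperplane-separation framework. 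I would then classify the triples following Section~\ref{subsec:classes} into good, small, and bad, and bound each class separately. The small class contributes at most $2^{-\Omega(m)}$ essentially by construction, since its total weight is exponentially small, and the bad class is absorbed by the estimate of Section~\ref{subsec:bad}, which bounds its contribution to an arbitrary rectangle uniformly. The good class carries essentially all the mass and is handled by the main technical lemma (Lemma~\ref{lem:over_cover}), whose content is precisely that for every good $\tau$,
\begin{equation*}
\mu_3^\tau(\cR) \;\le\; \frac{1}{4k+2}\,\mu_{4k+3}^\tau(\cR) + 2^{-\Omega(m)},
\end{equation*}
so that the good contribution to $\langle W,\cR\rangle$ is at most exponentially small. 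Summing the three contributions gives the claimed bound $\langle W,\cR\rangle \le 2^{-\gamma m+1}$.

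The hard part, as the overview already flags, is the good-triple case. One must exhibit the over-covering phenomenon that for a good $\tau$, any rectangle whose $\mu_3^\tau$-mass is not negligible must contain so many respecting $(4k+3)$-violation pairs that the $\mu_{4k+3}^\tau$-term overshoots by a factor more than $4k+2$. Unlike Rothvoss's original pair-of-sets argument, the bipartite parity setting forces the use of \emph{triples} with an additional structural ingredient coming from the partitions, and a probabilistic existence argument for a well-chosen family of permutations on the partition part that generates the positive-slack entries. The remaining work in Sections~\ref{subsec:partitions}--\ref{subsec:bad} is precisely to set up partitions and triples so that the product decomposition of $\mu_3$ and $\mu_{4k+3}$ is valid, so that the good/small/bad trichotomy is meaningful, and so that the over-covering inequality above can be proved via these permutations; once those ingredients are assembled, Lemma~\ref{lem:rec} follows by the one-line aggregation described in the previous paragraph.
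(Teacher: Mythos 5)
Your reduction of Lemma~\ref{lem:rec} to a comparison between $\mu_3(\cR)$ and $\mu_{4k+3}(\cR)$ is the same as the paper's: the $-\infty$ entries kill any rectangle touching a $1$-violation pair, and for disjoint rectangles $\langle W,\cR\rangle = \mu_3(\cR) - \frac{1}{4k+2}\mu_{4k+3}(\cR)$, so it suffices to show $\mu_3(\cR)$ is at most a sufficiently small constant times $\mu_{4k+3}(\cR)$ plus an exponentially small additive term. That is exactly Lemma~\ref{lem:over_cover}, and once you have it, Lemma~\ref{lem:rec} is a two-line aggregation, as you say.

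However, your description of Lemma~\ref{lem:over_cover} is off in a way that hides the real numerical bottleneck. You state it as a \emph{per-good-triple} inequality
$\mu_3^\tau(\cR) \le \frac{1}{4k+2}\mu_{4k+3}^\tau(\cR) + 2^{-\Omega(m)}$,
but no such per-triple bound is proved (indeed ``$\mu_{4k+3}^\tau$'' for a triple $\tau=(T,H,\boD)$ is not even well-typed, since $\mu_{4k+3}$ decomposes over pairs $(T,\boC)$, not triples). What the good case actually gives is two separate ingredients that must be combined: (i) for each good $\tau$, each factor $p_{\cM,T}(M_3(H\cup\boD))$, $p_{\cL,T}(L_3(H\cup\boD))$ is within a $(1+\epsilon)$ factor of the corresponding $p$ evaluated at $M_{4k+3}(\boC)$, $L_{4k+3}(\boC)$ — that only buys a constant factor, not $1/(4k+2)$ — and (ii) for any fixed $T$ and ordering $\boC=\boF\sqcup\boD_2$, the fraction of choices of $H$ (together with $\boD_1=\sigma_{\pos(H,\boF)}(\boF\setminus H)$) that yield a \emph{good} triple is at most $10/k^2$, which is where the permutation family $\{\sigma_t\}$ and Lemma~\ref{lem:not_both_good} enter. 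Multiplying these gives the good-triple contribution $\le \frac{20}{k^2}\mu_{4k+3}(\cR)$, and the global statement in Lemma~\ref{lem:over_cover} has constant $\frac{40}{k^2}$, not $\frac{1}{4k+2}$. The separation $\frac{40}{k^2} < \frac{1}{4k+2}$ is precisely what forces the hypothesis $k\ge 400$; conflating the over-covering constant with the weight $\frac{1}{4k+2}$ from $W$ obscures why a large $k$ is needed at all. Also, your one-sentence justification for the small class (``its total weight is exponentially small'') is not the reason: a small triple contributes at most $2^{-\delta m}$ \emph{per triple} because one of the two $p$-factors is already $\le 2^{-\delta m}$, and \eqref{eqn:small} follows because an expectation is bounded by the maximum.
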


\Cref{equ:wt_func} and~\Cref{lem:rec} directly imply Theorem~\ref{thm:nr_relax}:
\begin{equation*}
    \nrank(S(P_G,C_G)) \ge  \frac{\overbrace{\Ip{W}{S(P_G,C_G)} }^{=1}}{\underbrace{\|S(P_G,C_G)\|_{\infty}}_{\le n-1}\cdot \underbrace{\max\{\Ip{W}{\cR}:\cR\text{ is a rectangle}\}}_{\le 2^{-\gamma m+1}}}=2^{\Omega(n)}.
\end{equation*}

To show \Cref{lem:rec}, we only need to show:
\begin{lemma}\label{lem:over_cover}
Let $\cR:\Mall(G)\times \Lall(G)\rightarrow \bool$ be a rectangle. If for all $(M,L)\in \cR$, $|M\cap E_L|>1$, then $\mu_3(\cR)\le \frac{40}{k^2}\cdot \mu_{4k+3}(\cR)+2^{-\gamma m}$ when $m$ is sufficiently large.
\end{lemma}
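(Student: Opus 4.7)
My plan is to follow the high-level recipe of Rothvoss~\cite{Rothvoss17}: decompose both $\mu_3$ and $\mu_{4k+3}$ as convex combinations of product distributions indexed by the triples $(T,H,\boD)$ introduced in Section~\ref{subsec:generate}, compare the rectangle's mass under each pair of product distributions \emph{triple by triple}, and then sum over triples. Because $\cR = uv^\top$ is combinatorial, the $\mu_i^{(T,H,\boD)}$-mass of $\cR$ factors as a product of a one-sided matching probability and a one-sided labeling probability, so all comparisons reduce to comparing these two-sided products for one fixed triple. This factorization is what makes the whole framework tractable and is the reason for introducing the triple structure in the first place.

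I would then partition the triples into three classes, mirroring the GOOD/SMALL/BAD terminology in Sections~\ref{subsec:classes}--\ref{subsec:bad}. The \emph{small} triples are those whose product-distribution mass on any rectangle is already exponentially small for structural reasons (e.g.\ the matching side or labeling side alone is too concentrated to survive restriction to $\cR$); their total $\mu_3$-contribution is absorbed into the $2^{-\gamma m}$ error term. The \emph{good} triples are those whose generated matching and labeling distributions are essentially uniform on their supports and which additionally satisfy the extra structural condition needed in the next paragraph. The \emph{bad} triples are the rest, and I would bound their total $\mu_3$-mass by showing that, over the random choice of partition used to build the triples, bad triples have exponentially small aggregate weight --- a counting/concentration estimate on the partition structure from Section~\ref{subsec:partitions}.

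The heart of the argument is the good-triple contribution. For a good triple $(T,H,\boD)$ I want to prove
\[
\mu_3^{(T,H,\boD)}(\cR) \;\le\; \tfrac{40}{k^2}\,\mu_{4k+3}^{(T,H,\boD)}(\cR).
\]
The leverage is the hypothesis that $\cR$ avoids every $1$-violation pair: near-uniformity of the good triple means that if $\mu_3^{(T,H,\boD)}$ puts nontrivial mass on $\cR$, then many matchings and labelings consistent with the triple simultaneously land in $\cR$, and the forbidden $1$-violation entries then force the three violating edges to lie in a very constrained region of the triple. In Rothvoss's matching argument the analogous step is resolved by a geometric claim that good pairs generating slack-$2$ entries share at most one element, but, as noted in the overview, the bipartite structure here destroys that claim. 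My plan is the repair announced in the overview: permute the positive-slack part of the partition and use the probabilistic method to exhibit a family of such permutations for which the induced map from $\mu_3^{(T,H,\boD)}$-configurations in $\cR$ into $\mu_{4k+3}^{(T,H,\boD)}$-configurations in $\cR$ is at most $\frac{k^2}{40}$-to-one. Averaging over the family yields the per-triple inequality, and summing over good triples together with the small- and bad-triple bounds gives the stated inequality.

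The main obstacle I expect is exactly this permutation step: identifying the right extra structural condition on good triples so that (i) that condition still captures all but an exponentially small fraction of the total $\mu_3$-mass, and (ii) the probabilistic argument for the permutation family actually produces the constant $40/k^2$. Since $40$ is a small absolute constant, the bookkeeping of where the extra $4k$ violating edges can sit relative to the three original ones has to be done quite tightly, and the choice of distribution over permutations has to be calibrated so that the expected overlap with $\cR$ on the $\mu_{4k+3}$ side dominates the $\mu_3$ side by the required ratio. Everything else --- the hyperplane-separation reduction, the triple decomposition, and the small/bad bookkeeping --- I expect to be comparatively routine once the good-triple inequality is in hand.
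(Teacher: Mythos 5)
Your high-level plan (triple decomposition, good/small/bad split, and the probabilistic-method permutation repair for the good triples) matches the paper, but your description of the bad-triple contribution contains a genuine error. You propose to show that the bad triples ``have exponentially small aggregate weight'' via a ``counting/concentration estimate on the partition structure.'' That is not what happens, and it cannot be made to work: whether a triple is bad depends on the rectangle $\cR$, and for a large rectangle it is entirely possible that a constant fraction of triples that generate some entry of $\cR$ are bad. The paper instead bounds the bad contribution \emph{relative to $\mu_3(\cR)$ itself}: by swapping the role of $\boD$ with one of the $\boA_i$ blocks and invoking the biased-index entropy lemma (\Cref{lem:biased}) together with a notion of balanced matchings and labelings, one proves that for each fixed $3$-violation entry $(M,L)$, the conditional probability of sampling a bad balanced triple in $\cP(M,L)$ is at most $\epsilon$, giving the bound $\epsilon\cdot\mu_3(\cR) + 2^{-\eta m}$. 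This self-referential bound is then rearranged ($(1-\epsilon)\mu_3(\cR)\le \ldots$) and the factor $\tfrac{1}{1-\epsilon}<2$ is exactly what promotes the good-triple constant $\tfrac{20}{k^2}$ to the stated $\tfrac{40}{k^2}$. Missing this structure means you would also be unable to account for where the $40$ comes from; it is not a consequence of tight calibration inside the permutation argument as you conjecture, but of $20 = 2\cdot 10$ (with $2 \approx (1+\epsilon)^2$ from goodness and $10$ from the $10k/k^3$ fraction in \Cref{lem:goodpermutations}) multiplied by $\tfrac{1}{1-\epsilon}$.

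A secondary imprecision: you frame the good-triple step as constructing a ``$\tfrac{k^2}{40}$-to-one map from $\mu_3$-configurations into $\mu_{4k+3}$-configurations.'' The paper instead compares the two measures directly on shared $(T,\boC)$ data: goodness gives $p_{\cM,T}(M_3(H\cup\boD))\le(1+\epsilon)p_{\cM,T}(M_{4k+3}(\boC))$ and similarly for labels, while \Cref{lem:not_both_good} and \Cref{lem:goodpermutations} cap the number of good $(H,\boD)$ choices given $(T,\boC)$ at $10k$ out of $\binom{2k+3}{3}\ge k^3$. This is a bound on the fraction of good draws, not an injectivity argument, though the two are loosely related in spirit.
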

\begin{proof}[Proof of \Cref{lem:rec} assuming \Cref{lem:over_cover}]
If $\cR$ contains a $1$-violation entry, by definition, $\Ip{W}{\cR}=-\infty$. Otherwise, when $k\ge 400$, we have
\begin{equation*}
    \Ip{W}{\cR}=2\mu_3(\cR)-\frac{1}{4k+2}\mu_{4k+3}(\cR)\le \left(\frac{80}{k^2}-\frac{1}{4k+2}\right)\mu_{4k+3}(\cR)+2^{-\gamma m+1}\le 2^{-\gamma m+1}.
\end{equation*}
\end{proof}
Intuitively, the above lemma is saying that any ``large'' rectangle disjoint from $1$-violation entries must over-cover $(4k+3)$-violation entries, where by ``large'' rectangles we mean rectangles that cover at least a $2^{-\gamma m+2}$-fraction of $3$-violation entries according to $\mu_3$.

In the rest of this section, we prove \Cref{lem:over_cover}.
\subsection{Partitions}\label{subsec:partitions}

\begin{figure}
    \centering
    \begin{tikzpicture}
    \draw[draw=black, fill=red!10] (0, 0) rectangle ++(5, 3.5);
    \draw[draw=black, fill=green!10] (6, -1) rectangle ++(1, 4.5);
    \draw[draw=black, fill=blue!10] (8, 0) rectangle ++(5, 3.5);
    
    % red rectangles
    \draw[draw=black, fill=red!20] (0.25, 0.25) rectangle ++(1, 3);
    \draw[draw=black, fill=red!20] (1.5, 0.25) rectangle ++(1, 3);
    \draw [fill=black] (2.9, 1.75) circle [radius = 0.02];
    \draw [fill=black] (3.1, 1.75) circle [radius = 0.02];
    \draw [fill=black] (3.3, 1.75) circle [radius = 0.02];
    \draw[draw=black, fill=red!20] (3.75, 0.25) rectangle ++(1, 3);
    
    % blue rectangles
    \draw[draw=black, fill=blue!20] (8.25, 0.25) rectangle ++(1, 3);
    \draw[draw=black, fill=blue!20] (9.5, 0.25) rectangle ++(1, 3);
    \draw [fill=black] (10.9, 1.75) circle [radius = 0.02];
    \draw [fill=black] (11.1, 1.75) circle [radius = 0.02];
    \draw [fill=black] (11.3, 1.75) circle [radius = 0.02];
    \draw[draw=black, fill=blue!20] (11.75, 0.25) rectangle ++(1, 3);
    
    % vertices in red boxes
    \foreach \x in {0,...,7}
        {
        \draw [fill=black] (0.5, 0.5 + 0.35*\x) circle [radius = 0.05];
        \draw [fill=black] (1, 0.5 + 0.35*\x) circle [radius = 0.05];
        
        \draw [fill=black] (1.75, 0.5 + 0.35*\x) circle [radius = 0.05];
        \draw [fill=black] (2.25, 0.5 + 0.35*\x) circle [radius = 0.05];
        
        \draw [fill=black] (4, 0.5 + 0.35*\x) circle [radius = 0.05];
        \draw [fill=black] (4.5, 0.5 + 0.35*\x) circle [radius = 0.05];
        }

    % vertices in blue boxes
    \foreach \x in {0,...,7}
        {
        \draw [fill=black] (8.5, 0.5 + 0.35*\x) circle [radius = 0.05];
        \draw [fill=black] (9, 0.5 + 0.35*\x) circle [radius = 0.05];
        
        \draw [fill=black] (9.75, 0.5 + 0.35*\x) circle [radius = 0.05];
        \draw [fill=black] (10.25, 0.5 + 0.35*\x) circle [radius = 0.05];
        
        \draw [fill=black] (12, 0.5 + 0.35*\x) circle [radius = 0.05];
        \draw [fill=black] (12.5, 0.5 + 0.35*\x) circle [radius = 0.05];
        }
        
    % vertices in green box
    \foreach \x in {0,...,7}
        {
        \draw [fill=black] (6.25, -0.75 + 0.35*\x) circle [radius = 0.05];
        \draw [fill=black] (6.75, -0.75 + 0.35*\x) circle [radius = 0.05];

        }
    \foreach \x in {0,...,2}
        {
        \draw [fill=black] (6.25, 2.5 + 0.35*\x) circle [radius = 0.05];
        \draw [fill=black] (6.75, 2.5 + 0.35*\x) circle [radius = 0.05];
        }
        
    \node at (2.5, 4) {$A$};
    \node at (10.5, 4) {$B$};
    \node at (6.5, 4) {$C$};
    
    \end{tikzpicture}
    \caption{Visualization of a partition $T = (A, C, B)$. Each of the $m$ blocks in $A$ and $B$ have $4k$ pairs of vertices while $C$ has $4k+3$ pairs.}
    \label{fig:partition}
\end{figure}
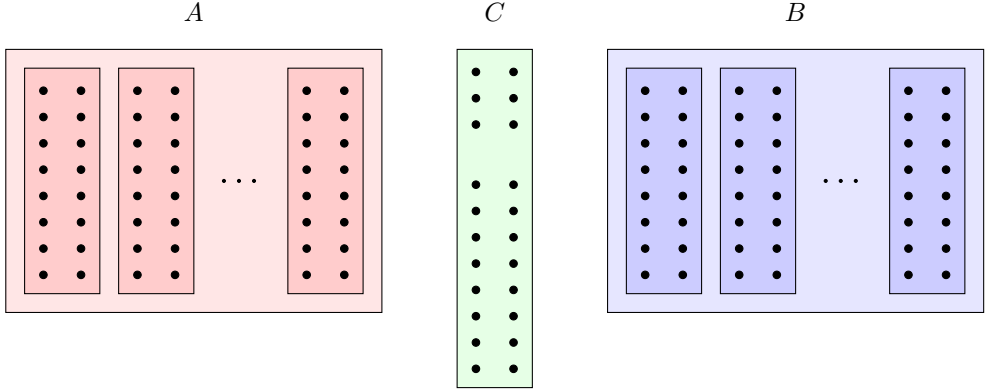

The vertices of the graph are labeled $u_1,\ldots,u_n\in U$ and $v_1,\ldots,v_n\in V$. For each $i\in [n]$, $(u_i,v_i)$ are naturally made into pairs. Note that a vertex might not be matched with the vertex in the same pair when talking about $\mu_3$ and $\mu_{4k+3}$ later.

Similar to Rothvoss's partition, our partition consists of $A=\{\boA_1,\ldots,\boA_m\}$, which will be used to generate almost uniform labelings, and $B=\{\boB_1,\ldots,\boB_m\}$, which will be used to generate almost uniform matchings. We also have a set $C$ (in Rothvoss's proof $C$ and $D$ are used), the only part that causes violations.

Unlike Rothvoss's partition, since our graph is colored, each block $\boA_i$ or $\boB_i$ cannot contain all the labelings or all the matchings. Thus, in each block, vertices are not ``identical''. In fact, vertices (or more precisely, pairs) are ``ordered'' in each $\boA_i$ and $\boB_i$ in our partition ($C$ is not ordered at this stage).

Formally, a partition $T$ is defined as $T:=(A=\{\boA_1,\ldots,\boA_m\},C,B=\{\boB_1,\ldots,\boB_m\})$ where
\begin{itemize}
    \item $A$ contains $m$ disjoint blocks $\boA_1,\ldots,\boA_m$ where each $\boA_i$ is a sequence of $4k$ pairs of vertices.
    \item $B$ contains $m$ disjoint blocks $\boB_1,\ldots,\boB_m$ where each $\boB_i$ is a sequence of $4k$ pairs of vertices.
    \item $C$ is an unordered set of the remaining $4k+3$ pairs of vertices.
\end{itemize}

Given an ordered set of pairs of vertices $\boA$ (resp. an unordered set $A$), let $G_{\boA}$ (resp. $G_A$), be a subgraph induced by all the vertices in $\boA$ (resp. $A$). Given an $M\in\Mall(G)$ and $L\in \Lall(G)$, let $M_{\boA}$ (resp. $M_A$) 
be the matching resulting from restricting $M$ on $G_{\boA}$ (resp. $G_A$). Let $L_{\boA}$ (resp. $L_A$) be the labeling resulting from restricting $L$ on $G_{\boA}$ (resp. $G_A$). For $M$ a matching that is perfect on $G_A$, we say $M_A$ is consistent with $L_A$ if all the edges in $M_A$ are correct w.r.t. $L_A$.

Now let us see what is inside $\boA_i$ and $\boB_j$. Given a sequence of $4k$ pairs of vertices $\boD=((u_{1},v_{1}),\dots$, $(u_{4k},v_{4k}))$ where $(u_{j},v_{j})$ is the $j$-th pair in $\boD$, let $M_A(\boD)$ be a perfect matching of $G_{\boD}$ defined as
\begin{equation*}
    M_A(\boD)=\{e=(u_{j},v_{2k+j},\mbox{red}):1\le j\le 2k\}\cup \{e=(u_{2k+j},v_{j},\mbox{red}):1\le j\le 2k\}.
\end{equation*}
In each block $\boA_i$, the matching is fixed to be $M_A(\boA_i)$. And $L_{\boA_i}$, the labeling restricted on $\boA_i$, can be any labeling consistent with $M_A(\boA_i)$.

Similarly, given a sequence of $4k$ pairs of vertices $\boD=((u_{1},v_{1}),\ldots,(u_{4k},v_{4k}))$, let $L_B(\boD)$ be a labeling of $G_{\boD}$ defined as follows.
\begin{itemize}
    \item For every $j\in [1,k]\cup [2k+1,3k]$, $L_B(\boD)(u_{j})=L_B(\boD)(v_{j})=1$.
    \item For every $j\in [k+1,2k]\cup [3k+1,4k]$, $L_B(\boD)(u_{j})=L_B(\boD)(v_{j})=0$.
\end{itemize}
Note that $M_A(\boD)$ is consistent with $L_B(\boD)$.

In each block $\boB_i$ the labeling is fixed to be $L_B(\boB_i)$. And $M_{\boB_i}$, the matching restricted on $\boB_i$, can be any perfect matching consistent with $L_B(\boB_i)$. 

\begin{figure}
    \begin{minipage}[t][5.0cm][t]{0.49\textwidth}
    \begin{center}
    \begin{tikzpicture}[scale=0.75]
    \draw[draw=black, fill=red!20] (0, 0) rectangle ++(3, 6);
    \foreach \x in {0,...,7}
        {
        \draw [fill=black] (0.5, 0.5 + 0.7*\x) circle [radius = 0.05];
        \draw [fill=black] (2.5, 0.5 + 0.7*\x) circle [radius = 0.05];
        }
    \foreach \x in {0,...,3}
        {
        \draw [color=red, thick] (0.5, 0.5 + 0.7*\x) -- (2.5, 3.3 + 0.7*\x);
        \draw [color=red, thick] (0.5, 3.3 + 0.7*\x) -- (2.5, 0.5 + 0.7*\x);
        }
    \node at (1.5, 6.5) {$M_A(\boD)$};
    \end{tikzpicture}
    \end{center}
    \end{minipage}
    \begin{minipage}[t][5.0cm][t]{0.49\textwidth}
    \begin{center}
    \begin{tikzpicture}[scale=0.75]
    \draw[draw=black, fill=blue!20] (0, 0) rectangle ++(3, 6);
    \foreach \x in {0,...,7}
        {
        \draw [fill=black] (0.5, 0.5 + 0.7*\x) circle [radius = 0.05];
        \draw [fill=black] (2.5, 0.5 + 0.7*\x) circle [radius = 0.05];
        }
    \node at (-0.5, 5.4) {$1$};
    \node at (-0.5, 4.7) {$1$};
    \node at (-0.5, 2.5) {$1$};
    \node at (-0.5, 1.8) {$1$};
    \node at (3.5, 5.4) {$1$};
    \node at (3.5, 4.7) {$1$};
    \node at (3.5, 2.5) {$1$};
    \node at (3.5, 1.8) {$1$};
    
    \node at (-0.5, 4) {$0$};
    \node at (-0.5, 3.3) {$0$};
    \node at (3.5, 4) {$0$};
    \node at (3.5, 3.3) {$0$};
    \node at (-0.5, 1.2) {$0$};
    \node at (-0.5, 0.5) {$0$};
    \node at (3.5, 1.2) {$0$};
    \node at (3.5, 0.5) {$0$};
    
    \node at (1.5, 6.5) {$L_B(\boD)$};
    
    \end{tikzpicture}
    \end{center}
    \end{minipage}
    \caption{The matching $M_A(\boD)$ and the labeling $L_B(\boD)$. $\Mall(T)$ are all matchings that are exactly $M_A(\boD)$ on all A-blocks and consistent with $L_B(\boD)$ on all B-blocks. $\Lall(T)$ are labelings that are exactly $L_B(\boD)$ on B-blocks and consistent with $M_A(\boD)$ on A-blocks.}
    \label{fig:MA_and_LB}
\end{figure}

%In our constructions, all the violations happen in $C$, so $L_{G'}$ is consistent with $M'$.
Define
\begin{align*}
\Mall(T) := & \{M:M_{\boA_i}=M_A(\boA_i),\, M_{\boB_i}\text{ is consistent with } L_B(\boB_i) \, \forall 1\le i\le m, \, M_C\text{ has odd number of} \\ & \text{ red edges}\}
\end{align*}
and
\begin{equation*}
\Lall(T):=\{L: L_{\boB_i}=L_B(\boB_i), \, L_{\boA_i}\text{ is consistent with } M_A(\boA_i) \, \forall 1\le i\le m, \, |L_C^{-1}(1)|\text{ is odd}\}.
\end{equation*}
For any $M\in\Mall(T)$, by a parity argument, both $M_{\boA_i}$ and $M_{\boB_i}$ contain an even number of red edges, which implies $M\in \Mall$. Similarly, for any $L\in \Lall(T)$, both $|L_{\boA_i}^{-1}(1)|$ and $|L_{\boB_i}^{-1}(1)|$ are even, which implies $|L^{-1}(1)|$ is odd. Thus $\Mall(T)\subseteq \Mall,\Lall(T)\subseteq \Lall$.
Note that for any fixed partition $T$, $M\in \Mall(T)$, and $L\in \Lall(T)$, violations caused by $(M,L)$ only occur in $C$.

\subsection{Generating the distributions}\label{subsec:generate}
From now on, we fix $\cR=\cM\times \cL$ to be a rectangle disjoint from 1-violation entries.

To quantitatively characterize $\mu_3(\cR)$ and $\mu_{4k+3}(\cR)$, we need the following notation. Let $T=(A=\{\boA_1,\ldots,\boA_m\},C,B=\{\boB_1,\ldots,\boB_m\})$ be a partition, $M'$ be a perfect matching of $C$, define
\begin{equation*}
p_{\cM,T}(M'):=\Pr_{M\sim \Mall(T,M')}[M\in \cM]
\end{equation*}
where $\Mall(T,M'):=\{M\in \Mall(T):M_C=M'\}$, as the probability that a uniform matching $M$ among all matchings w.r.t. $T$ whose restriction on $C$ is $M'$ lies in $\cM$.
Similarly, for any labeling $L'$ of $C$, define
\begin{equation*}
    p_{\cL,T}(L'):=\Pr_{L\sim \Lall(T,L')}[L\in \cL]
\end{equation*}
where $\Lall(T,L'):=\{L\in \Lall(T):L_C=L'\}$, as the probability that a uniform labeling $L$ among all labelings w.r.t. $T$ whose restriction on $C$ is $L'$ lies in $\cL$.

To generate $\mu_3$, we first sample a uniform partition $T$. Then we uniformly select three pairs of vertices from $C$. We call this unordered set $H$. Afterwards, we order the remaining $4k$ pairs of vertices uniformly at random to get $\boD=\boD_1\sqcup \boD_2$, where $\boD_1$ is an ordered set of the first $2k$ pairs of vertices in $\boD$ and $\boD_2$ is an ordered set of the last $2k$ pairs. Since $T,H,\boD$ will appear very frequently in the rest of the section, we use $\cT$ to denote the set of all the potential triples $(T,H,\boD)$ generated at this stage.

Let $M_3(H)$ be a perfect matching of $H$ where vertices in each pair are connected by a red edge and $L_3(H)$ be a labeling of $H$ where all the vertices of $H$ in $U$ are labeled with $1$ and those in $V$ are labeled with $0$. Note that all the edges in $M_3(H)$ violate $L_3(H)$.

Then we can generate $(M,L)$ according to a product distribution. Sample $M\sim \Mall(T,M_3(H\cup\boD))$, and $L\sim \Mall(T,L_3(H\cup\boD))$ uniformly and independently, where $M_3(H\cup \boD):=M_3(H)\cup M_A(\boD)$ and $L_3(H\cup \boD):=L_3(H)\cup L_B(\boD)$. 
Any potential outcome $(M,L)$ is a $3$-violation pair since the violations only occur in $H$.

Quantitatively, $\mu_3(\cR)$ can be expressed as
\begin{equation} \label{eqn:mu3}
    \mu_3(\cR)=\bE_T[\bE_{H,\boD}[p_{\cM,T}(M_3(H\cup \boD))\cdot p_{\cL,T}(L_3(H\cup \boD))]].
\end{equation}

On the other hand, to generate $\mu_{4k+3}$, we first sample a uniform partition $T$. Then we uniformly order the pairs of vertices in $C$. Let $\boC=\boF\sqcup \boD_2$ denote the obtained ordered set, where $\boF$ is the ordered set of the first $2k+3$ pairs of vertices and $\boD_2$ is the ordered set of the last $2k$ pairs. 

Similar to what we did for $\mu_3$, we can then sample $(M,L)$ according to a product distribution. Let $\boC=((u_{-2},v_{-2}),\ldots$, $(u_{4k},v_{4k}))$, where $(u_{j},v_{j})$ is the $(j+3)$-th pair of $\boC$. Let $M_{4k+3}(\boC)$ be a perfect matching of $\boC$ defined as
\begin{equation*}
    M_{4k+3}(\boC):=\{e=(u_{j},v_{j},\mathrm{red}):-2\le j\le 4k\}.
\end{equation*}

We also let $L_{4k+3}(\boC)$ be a labeling of $\boC$, where
\begin{itemize}
    \item for every $-2\le j\le 2k$, $L_{4k+3}(\boC)(u_{j})=1$, $L_{4k+3}(\boC)(v_{j})=0$;
    \item for every $2k+1\le j\le 4k$, $L_{4k+3}(\boC)(u_{2k+j})=0$, $L_{4k+3}(\boC)(v_{2k+j})=1$.
\end{itemize}

\begin{figure}
    \begin{minipage}[t][6.0cm][t]{0.49\textwidth}
    \begin{center}
    \begin{tikzpicture}[scale=0.75]
    \draw[draw=black, fill=green!10] (0, 0) rectangle ++(3, 8.5);
    \foreach \x in {0,...,7}
        {
        \draw [fill=black] (0.5, 0.5 + 0.7*\x) circle [radius = 0.05];
        \draw [fill=black] (2.5, 0.5 + 0.7*\x) circle [radius = 0.05];
        }
    \foreach \x in {0,...,2}
        {
        \draw [fill=black] (0.5, 6.5 + 0.7*\x) circle [radius = 0.05];
        \draw [fill=black] (2.5, 6.5 + 0.7*\x) circle [radius = 0.05];
        }
    \foreach \x in {0,...,7}
        {
        \draw [color=red, thick] (0.5, 0.5 + 0.7*\x) -- (2.5, 0.5 + 0.7*\x);
        }
    \foreach \x in {0,...,2}
        {
        \draw [color=red, thick] (0.5, 6.5 + 0.7*\x) -- (2.5, 6.5 + 0.7*\x);
        }
    \node at (1.5, 9) {$M_{4k+3}(\boC)$};
    
    \draw [decorate,
    decoration = {brace,mirror}] (3.5, 6.3) --  (3.5,8.1);
    \node at (4, 7.25) {$H$};
    \draw [decorate,
    decoration = {brace,mirror}] (3.5, 3.1) --  (3.5, 5.5);
    \node at (6.2, 4.25) {$\sigma_{\pos(H,\boF)}(\boF \setminus H) = \boD_1$};
    \draw [decorate,
    decoration = {brace,mirror}] (3.5, 0.25) --  (3.5, 2.75);
    \node at (4.1, 1.5) {$\boD_2$};
    
    \draw [decorate,
    decoration = {brace}] (-0.5, 0.3) --  (-0.5, 5.5);
    \node at (-1, 3) {$\boD$};
    
    \end{tikzpicture}
    \end{center}
    \end{minipage}
    \begin{minipage}[t][6.0cm][t]{0.49\textwidth}
    \begin{center}
    \begin{tikzpicture}[scale=0.75]
    \draw[draw=black, fill=green!10] (0, 0) rectangle ++(3, 8.5);
    \foreach \x in {0,...,7}
        {
        \draw [fill=black] (0.5, 0.5 + 0.7*\x) circle [radius = 0.05];
        \draw [fill=black] (2.5, 0.5 + 0.7*\x) circle [radius = 0.05];
        }
    \foreach \x in {0,...,2}
        {
        \draw [fill=black] (0.5, 6.5 + 0.7*\x) circle [radius = 0.05];
        \draw [fill=black] (2.5, 6.5 + 0.7*\x) circle [radius = 0.05];
        }
    \node at (-0.5, 5.4) {$1$};
    \node at (-0.5, 4.7) {$1$};
    \node at (-0.5, 2.5) {$0$};
    \node at (-0.5, 1.8) {$0$};
    \node at (3.5, 5.4) {$0$};
    \node at (3.5, 4.7) {$0$};
    \node at (3.5, 2.5) {$1$};
    \node at (3.5, 1.8) {$1$};
    
    \node at (-0.5, 4) {$1$};
    \node at (-0.5, 3.3) {$1$};
    \node at (3.5, 4) {$0$};
    \node at (3.5, 3.3) {$0$};
    \node at (-0.5, 1.2) {$0$};
    \node at (-0.5, 0.5) {$0$};
    \node at (3.5, 1.2) {$1$};
    \node at (3.5, 0.5) {$1$};
    
    \foreach \x in {0,...,2}
        {
        \node at (-0.5, 6.5 + 0.7*\x) {$1$};
        \node at (3.5, 6.5 + 0.7*\x) {$0$};
        }
    
    \node at (1.5, 9) {$L_{4k+3}(\boC)$};
    
    \end{tikzpicture}
    \end{center}
    \end{minipage}
    \caption{The matching $M_{4k+3}(\boC)$ and the labeling $L_{4k+3}(\boC)$. We also show the breakdown of $\boC = \boF \cup \boD_2$ in the triple $(T, H, \boD)$.}
    \label{fig:part-c}
\end{figure}

Then we sample $M\sim \Mall(T,M_{4k+3}(\boC))$ and $L\sim \Lall(T,L_{4k+3}(\boC))$ uniformly and independently. Any potential outcome $(M,L)$ is a $(4k+3)$-violation pair since all the edges in $M_{\boC}$ violate the labeling and all the matching edges in $A$ and $B$ are consistent with the labeling.

$\mu_{4k+3}(\cR)$ can also be quantitatively expressed as
\begin{equation*}
\mu_{4k+3}(\cR)=\bE_T[\bE_{\boC=\boF \sqcup \boD_2}[p_{\cM,T}(M_{4k+3}(\boC))\cdot p_{\cL,T}(L_{4k+3}(\boC))]].
\end{equation*}

\subsection{Classification of Triples}\label{subsec:classes}
In this section, we classify triples $(T,H,\boD)\in\cT$ into three categories: good, bad, or small w.r.t. the rectangle $\cR=\cM\times \cL$. From now on, we fix $0 < \epsilon < \frac{1}{3}$ to be a constant.

Given a triple $(T,H,\boD)\in \cT$, recall that $T=(A=\{\boA_1,\ldots,\boA_m\},C,B=\{\boB_1,\ldots,\boB_m\})$ is a partition, $\boD$ is a sequence of $4k$ pairs of vertices in $C$, and $H$ is the set of the remaining $3$ pairs in $C$. We say $(T,H,\boD)$ is $\cL$-good if the distribution of $L_{\boD}$ obtained by sampling a uniform $L\in \cL$ that satisfies $L_H=L_3(H)$ is close to a uniform distribution over all labelings consistent with $M_A(\boD)$. Below is the formal definition.
\begin{Definition}[$\cL$-good]
A triple $(T,H,\boD)\in \cT$ is said to be $\cL$-good if for any labeling of $\boD$ consistent with $M_A(\boD)$, denoted $L'$,
\begin{equation*}
    0<\frac{1}{1+\epsilon} p_{\cL,T}(L_3(H)\cup L')\le p_{\cL,T}(L_3(H\cup \boD))\le (1+\epsilon)p_{\cL,T}(L_3(H)\cup L').
\end{equation*}
\end{Definition}
Likewise, we say a triple $(T,H,\boD)$ is $\cM$-good if  the distribution of $M_{\boD}$ obtained by sampling a uniform $M\in \cM$ that satisfies $M_H=M_3(H)$ is close to a uniform distribution over all matchings consistent with $L_B(\boD)$.
\begin{Definition}[$\cM$-good]
A triple $(T,H,\boD)\in \cT$ is said to be $\cM$-good if for any perfect matching of $\boD$ consistent with $L_B(\boD)$, denoted $M'$,
\begin{equation*}
    0<\frac{1}{1+\epsilon} p_{\cM,T}(M_3(H)\cup M')\le p_{\cM,T}(M_3(H\cup \boD))\le (1+\epsilon)p_{\cM,T}(M_3(H)\cup M').
\end{equation*}
\end{Definition}

We say $(T,H,\boD)$ is good if it is both $\cM$-good and $\cL$-good.

If a triple $(T,H,\boD)$ is not good, it can be either bad or small. We say a triple is small if after fixing $T$, $H$, and $\boD$, the intersection of $\cR$ and the product distribution is exponentially small. A triple is bad if it is neither good nor small. Below is the formal definition.
\begin{Definition}
A triple $(T,H,\boD)\in \cT$ is small if either $p_{\cM,T}(M_3(H\cup \boD))\le 2^{-\delta m}$ or $p_{\cL,T}(L_3(H\cup \boD))\le 2^{-\delta m}$ for some constant $\delta:=\delta(k)>0$, which will be determined later. If a triple is neither good or small, then it is bad.
\end{Definition}

Let $\eG(T,H,\boD)$, $\eS(T,H,\boD)$, and $\eB(T,H,\boD)$ be the indicator functions of whether $(T,H,\boD)$ is good, small, or bad, respectively. We will get the following bounds for each type of triple of the right-hand side of \eqref{eqn:mu3}:

\begin{align}
    &\bE_T[\bE_{H,\boD}[\eG(T,H,\boD)\cdot p_{\cM,T}(M_3(H\cup \boD))\cdot p_{\cL,T}(L_3(H\cup \boD))]]\le \frac{20}{k^2}\mu_{4k+3}(\cR), \label{eqn:good} \\
    &\bE_T[\bE_{H,\boD}[\eS(T,H,\boD)\cdot p_{\cM,T}(M_3(H\cup \boD))\cdot p_{\cL,T}(L_3(H\cup \boD))]]\le 2^{-\delta m},
    \label{eqn:small} \\
    &\bE_T[\bE_{H,\boD}[\eB(T,H,\boD)\cdot p_{\cM,T}(M_3(H\cup \boD))\cdot p_{\cL,T}(L_3(H\cup \boD))]]\le \epsilon\cdot \mu_3(\cR)+2^{-\eta m}, \label{eqn:bad}
\end{align}
where $\eta:=\eta(k)>0$ is some constant which will be determined later.
Summing up \eqref{eqn:good},\eqref{eqn:small},and \eqref{eqn:bad}, we obtain
\begin{equation*}
    \mu_3(R)\le \frac{20}{k^2}\mu_{4k+3}(\cR)+2^{-\delta m}+\epsilon\cdot \mu_3(\cR)+2^{-\eta m}.
\end{equation*}
By choosing $\gamma=\min\{\delta,\eta\}/2$, we directly obtain \Cref{lem:over_cover}.

In fact, \eqref{eqn:small} can be proved in one line. For any triple $(T,H,\boD)$, the product in the expectation is upper bounded by $2^{-\delta m}$. Note that the expectation is always no larger than the maximum.

In the rest of this section we will prove \eqref{eqn:good} and \eqref{eqn:bad}.

\subsection{Contributions of Good Triples}\label{subsec:good}

In this section, we prove~\eqref{eqn:good}. That is, 

\[
\bE_T[\bE_{H,\boD}[\eG(T,H,\boD)\cdot p_{\cM,T}(M_3(H\cup \boD))\cdot p_{\cL,T}(L_3(H\cup \boD))]] \leq \frac{20}{k^2}\mu_{4k+3}(\cR)\,.
\tag{\ref{eqn:good}}\]
In the above expectation, we first select uniformly at random a partition $T=(A=\{\boA_1,\ldots,\boA_m\},C,B=\{\boB_1,\ldots,\boB_m\})$. 
Then $H$ is selected to be  three uniformly at random pairs of vertices from $C$, and finally we order the remaining $4k$ vertices uniformly at random to get $\boD$. 

We describe an alternative but equivalent viewpoint of selecting $T,H,\boD$. While this alternative viewpoint is more cumbersome, it will allow us to relate the above expectation to $\mu_{4k+3}$.  In this description we will use a family $\{\sigma_{\pos(H, \boF)}\}_{\pos(H, \boF) \in \binom{[2k+3]}{3}}$ of permutations. We later show, using the probabilistic method, that there is a family with nice properties that allows us to prove~\eqref{eqn:good} (see \Cref{lem:goodpermutations}).  The alternative way of selecting $T,H, \boD$ proceeds as follows. %To describe it we use the following notation: we have  
%An alternative but equivalent viewpoint of selecting $T, H, \boD$ is as follows. 
As before,  we first uniformly sample the partition $T$. 
But now we first uniformly at random permute all the $4k+3$ vertices in $C$ to get $\boC=\boF\sqcup \boD_2$, where we let $\boF$ denote the ordered set of the first $2k+3$ pairs of vertices and $\boD_2$ denote the ordered set of the remaining $2k$ pairs. 
Then we uniformly sample 3 pairs from $\boF$. 
Call this unordered set $H$. 
We let $\pos(H,\boF)\in \binom{[2k+3]}{3}$ denote the set of three indices representing the positions of $H$ in $\boF$.  
Then we apply  permutation $\sigma_{\pos(H,\boF)}$ to $\boF\setminus H$ to get $\boD_1=\sigma_{\pos(H,\boF)}(\boF\setminus H)$, where  $\boF\setminus H$ is an ordered set obtained by removing $H$ from $\boF$ without changing the order of the other pairs. 
Finally, we concatenate $\boD_2$ to the end of $\boD_1$ and get $\boD=\boD_1\sqcup \boD_2$. 
We remark that any three pairs of $\boF$ is a uniformly random sample of three pairs of $C$ due to the random permutation of $C$. 
So $H$ is three uniformly random pairs of $C$.  
Similarly, we have that $\boD$ is a uniformly random permutation of the remaining $4k$ pairs in $C$ no matter what $\sigma_{\pos(H, \boF)}$ is because the pairs in $\boF \setminus H \cup \boD_2$ are a random permutation of the $4k$ pairs in $C \setminus H$ (independent of $\sigma_{\pos(H, \boF)}$). 
It follows that this alternative viewpoint is equivalent and so we can express the left-hand side of \eqref{eqn:good} by
\begin{align*}
    &\bE_T[\bE_{H,\boD}[\eG(T,H,\boD)\cdot p_{\cM,T}(M_3(H\cup \boD))\cdot p_{\cL,T}(L_3(H\cup \boD))]]\\
    =&\bE_T[\bE_{\boC=\boF\sqcup\boD_2}[\bE_{H\sim \binom{\boF}{3},\boD_1=\sigma_{\pos(H,\boF)}(\boF\setminus H),\boD=\boD_1\sqcup
    \boD_2}[\eG(T,H,\boD)\cdot p_{\cM,T}(M_3(H\cup \boD))\cdot p_{\cL,T}(L_3(H\cup \boD))]]].
\end{align*}
We further have that this expectation is upper bounded by
\begin{align*}
    & 2\cdot \bE_T[\bE_{\boC=\boF\sqcup \boD_2}[p_{\cM,T}(M_{4k+3}(\boC))\cdot p_{\cL,T}(L_{4k+3}(\boC))\cdot \bE_{H\sim \binom{\boF}{3},\boD_1=\sigma_{\pos(H,\boF)}(\boF\setminus H),\boD=\boD_1\sqcup \boD_2}[\eG(T,H,\boD)]]].
\end{align*}
This follows from the following observation: when $(T,H,\boD)$ is good, $p_{\cM,T}(M_3(H\cup \boD))\le (1+\epsilon)p_{\cM,T}(M_{4k+3}(\boC))$ since $(M_{4k+3}(\boC))_H=M_3(H)$ and $(M_{4k+3}(\boC))_{\boD}$ is consistent with $L_B(\boD)$. Similarly, $p_{\cL,T}(L_3(H\cup \boD))\le (1+\epsilon)p_{\cL,T}(L_{4k+3}(\boC))$ since $(L_{4k+3}(\boC))_H=L_3(H)$ and $(L_{4k+3}(\boC))_{\boD}$ is consistent with $M_A(\boD)$.

Now recall that $\mu_{4k+3}(\cR)=\bE_T[\bE_{\boC=\boF \sqcup \boD_2}[p_{\cM,T}(M_{4k+3}(\boC))\cdot p_{\cL,T}(L_{4k+3}(\boC))]]$ and so by combining the above inequalities,
\begin{align*}
    & \bE_T[\bE_{H,\boD}[\eG(T,H,\boD)\cdot p_{\cM,T}(M_3(H\cup \boD))\cdot p_{\cL,T}(L_3(H\cup \boD))]]\\
    \le& 2\mu_{4k+3}(\cR)\cdot \max_{T,\boC=\boF\sqcup \boD_2}\{\bE_{H\sim \binom{\boF}{3},\boD_1=\sigma_{\pos(H,\boF)}(\boF\setminus H),\boD=\boD_1\sqcup \boD_2}[\eG(T,H,\boD)]\}.
\end{align*}
The remainder of this section will be devoted to showing  that for any partition $T$ and any ordering $\boC=\boF\sqcup \boD_2$,
\begin{equation}\label{eqn:prob_good}
    \bE_{H\sim \binom{\boF}{3},\boD_1=\sigma_{\pos(H,\boF)}(\boF\setminus H),\boD=\boD_1\sqcup \boD_2}[\eG(T,H,\boD)]\le \frac{10}{k^2}.
\end{equation}
Then \eqref{eqn:good} directly follows.

Our main technical ingredient is the following. Let $H$ and $H'$ be two sets intersecting on at most one pair of vertices. Then under a technical assumption, at least one of $(T,H,\boD)$ and $(T,H',\boD')$ is not good. We remark that this technical assumption is one of the key differences between our proof and that of Rothvoss's proof for general matchings. Indeed, the analogous statement in that paper  always guarantees that at least one of $(T,H,\boD)$ and $(T,H',\boD')$ is not good. Moreover, this is also the reason why we use the family of permutations $\{\sigma_{\pos(H, \boF)}\}_{\pos(H, \boF) \in \binom{[2k+3]}{3}}$ with certain properties.

\begin{lemma}\label{lem:not_both_good}
Let $(T,H,\boD=\boD_1\sqcup\boD_2)$ and $(T',H',\boD'=\boD'_1\sqcup \boD'_2)$ be two triples such that $T=T'$, $\boD_2=\boD'_2$, and $|H\cap H'|\le 1$. If $(u_1,v_1)$ and $(u_2,v_2)$ are two pairs belonging to $H'\setminus H$ and $L_B(\boD)(v_1)\ne L_B(\boD)(v_2)$, then $(T,H,\boD)$ and $(T',H',\boD')$ cannot both be good.
\end{lemma}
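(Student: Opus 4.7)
Assume for contradiction that both $(T,H,\boD)$ and $(T',H',\boD')$ are good. The plan is to construct a pair $(M,L)\in\cR$ whose number of violating edges is exactly $1$, contradicting the hypothesis that $\cR$ is disjoint from the $1$-violation entries. The key enabling observation is a parity strengthening of the proof of Claim~\ref{clm:parity}, obtained by rerunning the same double counting without assuming $M\subseteq E\setminus E_L$: for every $(M,L)\in\Mall(G)\times\Lall(G)$ one has $|\{e\in M:e\text{ red}\}|+|M\cap E_L|\equiv n-|L^{-1}(1)|\pmod 2$, and since $M\in\Mall$ forces the first term odd and $L\in\Lall$ forces the right-hand side to vanish modulo $2$, the violation count $|M\cap E_L|$ is always odd; hence exhibiting a single-violation pair in $\cR$ indeed gives the contradiction.

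To build such a pair, I would use $\cM$-goodness of $(T,H,\boD)$ at $M':=M_A(\boD)$ to obtain $M\in\cM$ with $M_C=M_3(H)\cup M_A(\boD)$, and $\cL$-goodness of $(T',H',\boD')$ at an $L''$ consistent with $M_A(\boD')$ (to be chosen) to obtain $L\in\cL$ with $L_C=L_3(H')\cup L''$. Violations only occur on $C$ by construction of $\Mall(T)$ and $\Lall(T)$, and they split into three parts: (i) $|H\cap H'|$ forced violations from the internal red edges of $M_3(H)$ restricted to $H\cap H'$ (on which $L_3(H')$ assigns opposite $u$/$v$ labels), (ii) one violation per pair in $H\setminus H'\subseteq\boD'$ on which $L''(u)\neq L''(v)$, and (iii) one violation per red $M_A(\boD)$-edge whose two endpoints receive different $L$-values. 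Using the $M_A(\boD')$-consistency of $L''$ to rewrite its restriction to $\boD_2=\boD'_2$ in terms of its values on $\boD'_1$, contributions (ii)--(iii) become a combinatorial system on the two unknown functions $L^U,L^V:[1,2k]\to\{0,1\}$ encoding $L''$ on $\boD'_1$, with boundary values $L^U(j_1)=L^U(j_2)=1$ and $L^V(j_1)=L^V(j_2)=0$ (where $j_1,j_2$ are the $\boD_1$-positions of the distinguished pairs $(u_1,v_1),(u_2,v_2)$), $\phi$-invariance along the bijection $\phi$ that sends each shared pair's $\boD_1$-position to its $\boD'_1$-position, and the diagonal constraints $L^U(j')=L^V(j')$ at each $\boD'_1$-position $j'$ occupied by a pair from $H\setminus H'$.

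The crux, and the precise point where the hypothesis $L_B(\boD)(v_1)\neq L_B(\boD)(v_2)$ enters, is to show this system admits a solution making contributions (ii)--(iii) vanish. The hypothesis forces $j_1\in[1,k]$ and $j_2\in[k+1,2k]$ into opposite $L_B(\boD)$-label groups, and this asymmetry is what prevents the boundary values at $j_1,j_2$ from propagating, via $\phi$-orbits, into a forbidden collision $1=L^U(j')=L^V(j')=0$ at some $H\setminus H'$-position; without it the two boundary positions would sit in the same label group and an unavoidable $1=0$ conflict would appear. A short case split on $|H\cap H'|\in\{0,1\}$ then concludes: in Case B the lone forced violation from (i) is exactly the single violation we need; in Case A we instead run the symmetric construction, using $\cM$-goodness of $(T',H',\boD')$ and $\cL$-goodness of $(T,H,\boD)$, and invoke the parity observation to force the total to be exactly one. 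The main obstacle throughout is verifying solvability of the combinatorial system on $L^U,L^V$, and this is precisely where Rothvoss's analogous lemma for general matchings works \emph{without} an extra hypothesis (since that setting has no $u$/$v$ asymmetry to create the collision) but our colored bipartite setting genuinely requires the assumption $L_B(\boD)(v_1)\neq L_B(\boD)(v_2)$.
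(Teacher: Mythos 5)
Your overall strategy (produce a $1$-violation pair in $\cR$ by exploiting $\cM$-goodness of $(T,H,\boD)$ and $\cL$-goodness of $(T',H',\boD')$) matches the paper's, but your concrete construction is different, and there are two genuine gaps.

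First, you commit to $M'_C=M_3(H)\cup M_A(\boD)$ and then try to choose the labeling $L''$ on $\boD'$ so that all $M_A(\boD)$-edges and all $H\setminus H'$-pairs become consistent. The feasibility of the resulting system on $L^U,L^V$ is the entire content of the lemma in your route, and you assert it without proof; the quick ``$\phi$-orbit'' argument is suggestive but not a proof that the constraints don't propagate into a contradiction via the two cross-index maps $j\mapsto 2k+j$ inside $M_A(\boD)$ and $M_A(\boD')$.

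Second, and more seriously, your own parity observation ($|M\cap E_L|$ is always odd) shows that making contributions (ii)--(iii) vanish is \emph{impossible} when $|H\cap H'|=0$: you would then have zero total violations. So your solvability claim is necessarily restricted to Case B, and Case A needs a genuinely different argument. Your proposed fix---``run the symmetric construction and invoke parity''---does not close the gap: the symmetric construction faces the identical parity obstruction (it would need to make (ii$'$)+(iii$'$) equal $1$, not $0$), and the hypothesis $L_B(\boD)(v_1)\ne L_B(\boD)(v_2)$ concerns $\boD$ and pairs in $H'\setminus H$; there is no given symmetric hypothesis about $L_B(\boD')$ and pairs in $H\setminus H'$. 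Parity only tells you the count is odd, not that it is $1$.

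For comparison, the paper avoids all of this by \emph{not} fixing $M'_{\boD}=M_A(\boD)$. It sets $L'\equiv 1$ except $L'=0$ on $H'\cap V$, and takes $M'$ to be the red within-pair edges on $C\setminus\{(u_1,v_1),(u_2,v_2)\}$ together with the two blue cross edges $(u_1,v_2)$ and $(u_2,v_1)$. The hypothesis $L_B(\boD)(v_1)\ne L_B(\boD)(v_2)$ is used exactly once---to verify that the two blue edges are consistent with $L_B(\boD)$ (condition (iv))---and the unique violation is then the red edge on the third pair $(u_3,v_3)$ of $H'$, regardless of whether $|H\cap H'|$ is $0$ or $1$. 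This handles both cases uniformly without any system solvability or case split. If you want to salvage your route, you would need to (a) explicitly solve the $L^U,L^V$ system in Case B, and (b) replace the Case A argument entirely---most likely by perturbing $M'$ on one of the $M_A(\boD)$-edges or on an $H\setminus H'$ pair to deliberately introduce exactly one violation, which starts to look like the paper's modified-matching idea anyway.
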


\begin{proof}%[Proof of \Cref{lem:not_both_good}]

The goal of the proof is to find a 1-violation pair $(M,L)\in \cR=\cM\times \cL$ that contradicts the assumption that $\cR$ contains no such entry.
This is conditioned on $(T,H,\boD)$ and $(T,H',\boD')$ both being good and the existence of two pairs $(u_1, v_1), (u_2, v_2) \in H'\setminus H$ such that $L_B(\boD)(v_1) \neq L_B(\boD)(v_2)$.
We now give five conditions that are necessary and sufficient to construct an auxiliary pair $(M', L')$ that proves the existence of the 1-violation pair.

%\st{Suppose that $(T,H,\boD)$ and $(T,H',\boD')$ are both good. We claim there will be a $1$-violation pair $(M,L)\in \cR=\cM\times \cL$, which contradicts with our assumption that $\cR$ does not contain any $1$-violation entries (stated at the beginning of Section 4.3).}

{Recall the definitions of $C$, $H$, and $D$ as shown in Figure~\ref{fig:part-c}, of $A$ and $B$ as shown in Figure~\ref{fig:partition}, and of $M_A(\boD),L_B(\boD)$ as shown in Figure~\ref{fig:MA_and_LB}.}
To prove the existence of a $1$-violation pair $(M,L)\in \cR$, it suffices to find a perfect matching $M'$ and a labeling $L'$ of $C$ such that
\begin{itemize}
    \item[(i)] $L'_{H'}=L_3(H')$
    \item[(ii)]
    $L'_{\boD'}$ is consistent with $M_A(\boD')$
    \item[(iii)] $M'_{H}=M_3(H)$
    \item[(iv)]
    $M'_{\boD}$ is consistent with $L_B(\boD)$
    \item[(v)]
    There is exactly one edge in $M'$ that violates $L'$.
\end{itemize}

Now we explain more explicitly how the goodness assumption together with these five conditions imply we can find a 1-violation. 
If we can find such a pair $(M',L')$, then by (i), (ii), and that $(T,H',\boD')$ is $\cL$-good, $p_{\cL,T}(L')>0$, which implies that $\cL\cap \Lall(T,L')\ne \emptyset$. Let $L$ be an arbitrary labeling in $\cL\cap \Lall(T,L')$. 
Similarly, by (iii), (iv), and that $(T,H,\boD)$ is $\cM$-good, $p_{\cM,T}(M')>0$. 
This implies $\cM\cap \Mall(T,M')\ne \emptyset$. Let $M$ be an arbitrary matching in $\cM\cap \Mall(T,M')$. 
Since $M\in \Mall(T)$ and $L\in \Lall(T)$ are sampled from the same partition $T$, $L$ is consistent with $M$ on $A$ and $B$. Together with (v), we conclude that $(M,L)$ is a $1$-violation pair and $(M,L) \in \cR$ because $M \in \cM$ and $L \in \cL$.

Next, we exhibit a pair $(M',L')$ that satisfies (i)-(v). We first give a simple construction of $L'$:
\begin{equation}
    L'(v)=\left\{
    \begin{matrix}
    1 & v \notin C \setminus (H' \cap V) \\
    0 & v \in H' \cap V
    \end{matrix}
    \right..
\end{equation}

%{\color{blue}
%Note that since $L \in \cL\cap \Lall(T,L')$, $L$ is the same as $L'$ on all vertices of $C$, as emphasized in the caption of Figure 6.
%Similarly, $M$ is the same as $M'$ on the edges of $C$.
%}

Let us verify that $L'$ satisfies (i) and (ii). Recall that $L_3(H)$ assigns $1$s to vertices in $H'\cap U$ and $0$s to vertices in $H'\cap V$. Hence $L'_{H'}=L_3(H')$, i.e. (i) holds. Moreover, since $L'_{\boD'}$ is the constant-1 function and $M_A(\boD')$ contains only red edges, $L'_{\boD'}$ is consistent with $M_A(\boD')$, i.e. (ii) holds.

Next, we construct $M'$ as follows:
\begin{equation*}
    M'=\{(u,v,\text{red}),(u,v)\in C\setminus \{(u_1,v_1),(u_2,v_2)\}\}\cup \{(u_1,v_2,\text{blue}),(u_2,v_1,\text{blue})\}.
\end{equation*}

Now let us verify that $M'$ satisfies (iii) and  (iv). Recall that $M_3(H)$ is the set of three red edges that connect pairs in $H$. Since $H\subseteq C\setminus \{(u_1,v_1),(u_2,v_2)\}$, $M'_H=M_3(H)$, i.e. (iii) holds. For (iv), note that for any $(u,v)\in \boD$, $L_B(\boD)(u)=L_B(\boD)(v)$, thus all the red edges in $M'_\boD$ are consistent w.r.t. $L_B(\boD)$. 
Moreover, by the assumption of the lemma $L_B(\boD)(v_1)\ne L_B(\boD)(v_2)$, which implies $L_B(\boD)(u_1)\ne L_B(\boD)(v_2),L_B(\boD)(u_2)\ne L_B(\boD)(v_1)$ since $L_B(\boD)(u_1) = L_B(\boD)(v_1)$ and $L_B(\boD)(u_2) = L_B(\boD)(v_2)$. 
Thus the two blue edges are also consistent. Hence (iv) holds. 

Finally, let us verify that $M'$ along with $L'$ satisfies (v). Since for any $(u,v)\in C\setminus H'$, $L'(u)=L'(v)=1$, the $4k$ red edges in $M'$ that connect pairs in $C\setminus H'$ are consistent w.r.t. $L'$. 
Let $(u_3,v_3)$ denote the other pair in $H'$ other than $(u_1,v_1)$ and $(u_2,v_2)$. Note that $L'(u_1)=L'(u_2)=L'(u_3)=1,L'(v_1)=L'(v_2)=L'(v_3)=0$. 
Thus the two blue edges $(u_1,v_2),(u_2,v_1)$ are consistent w.r.t. $L'$ while the red edge $(u_3,v_3)$ is the only violating edge. Therefore, (v) holds.
%In the original proof, the case distinction was regarding whether or not the violating edge is in $H$.
%Conceptually, both cases were the same, which our simplified presentation now makes explicit.

In summary, we can find $(M',L')$ that satisfies (i)-(v), which implies the existence of $1$-violation entry $(M,L)\in \cR$, leading to a contradiction.

\begin{figure}
    \begin{minipage}[t][3.5cm][t]{0.49\textwidth}
    \begin{center}
    \begin{tikzpicture}
    %\draw[color = yellow!20, fill=yellow!20] (-0.25, -0.15) rectangle ++(2, 1.8);
    
    \foreach \x in {0,...,10}
    {
        \draw [fill=black] (0, 0 + 0.5*\x) circle [radius = 0.05];
        \draw [fill=black] (1.5, 0 + 0.5*\x) circle [radius = 0.05];
        \node at (-0.4, 0 + 0.5*\x) {$\color{red}1$};
    }
    
    \foreach \x in {0,...,2}
    {
        \node at (-0.6, \x*0.5) {$\color{blue}0$};
        \node at (2.1, \x*0.5) {$\color{blue}0$};
        \node at (1.9, \x*0.5) {$\color{red}1$};
        \node at (-0.6, \x*0.5+1.5) {$\color{blue}1$};
        \node at (2.1, \x*0.5+1.5) {$\color{blue}1$};
        \node at (1.9, \x*0.5+1.5) {$\color{red}1$};
    }
    
    \node at (1.9, 4.5) {$\color{red} 1$};
    \node at (1.9, 5) {$\color{red} 1$};
    
    \node at (-0.6, 3) {$\color{blue} 0$};
    \node at (2.1, 3) {$\color{blue} 0$};
    \node at (-0.6, 3.5) {$\color{blue} 1$};
    \node at (2.1, 3.5) {$\color{blue} 1$};
    
    \foreach \x in {0,...,2}
    {
        
        %\draw[draw=red, decorate, decoration=snake] (0, 0 + 0.5*\x) -- (1.5, 0.5*\x);
        \node at (1.9, 3 + 0.5*\x) {$\color{red}0$};
    }
    
    \draw [color=red] (0, 4.5) -- (1.5, 4.5);
    \draw [color=red] (0, 5) -- (1.5, 5);
    \draw[draw=red, decorate, decoration=snake] (0, 4) -- (1.5, 4);

    \node at (-1, 3.5) {$u_1$};
    \node at (-1, 3) {$u_2$};
    \node at (2.5, 3.5) {$v_1$};
    \node at (2.5, 3) {$v_2$};
    \node at (-1, 4) {$u_3$};
    \node at (2.5, 4) {$v_3$};
    
    %\node at (-0.2, -0.8) {$D$};
    
    \draw [color=blue] (0,3.5) -- (1.5, 3);
    \draw [color=blue] (1.5,3.5) -- (0, 3);
    \foreach \x in {0,...,5}
    {
        \draw [color=red] (0,0.5*\x) -- (1.5,0.5*\x);
    }
    
    \draw [decorate,
    decoration = {brace,mirror}] (3, 3.8) --  (3,5.25);
    \node at (3.5, 4.5) {$H$};
    \draw [decorate,
    decoration = {brace,mirror}] (3.2, 2.8) --  (3.2, 4.1);
    \node at (3.7, 3.5) {$H'$};
    
    \end{tikzpicture}
    \end{center}
    \end{minipage}
    \begin{minipage}[t][3.5cm][t]{0.49\textwidth}
    \begin{center}
    \begin{tikzpicture}
    %\draw[color = yellow!20, fill=yellow!20] (-0.25, -0.15) rectangle ++(2, 1.8);
    
    \foreach \x in {0,...,10}
    {
        \draw [fill=black] (0, 0 + 0.5*\x) circle [radius = 0.05];
        \draw [fill=black] (1.5, 0 + 0.5*\x) circle [radius = 0.05];
        \node at (-0.4, 0 + 0.5*\x) {$\color{red}1$};
    }
    
    \foreach \x in {0,...,2}
    {
        \node at (-0.6, \x*0.5) {$\color{blue}0$};
        \node at (2.1, \x*0.5) {$\color{blue}0$};
        \node at (-0.6, \x*0.5+1.5) {$\color{blue}1$};
        \node at (2.1, \x*0.5+1.5) {$\color{blue}1$};
    }
    
    \foreach \x in {0,...,4}
    {
        \node at (1.9,\x*0.5) {$\color{red} 1$};
        \draw [color=red] (0,0.5*\x) -- (1.5,0.5*\x);
    }
    \node at (1.9,4) {$\color{red} 1$};
    
    \node at (1.9, 4.5) {$\color{red} 1$};
    \node at (1.9, 5) {$\color{red} 1$};
    
    \node at (-0.6, 3) {$\color{blue} 0$};
    \node at (2.1, 3) {$\color{blue} 0$};
    \node at (-0.6, 3.5) {$\color{blue} 1$};
    \node at (2.1, 3.5) {$\color{blue} 1$};
    
    \foreach \x in {0,...,2}
    {
        
        %\draw[draw=red, decorate, decoration=snake] (0, 0 + 0.5*\x) -- (1.5, 0.5*\x);
        \node at (1.9, 2.5 + 0.5*\x) {$\color{red}0$};
    }
    
    \draw [color=red] (0, 4) -- (1.5, 4);
    \draw [color=red] (0, 4.5) -- (1.5, 4.5);
    \draw [color=red] (0, 5) -- (1.5, 5);
    \draw[draw=red, decorate, decoration=snake] (0, 2.5) -- (1.5, 2.5);

    \node at (-1, 3.5) {$u_1$};
    \node at (-1, 3) {$u_2$};
    \node at (2.5, 3.5) {$v_1$};
    \node at (2.5, 3) {$v_2$};
    \node at (-1, 2.5) {$u_3$};
    \node at (2.5, 2.5) {$v_3$};
    
    %\node at (-0.2, -0.8) {$D$};
    
    \draw [color=blue] (0,3.5) -- (1.5, 3);
    \draw [color=blue] (1.5,3.5) -- (0, 3);
    
    \draw [decorate,
    decoration = {brace,mirror}] (3, 3.8) --  (3,5.25);
    \node at (3.5, 4.5) {$H$};
    \draw [decorate,
    decoration = {brace,mirror}] (3, 2.3) --  (3, 3.6);
    \node at (3.5, 3) {$H'$};
    
    \end{tikzpicture}
    \end{center}
    \end{minipage}
    \caption{The above two figures show how $M'$ and $L'$ are constructed when $k=2$. The left one shows the case when $|H\cap H'|=1$ while the right one shows the case when $|H\cap H'|=0$. The red labeling is $L'$ and the blue one is $L_B(\boD)$. The squiggly line is the only violating edge in both figures.}
    \label{fig:1-intersection}
\end{figure}
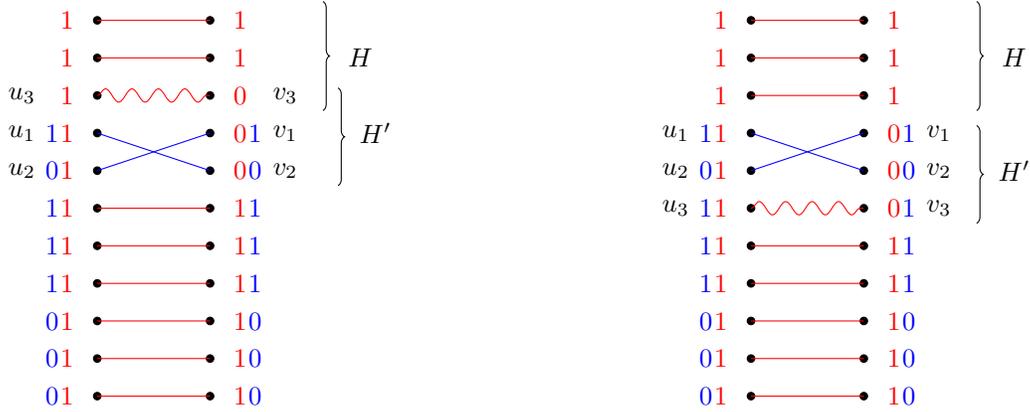

\end{proof}

Finally, we show that there exists a set of permutations $\{\sigma_t:t\in [2k+3]^3\}$ such that \eqref{eqn:prob_good} holds. The proof is by the probabilistic method.

\begin{lemma}
    There is a family $\{\sigma_{t} : t \in \binom{[2k+3]}{3}\}$ of bijections $\sigma_t : [2k+3]\setminus t \rightarrow [2k]$   such that the following holds.  For any subset $S$ of $\binom{[2k+3]}{3}$ with $|S| \geq 10k$,   there are $t, t' \in S$ and two elements $v_1, v_2 \in t' \setminus t$ satisfying $1 \leq \sigma_{t}(v_1) \leq k < \sigma_{t}(v_2) \leq 2k$.
    \label{lem:goodpermutations}
\end{lemma}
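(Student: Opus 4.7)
The plan is to apply the probabilistic method: I would sample each $\sigma_t$ independently and uniformly from the bijections $[2k+3]\setminus t \to [2k]$, and equivalently view each $\sigma_t$ as a uniformly random balanced $2$-coloring of $V_t := [2k{+}3]\setminus t$ into $L_t := \sigma_t^{-1}([1,k])$ and $R_t := \sigma_t^{-1}([k{+}1,2k])$. Since the stated conclusion is monotone in $|S|$ (enlarging $S$ only helps), it suffices to show that with positive probability no $S$ of size exactly $10k$ is \emph{bad}, where $S$ is bad iff for every $t \in S$ and every $t' \in S$ with $|t \cap t'|\leq 1$, the set $t' \setminus t$ is monochromatic under the $(L_t, R_t)$ partition.

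Fix $S$ with $|S|=10k$ and any $t \in S$, and let $T_t := \{t' \in S : |t \cap t'| \leq 1\}$. At most $1 + 3 \cdot 2k = 6k+1$ triples in $\binom{[2k+3]}{3}$ have $|t \cap t'| \geq 2$, so $|T_t| \geq 4k-1$. I would form an auxiliary graph $G_t$ on $V_t$ that includes an edge for each $2$-element hyperedge $t' \setminus t$ with $t' \in T_t$ and all three edges of the triangle for each $3$-element one; then the event ``bad at $t$ relative to $S$'' is exactly the event that every connected component of $G_t$ is monochromatic under $(L_t,R_t)$, and $G_t$ has at least $4k-1$ edges on its $2k$ vertices.

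The crucial estimate will be $\Pr[\text{bad at } t] \leq 2^{-\Omega(\sqrt{k})}$. Denoting the component sizes of $G_t$ by $n_1, \ldots, n_c$ (with $\sum_i n_i = 2k$), this probability equals $\#\{I \subseteq [c] : \sum_{i \in I} n_i = k\} / \binom{2k}{k}$, which vanishes if some $n_i > k$ and is at most $2^c / \binom{2k}{k}$ otherwise. To upper bound $c = 2k - \sum_i (n_i-1)$, I would use the edge constraint $\sum_i \binom{n_i}{2} \geq 4k-1$ together with the fact that the ratio $\binom{n}{2}/(n-1) = n/2$ is increasing: concentrating all edges into a single component minimizes $\sum_i(n_i-1)$, and solving $\binom{n}{2} \geq 4k-1$ yields $n-1 \geq 2\sqrt{2k} - O(1)$. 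Combined with Stirling's $\binom{2k}{k} = \Theta(4^k/\sqrt{k})$, this gives $\Pr[\text{bad at } t] \leq O(\sqrt{k}) \cdot 2^{-2\sqrt{2k}}$.

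By independence of the $\sigma_t$ across distinct $t \in S$, $\Pr[S \text{ bad}] \leq (O(\sqrt{k})\cdot 2^{-2\sqrt{2k}})^{10k} = 2^{-\Omega(k^{3/2})}$. A union bound over the $\binom{\binom{2k+3}{3}}{10k} = 2^{O(k\log k)}$ choices of $S$ then gives total failure probability at most $2^{O(k \log k) - \Omega(k^{3/2})}$, which is below $1$ for the constant $k \geq 400$ assumed throughout; hence a family $\{\sigma_t\}$ with the desired property exists. I expect the main obstacle to be the per-$t$ bound: if the $\geq 4k-1$ edges of $G_t$ could be spread across many tiny components, $c$ would be near $2k$ and the naive $2^c/\binom{2k}{k}$ bound would be useless, so the ``single large component'' optimization forcing $c \leq 2k - 2\sqrt{2k} + O(1)$ is what makes the argument go through, with the resulting per-$t$ exponent $\sqrt{k}$ amplified by the $10k$ independent $\sigma_t$'s into the final $k^{3/2}$ that beats the union bound.
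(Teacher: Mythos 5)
Your high-level plan is the same as the paper's: realize each $\sigma_t$ as an independent uniformly random balanced partition of $V_t=[2k+3]\setminus t$, bound the ``bad at $t$'' probability for a fixed $S$, multiply over the $|S|=10k$ independent events $E(t,S)$, and union-bound over the $2^{O(k\log k)}$ sets $S$. Your per-$t$ estimate, though, is where you diverge from the paper and where there is a genuine gap. The paper orders $T_t$ and uses a chain rule over ``special'' elements, showing that $\big|\bigcup_{t'\in T_t}(t'\setminus t)\big|=\Omega(k^{1/3})$ (via $|T_t|\le\binom{r_\ell}{3}$) and that each newly seen element has probability roughly $1/2$ of landing on the ``wrong'' side. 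Your graph formulation -- all components of $G_t$ monochromatic, probability at most $2^c/\binom{2k}{k}$, with $c=2k-\sum(n_i-1)$ controlled by $\sum\binom{n_i}{2}\ge |E(G_t)|$ -- is a legitimate alternative, and the observation that concentration into one component minimizes $\sum(n_i-1)$ is correct.

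The gap is the assertion that ``$G_t$ has at least $4k-1$ edges.'' You have $|T_t|\ge 4k-1$ \emph{triples}, not $4k-1$ distinct edges. Different $t'\in T_t$ can yield coinciding or heavily overlapping sets $t'\setminus t$. Concretely, an adversary can put into $S$ (hence into $T_t$) all 3-subsets of a fixed $W\subseteq V_t$ with $\binom{|W|}{3}\ge 4k-1$, i.e.\ $|W|=\Theta(k^{1/3})$; these $t'$ are pairwise distinct, each has $|t\cap t'|=0$, yet $G_t$ has only $\binom{|W|}{2}=\Theta(k^{2/3})$ edges -- far below $4k-1$ for your range of $k$. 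With only $\Theta(k^{2/3})$ edges, the concentration bound gives $\sum(n_i-1)=\Omega(k^{1/3})$, not $\Omega(\sqrt{k})$, so $\Pr[\text{bad at }t]\le 2^{-\Omega(k^{1/3})}$, not $2^{-\Omega(\sqrt k)}$. The correct input to your minimization step is a lower bound on distinct edges, e.g.\ the Kruskal--Katona shadow bound $|E(G_t)|=\Omega(|T_t|^{2/3})=\Omega(k^{2/3})$ (and for the $|t'\setminus t|=2$ contributions, at most three $t'$ collapse to one pair, giving a separate $\Omega(k)$ bound that the adversary avoids by using only 3-element $t'\setminus t$). With this correction, your framework yields $\Pr[S\text{ bad}]\le 2^{-\Omega(k^{4/3})}$, matching the paper's exponent rather than your claimed $k^{3/2}$. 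Since $k^{4/3}$ still dominates the $O(k\log k)$ union-bound cost for $k$ large enough, the lemma goes through -- but the stated $\sqrt k$ per-$t$ estimate and the $k^{3/2}$ final exponent are not justified as written.
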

Before proving this lemma, let us argue that it implies~\eqref{eqn:prob_good}. Let $\{\sigma_{\pos(H, \boF)} : \pos(H,\boF) \in \binom{[2k+3]}{3}\}$ be the family guaranteed by the lemma. Fix any partition $T$ and any ordering  $\boC=\boF\sqcup \boD_2$. Suppose toward a contradiction that
\[
    \bE_{H\sim \binom{\boF}{3},\boD_1=\sigma_{\pos(H,\boF)}(\boF\setminus H),\boD=\boD_1\sqcup \boD_2}[\eG(T,H,\boD)] >  \frac{10}{k^2}.
\]
As there are $\binom{2k+3}{3} \geq k^3$ many choices of $H$, this implies that there are at most $10k$ choices of $H$ such that $(T,H, \boD)$ is good.  By the above lemma, two of these choices, say $H$ and $H'$ with corresponding triples $(T, H, \boD)$ and $(T, H', \boD')$, must be such that there exist two pairs  $(u_1, v_1), (u_2, v_2) \in H' \setminus H$ satisfying  $1 \leq \sigma_{\pos(H,\boF)}(v_1) \leq k < \sigma_{\pos(H,\boF)}(v_2)  \leq2k $.   
By the definition of $L_B(\boD)$, this means that $L_B(\boD)(v_1) \neq L_B(\boD)(v_2)$. \Cref{lem:not_both_good} now gives the contradiction that not both $(T, H, \boD)$ and $(T,H', \boD')$ can be good.

It remains to prove that there is a family of ``good'' permutations.

\begin{proof}[Proof of \Cref{lem:goodpermutations}]
The proof is by the probabilistic method: we show that a random family is likely to satisfy the statement.

Given two  $t,t'\in \binom{[2k+3]}{3}$, let $E(t,t')$ denote the event that either $|t\cap t'|\ge 2$, or for any two elements $v_1$ and $v_2$ belonging to $t'\setminus t$, either $1\le \sigma_{t}(v_1),\sigma_{t}(v_2)\le k$ or $k+1\le \sigma_{t}(v_1),\sigma_{t}(v_2)\le 2k$. 
Note that if $E(t, t')$ is false then there must be $v_1, v_2 \in t'\setminus t$ such that $1\leq \sigma_{t}(v_1) \leq k < \sigma_{t}(v_2) \leq 2k$. 

Now fix a subset $S$ of $\binom{[2k+3]}{3}$ such that $|S| = 10k$.  We let $E(t,S)$ denote the event that $E(t,t')$ happens for all $t'\in S$ and $E(S)$ denote the event that $E(t,S)$ happens for all $t\in S$. We will prove that $\Pr[E(S)] \leq 2^{-\Omega(k^{4/3})}$. There are 
\[
    \binom{\binom{2k+3}{3}}{10k} = k^{O(k)}
\]
different subsets $S$ of cardinality $10k$. Thus by the union bound,  the event $E(S)$ is false for all subsets $S$ of cardinality $10k$ with probability $1-o_k(1)$. This implies the lemma. Indeed, a random family is likely to satisfy the statement of the lemma.  

It remains to show that  $E(S)$ happens with probability at most $2^{-\Omega(k^{4/3})}$ for a subset $S$ of cardinality $10k$. 
Each bijection $\sigma_{t}$ is chosen independently uniformly at random. 
Moreover, the only randomness that impacts the event $E(t, S)$ is the choice of $\sigma_{t}$. 
It follows that the events in $\{E(t, S) : t\in S\}$ are mutually independent. 
So to prove that $\Pr[E(S)] \leq 2^{-\Omega(k^{4/3})}$, it suffices  to show that $\Pr[E(t,S)]=2^{-\Omega(k^{1/3})}$ for $t \in S$.

There are at most $3k$ different triples $t'$ such that $|t\cap t'|\ge 2$. Since $E(t,t')$ is always true in this case, we can remove such $t'$ and obtain $S'\subseteq S$ such that $|S'|\geq 7k$ and every $t'\in S'$ satisfies $|t \cap t'| \leq 1$. Name the elements of $S' = \{t_1, t_2, \ldots, t_\ell\}$ where $\ell = |S'| \geq 7k$. 

We say that $t_i$ is special if $t_i\setminus (t_1\cup \cdots\cup t_{i-1})\ne \emptyset$. Let $s_i$ denote the number of special elements in $t_1,\ldots,t_{i}$. Then \[r_i:=|t_1\cup \cdots \cup t_{i}|\le 3s_i.\] 
For a special $t_i$, consider an arbitrary element $x$  in $t_i\setminus (t_1\cup \cdots \cup t_{i-1})$. If we condition on  the events $E(t,t_1)$, $E(t, t_2), \ldots,E(t, t_{i-1})$ happening, the probability that $1\le \sigma_{t}(x)\le k$ (or $k+1\le\sigma_{t}(x)\le 2k$) is at most $\frac{k}{2k-r_i}\le\frac{k}{2k-3s_i}$. 
Moreover, since a set of size $r$ can have at most $\binom{r}{3}$ different subsets of size $3$, we have $7k \leq \ell\le \binom{r_i}{3}\le \binom{3s_\ell}{3}$, which implies that $s_\ell=\Omega\left(k^{1/3}\right)$. Finally, we apply the chain rule and obtain that
\begin{align*}
\Pr[E(t,S)]&=\Pr[E(t,t_1)\land\cdots\land E(t,t_{\ell})]\\
&=\prod_{i=1}^{\ell} \Pr[E(t,t_i)|E(t,t_1)\land\cdots\land E(t,t_{i-1})]\\
&\le\prod_{1\le i\le \ell, i\text{ special}} \Pr[E(t,t_i)|E(t,t_1)\land\cdots\land E(t,t_{i-1})]\\
&\le \prod_{0\le i\le \min(s_{\ell}, k/4)} \frac{k}{2k-3i}\\
&=2^{-\Omega(k^{1/3})}.
\end{align*}

As aforementioned, this implies that $\Pr[E(S)] \leq 2^{-\Omega(k^{4/3})}$, which in turn implies the lemma by taking the union bound over all $S$ with $|S| = 10k$.
\end{proof}

\subsection{Contributions of Bad Triples}\label{subsec:bad}

Finally, let us bound the contribution of bad triples. We restate the desired inequality here.
\begin{equation*}
    \bE_T[\bE_{H,\boD}[\eB(T,H,\boD)\cdot p_{\cM,T}(M_3(H\cup \boD))\cdot p_{\cL,T}(L_3(H\cup \boD))]]\le \epsilon\cdot \mu_3(\cR)+2^{-\eta m}. \tag{\ref{eqn:bad}}
\end{equation*}

{We need the following lemma from~\cite{Rothvoss17}, which was implicitly stated in~\cite{RAZBOROV1992385} to prove the set disjointness lower bound, and has been widely used in various forms in other works as well.} The lemma is essentially based on an entropy counting argument.
\begin{lemma}[\cite{Rothvoss17}]\label{lem:biased}
For every $\alpha,\beta>0$ and $q\in \bN$, there is a constant $\delta:=\delta(\alpha,\beta,q)$ such that the following holds. Let $X_1,\ldots,X_m$ be sets of size $|X_i|\le q$, and denote $X:=X_1\times\cdots\times X_m$. Let $Y\subseteq X$ be a subset of size $|Y|\ge 2^{-\delta m}|X|$. We say an index $i\in [m]$ is $\alpha$-unbiased, if
\begin{equation*}
    \frac{1}{1+\alpha}\Pr_{y\in X}[y\in Y]\le \Pr_{y\in X}[y\in Y:y_i=j]\le (1+\alpha)\Pr_{y\in X}[y\in Y].
\end{equation*}
for every $j\in X_i$. Then at most $\beta m$ indices are $\alpha$-biased.
\end{lemma}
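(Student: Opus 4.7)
The plan is to prove Lemma~\ref{lem:biased} via a standard entropy / KL-divergence counting argument. Let $y$ be uniformly distributed on $Y$, and for each $i\in[m]$ let $p_i$ denote the induced marginal distribution of $y_i$ on $X_i$, and let $u_i$ denote the uniform distribution on $X_i$. The whole proof proceeds by translating the $\alpha$-biased condition into a quantitative lower bound on $D(p_i\|u_i)$, summing these divergences, and comparing to an upper bound obtained from $|Y|\ge 2^{-\delta m}|X|$.

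First I would rewrite the bias condition in terms of $p_i$. A short Bayes-rule calculation shows
\[
\Pr_{y\in X}[y\in Y:y_i=j]\;=\;|X_i|\cdot p_i(j)\cdot \Pr_{y\in X}[y\in Y],
\]
so that index $i$ is $\alpha$-unbiased exactly when $\tfrac{1}{(1+\alpha)|X_i|}\le p_i(j)\le \tfrac{1+\alpha}{|X_i|}$ for every $j\in X_i$. Hence $i$ being $\alpha$-biased means some $j$ satisfies $|p_i(j)-1/|X_i||\ge \tfrac{\alpha}{(1+\alpha)|X_i|}$, which gives $\|p_i-u_i\|_1\ge \tfrac{2\alpha}{(1+\alpha)q}$. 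By Pinsker's inequality,
\[
D(p_i\|u_i)\;\ge\;\tfrac{1}{2}\|p_i-u_i\|_1^{\,2}\;\ge\;c(\alpha,q)
\]
for a constant $c(\alpha,q)>0$ depending only on $\alpha$ and $q$.

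Next I would use the global entropy bound. Since $y$ is uniform on $Y$, $H(y)=\log|Y|\ge \log|X|-\delta m$. Subadditivity of entropy and the identity $H(y_i)=\log|X_i|-D(p_i\|u_i)$ give
\[
\log|Y|\;\le\;\sum_{i=1}^m H(y_i)\;=\;\log|X|-\sum_{i=1}^m D(p_i\|u_i),
\]
hence $\sum_{i=1}^m D(p_i\|u_i)\le \delta m$. Combining this with the per-index lower bound, if more than $\beta m$ indices are $\alpha$-biased then $\sum_i D(p_i\|u_i)\ge \beta m\cdot c(\alpha,q)$, which contradicts the above upper bound provided we pick $\delta:=\tfrac{1}{2}\beta\cdot c(\alpha,q)$. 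This yields the conclusion.

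I do not expect a serious obstacle here; the only mildly delicate step is turning the multiplicative bias condition into a KL lower bound, and Pinsker handles that cleanly once we observe that the condition forces at least one coordinate of $p_i$ to deviate from $1/|X_i|$ by $\Omega_\alpha(1/q)$ in $\ell_1$. The constant $\delta(\alpha,\beta,q)$ one extracts is of order $\beta\alpha^2/q^2$, which is perfectly adequate for the applications in the paper.
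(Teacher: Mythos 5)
Your proof is correct and follows precisely the entropy-counting argument the paper attributes to Rothvoss (and implicitly to Razborov): you rewrite the bias condition as a deviation of the marginal $p_i$ from uniform, lower-bound $D(p_i\|u_i)$ via Pinsker, and compare against the global bound $\sum_i D(p_i\|u_i)\le \log(|X|/|Y|)\le \delta m$ coming from subadditivity of entropy. The only cosmetic nit is that if entropy and $\delta m$ are measured in bits then Pinsker's constant is $\tfrac{1}{2\ln 2}$ rather than $\tfrac{1}{2}$, but this has no effect on the existence of a suitable $\delta(\alpha,\beta,q)>0$.
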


To prove \eqref{eqn:bad}, we use the same framework as that in Rothvoss's paper. The rough idea is as follows. First, we swap the order of the expectations in the left-hand side of \eqref{eqn:bad}. 
For each fixed $3$-violation entry $(M,L)$, we sample the uniform distribution over the set of all triples $(T,H,\boD)$ that generate $(M,L)$.
Let $J$ denote the set of indices $i\in [m]$ such that $L_{\boA_i}=L_B(\boA_i)$. Then we pick a uniformly random $i\in J\cup \{0\}$, and exchange $\boD$ with $\boA_j$ (we let $\boA_0:=\boD$).
Observe that we still obtain a uniform distribution over tuples that generate $(M,L)$.
If $(T,H,\boD)$ is bad, then it is not small, so by \Cref{lem:biased}, at most $\epsilon'm$ indices are biased. 
The ideal situation is that $|J|\ge \frac{\epsilon'}{\epsilon}{m}$ so that at most an $\epsilon$-fraction of triples are bad.
However, it is possible that $|J|$ is small, say $o(m)$. 
In that case, we cannot bound the contribution since we only have an $O(m)$ upper bound on the number of biased indices. 
To get around this, we need the notions of balanced matchings and balanced labelings. 

Given a partition $T$, we say a matching $M\in \Mall(T)$ is balanced w.r.t. $T$ if $M_{\boB_i}=M_A(\boB_i)$ for at least a $\frac{1}{2c_k}$-fraction of indices $i\in [m]$, where $c_k:=(4k)!$.
Let $\BalM(T):=\{M\in \Mall(T):|\{i\in [m]:M_{\boB_i}=M_A(\boB_i)\}|\ge  \frac{m}{2c_k}\}$ denote the set of all such matchings. 
Likewise, we say a labeling $L\in \Lall(T)$ is balanced w.r.t. $T$ if $L_{\boA_i}=L_B(\boA_i)$ for at least a $\frac{1}{2c_k}$-fraction of $i\in [m]$. 
Let $\BalL(T):=\{L\in \Lall(T):|\{i\in [m]:L_{\boA_i}=L_B(\boA_i)\}| \ge \frac{m}{2c_k}\}$ denote the set of all such labelings.
Note that the number of perfect matchings consistent with some particular labeling on a block of size $4k$ is exactly $c_k$ and the number of labelings consistent with some particular perfect matching on a block of size $4k$ is $2^{4k}<c_k$.
Recall the process that generates $\mu_3$. We first sample the triple $(T,H,\boD)$, and then we sample $(M,L)$ according to a product distribution: $M\sim \Mall(T,M_3(H\cup \boD)),L\sim \Lall(T,L_3(H\cup \boD))$. 
Observe that $M_{\boA_1},\ldots,M_{\boA_m},L_{\boB_1},\ldots,L_{\boB_m}$ are independent. Moreover, $\bE[|\{i\in [m]:L_{\boA_i}=L_B(\boA_i)\}|]>\bE[|\{i\in [m]:M_{\boB_i}=M_A(\boB_i)\}|]=\frac{m}{c_k}$. 
By the Chernoff bound and union bound, for sufficiently large $m$, we thus have
\begin{equation*}
\bE_{M\sim \Mall(T,M_3(H\cup \boD)),L\sim \Lall(T,L_3(H\cup \boD))}\left[\bone_{M\notin \BalM(T)\lor L\notin \BalL(T)}\right]
\le 2e^{-\frac{m}{8c_k}}<2^{-\eta m},
\end{equation*}
where $\eta:=\eta(k)=\frac{1}{16c_k}$.
In other words, for any fixed partition $T$, the contribution of unbalanced entries is tiny. Thus we only need to bound the contribution of balanced entries:
\begin{align}\label{eqn:ctrbtn_balanced}
&\bE_{T,H,\boD}\left[\bE_{M\sim \Mall(T,M_3(H\cup \boD)),L\sim \Lall(T,L_3(H\cup \boD))}\left[\eB(T,H,D)\cdot \bone_{(M,L)\in \cR \land M\in \BalM(T)\land L\in \BalL(T)}\right]\right] \nonumber\\
&\le \epsilon\cdot \mu_3(\cR).
\end{align}
By swapping the order of expectations, we get
\begin{equation}\label{eqn:swap_expect}
    \text{L.H.S. of }\eqref{eqn:ctrbtn_balanced}=\bE_{(M,L)\sim \mu_3}\left[\bone_{(M,L)\in \cR}\cdot \bE_{(T,H,\boD)\sim \cP(M,L)}[\eB(T,H,\boD)\cdot \bone_{M\in \BalM(T)}\cdot\bone_{L\in \BalL(T)}]\right],
\end{equation}
where $\cP(M,L)$ is defined as the set of triples $(T,H,\boD)$ that could generate $(M,L)$, i.e.
\begin{equation*}
\cP(M,L):=\{(T,H,\boD):M\in \Mall(T,M_3(H\cup \boD)),L\in \Lall(T,L_3(H\cup \boD))\}.    
\end{equation*}

The reason why the inner expectation of the right-hand side of \eqref{eqn:swap_expect} is taking the uniform distribution is given by Bayes' rule and the fact that we sample the triples $(T,H,\boD)$ with equal probability when we generate $\mu_3$:
\begin{equation*}
    \Pr[(T,H,\boD) \mid (M,L)]=\frac{\Pr[(T,H,\boD)]\Pr[(M,L)\mid (T,H,\boD)]}{\Pr[(M,L)]}=\frac{\Pr[(T,H,\boD)]}{ ((4k)!2^{4k})^m\mu_3(M,L)} .
\end{equation*}

To complete the proof of \eqref{eqn:bad}, it suffices to bound the proportion of bad triples for each pair $(M,L)$:
\begin{equation}\label{eqn:tbc_of_entry}
    \bE_{(T,H,\boD)\sim \cP(M,L)}[\eB(T,H,\boD)\cdot \bone_{M\in \BalM(T)}\cdot\bone_{L\in \BalL(T)}]\le \epsilon.
\end{equation}

Indeed, \eqref{eqn:ctrbtn_balanced} directly follows from \eqref{eqn:tbc_of_entry} since
\begin{equation*}
    \eqref{eqn:swap_expect}\le \bE_{(M,L)\sim \mu_3}\left[\bone_{(M,L)\in \cR}\cdot \epsilon\right]=\epsilon\cdot\mu_3(\cR).
\end{equation*}

Given a triple $(T,H,\boD)\in \cT$, we say $(T,H,\boD)$ is $\cM-$bad (resp. $\cL-$bad), if $(T,H,\boD)$ is neither small nor $\cM-$good (resp. $\cL-$good). Let $\cM-\eB(T,H,\boD)$ (resp. $\cL-\eB(T,H,\boD)$) denote its corresponding indicator function. Note that if $(T,H,\boD)$ is bad, it is either $\cM-$bad or $\cL-$bad. Then we have
\begin{align*}
    \text{L.H.S. of }\eqref{eqn:tbc_of_entry} &\le \bE_{(T,H,\boD)\sim \cP(M,L)} [\cM-\eB(T,H,\boD)\cdot \bone_{M\in \BalM(T)}] \\
    & +\bE_{(T,H,\boD)\sim \cP(M,L)} [\cL-\eB(T,H,\boD)\cdot \bone_{L\in \BalL(T)}].
\end{align*}

We will bound the contribution of $\cM$-bad triples and $\cL$-bad triples separately.
\begin{lemma}\label{lem:Lbad}
Let $(M,L)$ be a $3$-violation entry in the support of $\mu_3$. Then
\begin{equation*}
    \bE_{(T,H,\boD)\sim \cP(M,L)} [\cL-\eB(T,H,\boD)\cdot \bone_{L\in \BalL(T)}]\le \frac{\epsilon}{2}.
\end{equation*}
\end{lemma}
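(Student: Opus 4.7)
My plan is to follow the swap trick sketched earlier in \Cref{subsec:bad} and reduce \Cref{lem:Lbad} to a cleanly formulated biasedness statement that \Cref{lem:biased} can dispatch. Concretely, I will fix $(M,L)$ in the support of $\mu_3$, sample $(T,H,\boD)\sim \cP(M,L)$ uniformly, let $J=J(T,L):=\{i\in [m]:L_{\boA_i}=L_B(\boA_i)\}$, pick $i\in J\cup\{0\}$ uniformly at random, and exchange $\boD$ with $\boA_i$ (where $i=0$ leaves the triple unchanged) to produce a new triple $(T^{(i)},H,\boD^{(i)})$. Three short verifications set up the reduction: (a) the condition $i\in J$ is exactly what is needed so that the labeling on the new $C$-part of $T^{(i)}$ equals $L_3(H)\cup L_B(\boD^{(i)})$ and the odd-parity constraint is preserved, so $(T^{(i)},H,\boD^{(i)})\in \cP(M,L)$; (b) $J(T^{(i)},L)=J(T,L)$, so the event $L\in \BalL(T)$ is preserved under the swap; (c) a short preimage-counting argument then shows that $(T^{(i)},H,\boD^{(i)})$ is again uniform in $\cP(M,L)$. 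This recasts the LHS as $\bE_{(T,H,\boD)}[\bone_{L\in\BalL(T)}\cdot \bE_i[\bone_{(T^{(i)},H,\boD^{(i)})\text{ is $\cL$-bad}}]]$.

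Next I would introduce the $(m+1)$-fold product $\tilde{X}=X_1\times\cdots\times X_m\times X_\ast$, where $X_j$ is the set of labelings of $\boA_j$ consistent with $M_A(\boA_j)$ and $X_\ast$ is the same for $\boD$; each factor has size $2^{4k}$. Let $\tilde{Y}\subseteq\tilde{X}$ be the tuples whose induced labeling (with fixed $L_{\boB_i}=L_B(\boB_i)$ and $L_H=L_3(H)$) lies in $\cL$. The key correspondence is that $p_{\cL,T^{(i)}}(L_3(H)\cup L'')=\Pr_{\tilde{y}}[\tilde{y}\in\tilde{Y}\mid \tilde{y}_{b_i}=L'']$, where $b_i$ denotes the special coordinate ($b_0=\ast$, and $b_i=i$ for $i\in J$). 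Hence if coordinate $b_i$ is $\alpha$-unbiased in the sense of \Cref{lem:biased} then any two conditional probabilities at coordinate $b_i$ agree up to a factor of $(1+\alpha)^2$, so choosing $\alpha=\sqrt{1+\epsilon}-1$ makes $\alpha$-unbiasedness of $b_i$ imply $\cL$-goodness of $(T^{(i)},H,\boD^{(i)})$. Contrapositively, an $\cL$-bad swap-image forces $b_i$ to be $\alpha$-biased.

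Now I would apply \Cref{lem:biased} with $\alpha$ as above, $\beta=\epsilon/(8c_k)$, and $q=2^{4k}$, obtaining the constant $\delta_0=\delta_0(\alpha,\beta,q)$; I set the smallness threshold $\delta:=\delta_0/2$. Crucially, whenever some $(T^{(i)},H,\boD^{(i)})$ is $\cL$-bad it is in particular not small, which forces $\Pr[\tilde{y}\in\tilde{Y}\mid \tilde{y}_{b_i}=L_B(b_i)]>2^{-\delta m}$; averaging over this single coordinate then yields $|\tilde{Y}|/|\tilde{X}|>2^{-\delta m}/2^{4k}\geq 2^{-\delta_0 m}$ for sufficiently large $m$. \Cref{lem:biased} therefore gives at most $\beta(m+1)\leq 2\beta m$ biased coordinates of $\tilde{X}$, and since the map $i\mapsto b_i$ is injective, at most $2\beta m$ choices of $i$ can produce an $\cL$-bad swap-image. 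On the event $L\in \BalL(T)$ we have $|J|+1\geq m/(2c_k)$, so the inner expectation is at most $2\beta m/(m/(2c_k))=4c_k\beta=\epsilon/2$, yielding the claim.

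The main subtle point I anticipate is correctly checking the uniformity of the swap and establishing the dictionary between $\cL$-goodness and coordinate unbiasedness (including the passage from not-small of a swap-image to not-small of $|\tilde{Y}|/|\tilde{X}|$, which loses only a benign $2^{4k}$ factor). A pleasant feature I expect to exploit is that one never needs to split on whether the originally sampled triple $(T,H,\boD)$ is small: the product $(\tilde{X},\tilde{Y})$ is attached to the entire swap orbit, and the non-smallness that triggers \Cref{lem:biased} is supplied for free by the very $\cL$-bad swap-image we are trying to count.
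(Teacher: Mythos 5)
Your proof is correct and follows essentially the same recipe as the paper: the swap trick over $J\cup\{0\}$, the verification that the swap preserves membership in $\cP(M,L)$ and is measure-preserving, and the reduction to \Cref{lem:biased} applied to the product of per-block labeling spaces, with ``$\cL$-bad implies biased.''

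The one place where you organize things a little differently is the definition of the product pair on which \Cref{lem:biased} is invoked. The paper fixes a single bad index $j$ and defines its $Y$ as the set of restrictions of labelings in $\Lall(T^j, L_3(H\sqcup\boD^j))\cap\cL$, so that one coordinate is pinned to $L_B(\boA_j)$; the not-small bound on $p_{\cL,T^j}(\cdot)$ then lower-bounds $|Y|/|X|$ directly (up to a $|X_j|\le c_k$ factor). You instead take $\tilde Y$ with no pinned coordinate, so that the dictionary $p_{\cL,T^{(i)}}(L_3(H)\cup L'')=\Pr[\tilde y\in\tilde Y\mid \tilde y_{b_i}=L'']$ is exact for every $i$, and you recover the density lower bound on $\tilde Y$ by conditioning on the single coordinate $b_j$ and paying a factor $|X_{b_j}|=2^{4k}$. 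This is a cleaner bookkeeping choice: it makes the ``unbiased $\Rightarrow$ good'' implication literally a restatement of the unbiasedness definition, rather than requiring a slightly informal identification as in the paper's write-up, and it lets you avoid the explicit case split on $|J(T,L)|$ (the indicator $\bone_{L\in\BalL(T)}$ and the presence of at least one bad swap-image together supply what you need). Your parameter choices ($\alpha=\sqrt{1+\epsilon}-1$, $\beta=\epsilon/(8c_k)$, $q=2^{4k}$, $\delta=\delta_0/2$, absorbing the $m+1$-vs-$m$ slack into $\beta$) are equivalent in strength to the paper's ($\alpha=\epsilon/3$, $\beta=\epsilon/(4c_k)$, $q=c_k$, $\delta=0.9\delta_0$). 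Neither route buys anything essentially new, but yours is somewhat more transparent.
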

\begin{lemma}\label{lem:Mbad}
Let $(M,L)$ be a $3$-violation entry in the support of $\mu_3$. Then
\begin{equation*}
    \bE_{(T,H,\boD)\sim \cP(M,L)} [\cM-\eB(T,H,\boD)\cdot \bone_{M\in \BalM(T)}]\le \frac{\epsilon}{2}.
\end{equation*}
\end{lemma}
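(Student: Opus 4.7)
The plan is to carry out the swap-trick sketched just before the lemma. Fix a $3$-violation entry $(M,L)$. The key observation is that when an index $i\in[m]$ satisfies $M_{\boB_i}=M_A(\boB_i)$, the block $\boB_i$ satisfies the same local constraints with respect to $(M,L)$ as $\boD$ does, namely $M_\bullet=M_A(\bullet)$ and $L_\bullet=L_B(\bullet)$; swapping their roles therefore produces another triple in $\cP(M,L)$. I would formalize this by reparameterizing $\cP(M,L)$ via an enlarged partition $\tilde T=(A,H,\tilde B)$ whose $B$-part $\tilde B=\{\boB_1,\ldots,\boB_{m+1}\}$ is an unordered collection of $m+1$ blocks each satisfying the usual $\boB$-constraints with respect to $(M,L)$, together with a ``marking'' $i\in\tilde J(\tilde T):=\{j\in[m+1]:M_{\boB_j}=M_A(\boB_j)\}$ designating which block plays the role of $\boD$. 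Uniform sampling from $\cP(M,L)$ is equivalent to uniform sampling over valid pairs $(\tilde T,i)$, and the event $M\in\BalM(T)$ becomes $|\tilde J(\tilde T)|\ge m/(2c_k)+1$, a condition depending only on $\tilde T$.

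It therefore suffices to bound, uniformly over $\tilde T$ with $|\tilde J(\tilde T)|\ge m/(2c_k)+1$, the fraction of markings $i\in\tilde J(\tilde T)$ yielding an $\cM$-bad triple. Fix such a $\tilde T$ and let $X_j$ denote the set of perfect matchings of $G_{\boB_j}$ consistent with $L_B(\boB_j)$, so $|X_j|=c_k$, and set $\tilde X:=\prod_{j=1}^{m+1}X_j$. Let $Y\subseteq\tilde X$ be the set of tuples $(\tilde M_1,\ldots,\tilde M_{m+1})$ whose induced full matching (taking $M_A(\boA_j)$ on $\boA_j$, $M_3(H)$ on $H$, and $\tilde M_j$ on $\boB_j$) lies in $\cM$. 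A direct calculation yields
\[
    p_{\cM,T^{(i)}}(M_3(H)\cup M')=\Pr_{\tilde M\in\tilde X}\bigl[\tilde M\in Y\,\bigm|\,\tilde M_i=M'\bigr]\qquad(M'\in X_i),
\]
so $\cM$-goodness at marking $i$ is implied by coordinate $i$ being $\alpha$-unbiased for $Y$ in the sense of Lemma~\ref{lem:biased}, provided $(1+\alpha)^2\le 1+\epsilon$; I take $\alpha:=\sqrt{1+\epsilon}-1$.

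I then apply Lemma~\ref{lem:biased}. If every $i\in\tilde J$ gives a small triple, the indicator $\cM-\eB$ vanishes on $\tilde J$ and we are done. Otherwise, pick $i^*\in\tilde J$ whose triple is not small: the not-small condition on the $\cM$-side gives $\Pr[\tilde M\in Y\mid \tilde M_{i^*}=M_A(\boB_{i^*})]>2^{-\delta m}$, whence $|Y|/|\tilde X|\ge 2^{-\delta m}/c_k$. Setting $\beta:=\epsilon/(8c_k)$, $q:=c_k$, and choosing $\delta$ sufficiently small in terms of $\alpha,\beta,q$ so that $|Y|/|\tilde X|$ exceeds the threshold produced by Lemma~\ref{lem:biased} for all sufficiently large $m$, the lemma supplies at most $\beta(m+1)$ $\alpha$-biased coordinates in $[m+1]$. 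Hence the fraction of $\cM$-bad markings in $\tilde J(\tilde T)$ is at most $\beta(m+1)/(m/(2c_k)+1)\le 4\beta c_k\le \epsilon/2$ for sufficiently large $m$, which is exactly the required bound.

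The hardest part is setting the scene correctly: verifying that the bijection works, i.e.\ that the marking condition $i\in\tilde J(\tilde T)$ exactly captures when $\boB_i$ can be freely exchanged with $\boD$; checking that balancedness is an invariant of $\tilde T$; and matching up the constants, where the factor $c_k$ incurred in passing from one conditional probability to the unconditional $\Pr[\tilde M\in Y]$ must be absorbed by the choice of $\delta$, and the squaring factor translating $\alpha$-unbiasedness into $\cM$-goodness must be absorbed by the choice of $\alpha$. The symmetric argument, swapping $\boD$ with an $A$-block $\boA_i$ satisfying $L_{\boA_i}=L_B(\boA_i)$ and using $\BalL$ in place of $\BalM$, yields Lemma~\ref{lem:Lbad}.
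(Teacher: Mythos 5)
Your proof is correct and follows essentially the same route the paper takes for the companion Lemma~\ref{lem:Lbad} (the paper then says Lemma~\ref{lem:Mbad} ``can be proved analogously''): swap $\boD$ with a $B$-block on which $M$ agrees with $M_A$, reduce to a uniform choice of such a block, split on whether the balancedness event forces the number of swap candidates to be at least $\Omega(m/c_k)$, and apply the entropy-counting Lemma~\ref{lem:biased} to bound the number of biased coordinates. The one cosmetic difference is that you argue uniformity via a clean bijection between $\cP(M,L)$ and marked enlarged partitions $(\tilde T,i)$, whereas the paper argues it by counting preimages of the random-swap map; these are equivalent, and your packaging is arguably the more transparent of the two. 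Your constants ($\alpha=\sqrt{1+\epsilon}-1$, $\beta=\epsilon/(8c_k)$, the loss of a $c_k$ when passing from the conditional probability at $i^*$ to the unconditional $|Y|/|\tilde X|$) all check out and are absorbed by the choice of $\delta$ for large $m$, matching the paper's treatment.
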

Here we will only present the proof for \Cref{lem:Lbad}. \Cref{lem:Mbad} can be proved analogously.

\begin{proof}[Proof of \Cref{lem:Lbad}]
We first sample a uniform triple $(T,H,\boD)\sim \cP(M,L)$. In order to show the desired bound, we introduce one more step, which yields the same uniform distribution. Let $J(T,L)=\{0\le i\le m:L_{\boA_i}=L_B(\boA_i)\}$ (we define $\boA_0:=\boD$ so $0$ is guaranteed to be in $J(T,L)$). 
Then we pick $i\in J(T,L)$ uniformly at random. Finally, we exchange $\boA_i$ with $\boD$ (if $i=0$, nothing happens) and get a new triple $(T',H,\boD'=\boA_i)$. 
We claim that the distribution of $(T',H,\boD')$ is still uniform over $\cP(M,L)$. In fact, for any $(T',H,\boD')\in \cP(M,L)$, there are exactly $|J(T',L)|$ many triples $(T,H,\boD)\in \cP(M,L)$ that could generate $(T',H,\boD')$ by following the above exchange process, and for each of those triples, the probability that $(T',H,\boD')$ is sampled is exactly $\frac{1}{|J(T',L)|}$.

Then we can deduce that
\begin{align*}
&\bE_{(T,H,\boD)\sim \cP(M,L)} [\cL-\eB(T,H,\boD)\cdot \bone_{L\in \BalL(T)}]\\
=& \bE_{(T,H,\boD)\sim \cP(M,L)} \left[\bE_{i\in J(T,L)}\left[\cL-\eB(T^{i},H,\boD^{i})\cdot \bone_{L\in \BalL(T^{i})}\right]\right],
\end{align*}
where $(T^{i},H,\boD^{i})$ is the triple induced by swapping the role of $\boD$ and $\boA_i$ in $(T,H,\boD)$.

It suffices to show for any fixed $(T,H,\boD)\in \cP(M,L)$, 
\begin{equation}\label{eqn:exchange}
\bE_{i\in J(T,L)}[\cL-\eB(T^{i},H,\boD^{i})\cdot \bone_{L\in \BalL(T^{i})}] \le \epsilon/2.    
\end{equation}

If $|J(T,L)|<\frac{m}{2c_k}$, then $L\notin \BalL(T^{i})$ for all $i\in J(T,L)$. We can then conclude that the left-hand side of \eqref{eqn:exchange} is $0$, as desired.

On the other hand, if $|J(T,L)|\ge \frac{m}{2c_k}$, we can assume that $(T^{j},H,\boD^{j})$ is bad for some $j\in J(T,L)$ because otherwise, the left-hand side of \eqref{eqn:exchange} is again $0$. Define $X_i:=\{\text{all labelings on }\boA_i$ consistent with $M_A(\boA_i)\}$, and $X:=X_0\times \cdots \times X_{m}$. Let 
\begin{equation*}
Y:=\{L_{\boA_0\cup\cdots\cup \boA_{m}}:L\in \Lall(T^{j},L_3(H\sqcup \boD^{j}))\cap \cL\}\subseteq X
\end{equation*} 
be the set of labelings $L$'s restrictions on $\boD$ and A-blocks, where $L$ is sampled w.r.t. $(T,H,\boD)$ when generating $\mu_3$ and $L\in \cL$ (recall $\cL$ is the set of columns of the fixed rectangle $\cR=\cM\times \cL$). Then we have
\begin{equation*}
|Y|\ge 2^{-\delta m}\prod_{i\ne j} |X_i|\ge 2^{-1.1\delta m} |X|,
\end{equation*} 
where the first inequality follows from the assumption that $(T^{j},H,\boD^{j})$ is not small, and the second inequality follows from the simple observation that $2^{0.1\delta m}>c_k\ge |X_j|$ when $m$ is sufficiently large.

Choosing $\delta:=0.9\delta(\frac{\epsilon}{3},\frac{\epsilon}{4c_k},c_k)$ in \Cref{lem:biased}, the number of $\frac{\epsilon}{3}$-biased indices is bounded by $\frac{\epsilon m}{4c_k}$, which implies that at most an $\frac{\epsilon}{2}$-fraction of indices $i\in J(T,L)$ are $\frac{\epsilon}{3}$-biased. 

To complete the proof, we only need to show that for any $\frac{\epsilon}{3}$-unbiased index $i$, $(T^{i},H,\boD^{i})$ is $\cL$-good. Indeed, if $i$ is $\frac{\epsilon}{3}$-unbiased, for any labeling $L'$ on $G_{\boA_i}$ consistent with $M_A(\boA_i)$, we have $p_{\cL,T^{i}}(L_3(H)\cup L')\ge \frac{1}{1+\epsilon/3}\Pr_{y\sim X}[y\in Y]$ and $p_{\cL,T^{i}}(L_3(H\cup \boD^{i}))\le (1+\epsilon/3)\Pr_{y\sim X}[y\in Y]$. Thus
\begin{equation}
    p_{\cL,T^{i}}(L_3(H\cup \boD^{i}))\le (1+\epsilon) p_{\cL,T^{i}}(L_3(H)\cup L')
\end{equation}
since $(1+\epsilon/3)^2\le 1+\epsilon$.

Therefore, for at most an $\frac{\epsilon}{2}$-fraction of indices $i\in J(T,L)$, $(T^i,H,\boD^i)$ is $\cL$-bad, which directly implies \eqref{eqn:exchange}.

In conclusion, no matter the size of $J(T,L)$, \eqref{eqn:exchange} always holds. So the lemma follows.
\end{proof}

\section{Polytope Feasibility} \label{sec:append}

In this section we prove that our relaxation is infeasible if and only if the bipartite graph $G$ has no perfect matching with an odd number of red edges (called an \textit{odd-red} PM). 
Recall that the feasibility linear program we used is the following. 

\begin{align*}
    \sum_{e\in \delta(u)} x_{e} &= 1 \quad \mbox{for all } u \in U\cup V \\
    \sum_{e \in E_L} x_{e} &\geq 1 \quad \mbox{for all } L \in \Lall(G) \\
    x_{e} &\geq 0 \quad \mbox{for all $e \in E$}.
\end{align*}

%\begin{align*}
%    \sum_{i: (i,j) \in E} x_{i,j} &= 1 \quad \mbox{for all } j \in V \\
%    \sum_{(i,j) \in E \text{ violates } L} x_{i,j} &\geq 1 \quad \mbox{for all } L \in \Lall(G) \\
%    x_{i,j} &\geq 0 \quad \mbox{for all $(i,j) \in E$}.
%\end{align*}
Here, labelings $\Lall(G)$ are as defined in Section~\ref{sec:relax-poly}. In other words, we will prove that $G$ has no odd-red PM if and only if there is a labeling such that all edges of $G$ that are in some perfect matching (let us call them the \textit{relevant edges}) are consistent with this labeling, proving the following.

\begin{theorem}\label{thm:feasible-poly}
The above LP is feasible if and only if $G$ has an odd-red perfect matching.
\end{theorem}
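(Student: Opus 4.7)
One direction is routine: for an odd-red perfect matching $M$, the incidence vector $\chi^M$ satisfies the degree constraints, and by Claim~\ref{clm:parity} we have $|M \cap E_L| \geq 1$ for every $L \in \Lall(G)$, so the LP is feasible. For the converse I would prove the equivalent characterization stated in the introduction: $G$ has no odd-red PM if and only if there exists $L^* \in \Lall(G)$ consistent with every relevant edge (an edge lying in some PM). The existence of such an $L^*$ immediately implies LP infeasibility, because by Birkhoff--von Neumann every LP-feasible $x$ is a convex combination of PMs of $G$ and hence is supported on relevant edges, giving $x(E_{L^*}) = 0 < 1$. One direction of the characterization is easy: if $L^*$ exists, then any odd-red PM $M$ would have all its edges relevant and therefore outside $E_{L^*}$, contradicting Claim~\ref{clm:parity}.

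For the hard direction, assume $G$ has a PM but none with an odd number of red edges (the no-PM case is trivial since then there are no relevant edges). I would construct $L^*$ by a potential-function argument on the relevant subgraph $G^*$: fix a base vertex in each connected component of $G^*$ and set $L^*(v)$ to be the parity of the number of blue edges along any path in $G^*$ from the base vertex to $v$. By construction, traversing an edge flips $L^*$ iff the edge is blue, so $L^*$ is consistent with every edge of $G^*$ (blue endpoints get different labels, red endpoints get the same). Two things must be checked: well-definedness of $L^*$, and that $|L^{*-1}(1)| \equiv n \pmod 2$. For the parity, I would sum over any even-red PM $M_0$: each red matching edge contributes $0 \pmod 2$ to $|L^{*-1}(1)|$ (its endpoints are equally labeled) while each blue matching edge contributes $1$, yielding $|L^{*-1}(1)| \equiv n - r(M_0) \equiv n \pmod 2$ because $r(M_0)$ is even by assumption.

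Well-definedness requires every cycle in $G^*$ to have an even number of blue edges; since $G$ is bipartite, cycles have even length, and this is equivalent to every cycle having an even number of red edges. This is the main technical step and where I expect the bulk of the work to lie. I would first handle $M_0$-alternating cycles via a swap argument: for any PM $M_0$ and any $M_0$-alternating cycle $C$, the set $M_0 \triangle C$ is again a PM, and since both $M_0$ and $M_0 \triangle C$ are even-red by assumption, $r(C) \equiv r(M_0) + r(M_0 \triangle C) \equiv 0 \pmod 2$. Extending this from alternating cycles to arbitrary cycles of $G^*$ uses that the $\mathbb{F}_2$-cycle space of the matching-covered bipartite graph $G^*$ is spanned by $M_0$-alternating cycles. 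I would establish this via the bipartite ear decomposition of matching-covered graphs (Hetyei/Lov\'asz--Plummer), or alternatively by an inductive symmetric-difference argument that repeatedly absorbs a non-$M_0$ edge $e$ of a given cycle into an $M_0$-alternating cycle through $e$ (which exists by Dulmage--Mendelsohn: take any PM containing $e$ and intersect its symmetric difference with $M_0$ with the cycle containing $e$). This structural step, which requires a careful reduction to the span of alternating cycles, is the main obstacle.
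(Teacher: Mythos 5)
Your proposal is correct and, at the level of key ideas, arrives at the same two pillars as the paper's proof — the swap argument showing that every $M_0$-alternating cycle in the relevant subgraph $G^*$ has an even number of red edges, and a reduction of arbitrary cycles of $G^*$ to $M_0$-alternating ones — but the packaging of the reduction is genuinely different. The paper orients $G^*$ (matching edges $U\to V$, non-matching $V\to U$), proves strong connectivity of each non-trivial component, builds a BFS tree, fixes a tree-consistent labeling, and shows that any violating non-tree edge yields a directed closed walk whose Eulerian decomposition into (automatically alternating) directed cycles contains one with an odd number of violating edges, hence an odd-red alternating cycle. You instead invoke the structural fact that the $\mathbb{F}_2$-cycle space of a matching-covered bipartite graph is spanned by $M_0$-alternating cycles (via ear decomposition), which makes the potential-function labeling well-defined in one stroke. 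Your route is cleaner if you are willing to cite the ear-decomposition theorem; the paper's is more self-contained, and in fact the paper's ``strongly connected digraph $\Rightarrow$ cycle space spanned by directed cycles'' reasoning is essentially an elementary proof of the very spanning lemma you cite. One caution on your alternative ``inductive symmetric-difference'' argument: taking $C \mathbin{\triangle} A_e$ for some alternating cycle $A_e$ through a non-$M_0$ edge $e\in C$ does not obviously decrease any natural measure (the number of non-$M_0$ edges can go up), so as stated it is not a clean induction; stick with the ear decomposition or the digraph/Eulerian route.

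A genuine point in your favor: you explicitly verify the parity constraint $|L^{*-1}(1)|\equiv n\pmod 2$ needed for $L^*\in\Lall(G)$, by summing $L^*(u)+L^*(v)$ over the even-red matching $M_0$. The paper's proof constructs a labeling consistent with all relevant edges but never checks that it actually lies in $\Lall(G)$; your observation that consistency with $M_0$ forces the right parity is exactly the patch needed, and it works componentwise regardless of the sign choice per component (since flipping a component leaves the argument intact).
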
 
\begin{proof}
By Claim~\ref{clm:parity}, if $G$ has an odd-red PM then there is no labeling such that all matching edges are consistent and so the indicator vector of this odd-red PM satisfies the above constraints. 
It remains to prove that if $G$ has no odd-red PM then the above polytope is empty.
We will do this by showing that if $G$ has no odd-red PM then there is a labeling consistent with all relevant edges. 
This is sufficient because any feasible point in the above LP is a convex combination of perfect matchings since the bipartite matching polytope is integral, but all the edges in these perfect matchings will be consistent with some labeling so the polytope is indeed empty. 
To this end, fix a perfect matching of $G$ that has an even number of red edges.
We show that if we cannot find a labeling consistent on all the relevant edges, then we can find an odd-red perfect matching.

Consider the edge-induced subgraph of relevant edges and take some connected component. 
We direct the edges: direct $e$ from the left partition to the right partition if $e$ is in the perfect matching and right to left otherwise.
\begin{claim}\label{clm:strong-con}
This component is strongly connected unless it is a single edge. 
\end{claim} 
\begin{proof}
For any edge in this component, it is either in every perfect matching that exists in $G$, or not.
In the former case, the component is just the single edge since no edges adjacent to either endpoint can be in any perfect matching.

In the latter case, since the edge is relevant, it must be in an alternating cycle of the perfect matching we fixed at the beginning.
In fact, then every edge in the component has this property since an edge cannot be in every perfect matching if it has an adjacent edge that is in some perfect matching.
Now that every edge is in an alternating cycle, we have that the component is strongly connected.
\end{proof}

Now it suffices to either consistently label this strongly connected component or find an odd-red alternating cycle.
Note that by rotating the matching on a single odd-red alternating cycle we change the parity of the red edges in the perfect matching.
To this end, start at some vertex in the left partition and build a tree using breadth-first search (depth-first search works the same).
Let the root node be at depth zero.
We have that even-depth vertices in the tree have a single out-going edge that is a matching edge and odd-depth vertices can have multiple out-going edges that are all non-matching since we started in the left partition. 

Fix some labeling consistent with the edges of the tree, which is possible because there are no cycles.
However, there are non-tree edges that could violate this labeling.
In particular, a non-tree edge is a non-matching edge and directs from an odd-depth vertex to an even-degree one, and is either a back-edge (directs to a lower-depth vertex on the same root-leaf path), a forward-edge (directs to a higher-depth vertex on the same root-leaf path), or a cross-edge.

Consider first the case of a back-edge $(u,v)$, as seen in Figure~\ref{fig:back-edge}.
Then we have an alternating cycle starting from $v$ that uses tree edges until $u$, and then uses the violating edge $(u,v)$, so it has exactly one violating edge.

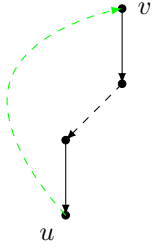
\begin{figure}
    \centering
    \begin{tikzpicture}
    \draw [fill=black] (0, 0) circle [radius = 0.05];
    \draw [fill=black] (0, -1) circle [radius = 0.05];
    \draw [fill=black] (-0.75, -1.75) circle [radius = 0.05];
    \draw [fill=black] (-0.75, -2.75) circle [radius = 0.05];
    \draw [-latex] (0, 0) -- (0,-1);
    \draw [-latex, dashed] (0, -1) -- (-0.75,-1.75);
    \draw [-latex] (-0.75, -1.75) -- (-0.75,-2.75);
    \draw [-latex, dashed, color=green] plot [smooth, tension=1] coordinates { (-0.75,-2.75) (-1.5, -1) (0,0)};
    \node at (-1, -3) {$u$};
    \node at (0.3, 0) {$v$};
    \end{tikzpicture}
    \caption{Back-edge along same root-leaf path, with the green edge violating the labeling.}
    \label{fig:back-edge}
 \end{figure} 

\begin{claim}\label{clm:vio-cycle}
An even cycle has an odd number of red edges if and only if for every labeling there are an odd number of violating edges.
\end{claim}
\begin{proof}
Note that the parity of blue edges is the same as the parity of red edges in an even cycle.
If we start with an arbitrary label at an arbitrary vertex and label the vertices consistently with the edge colors by walking around the cycle, then each blue edge flips the sign of the last labelled vertex.
Hence, an odd number of blue edges results in one violating edge and an even number of blue edges results in all edges being consistently labelled.
Now if we start flipping some labels, each flipped label can either add two violating edges if the two incident edges to the flipped vertex were originally consistent, subtract two violating edges if the incident edges were originally violating, or maintain the number of violating edges if one incident edge was violating and the other was consistent.
Hence, the parity of number of violating edges is preserved for any labeling.
\end{proof}

With this claim, we have that the alternating cycle in Figure~\ref{fig:back-edge} has an odd number of red edges so we have an odd-red matching.

\begin{figure}
    \centering
    \begin{tikzpicture}
    \draw [fill=black] (0, 0) circle [radius = 0.05];
    \draw [fill=black] (0, -1) circle [radius = 0.05];
    \draw [fill=black] (-1,-2.5) circle [radius = 0.05];
    \draw [fill=black] (-1, -1.5) circle [radius = 0.05];
    \draw [fill=black] (0,-3) circle [radius = 0.05];
    \draw [fill=black] (0, -4) circle [radius = 0.05];
    \draw [-latex] (0, 0) -- (0,-1);
    \draw [-latex, dashed] (0, -1) -- (-1, -1.5);
    \draw [-latex] (-1, -1.5) -- (-1, -2.5);
    \draw [-latex, dashed] (-1, -2.5) -- (0,-3);
    \draw [-latex] (0,-3) -- (0, -4);
    \draw [-latex, dashed, color=green] (0,-1) -- (0,-3);
    \draw [-latex, dotted, color=blue] plot [smooth, tension=1] coordinates {(0,-4) (0.2, -4.5) (1, -1.5) (0.2, 0.5) (0,0)};
    \node at (0.3, -1) {$u$};
    \node at (0.3, -3) {$v$};
%    \node at (0.3, -4) {$x$};
%    \node at (0.3, 0) {$y$};
    \node at (-0.9, -0.9) {lca$(u,v)$};
    \end{tikzpicture}
    \caption{Forward edge $uv$. The dotted curve in blue shows some path from $v$ to $u$.}
    \label{fig:forward-edge}
\end{figure}
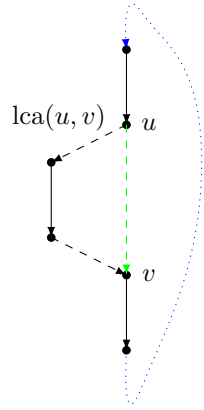

Now suppose that the violating non-tree edge $(u,v)$ is a forward edge, as in Figure~\ref{fig:forward-edge}.
Since the component is strongly connected by Claim~\ref{clm:strong-con} there must be an alternating path from $v$ to the lowest common ancestor of $u$ and $v$, lca$(u,v)$ (which in this case is just $u$).
If there are an odd number of violating edges in this path then complete a circuit by going from lca$(u,v)$ to $v$ by only taking tree-edges.
Otherwise, the completed circuit by going from lca$(u,v)$ to $v$ using the violating edge $(u,v)$ has an odd number of violating edges.
In either case, this circuit enters and exits every vertex the same number of times so the edges traversed form an Eulerian circuit.
A basic fact in graph theory says that an Eulerian circuit can be decomposed into edge-disjoint cycles. 
Since the original circuit had an odd number of violating edges it cannot be that every cycle in the decomposition has an even number of violating edges.
Taking a cycle with an odd number of violating edges, by Claim~\ref{clm:vio-cycle} it is an odd-red alternating cycle.

Actually the case of a cross non-tree edge $(u,v)$ is the same.
It can look like either tree in Figure~\ref{fig:cross-edge}, depending on the relative depths of $u$ and $v$, but we have the same argument as for forward edges.
This concludes the proof of Theorem~\ref{thm:feasible-poly}.

\begin{figure}
    \begin{minipage}[t][5.0cm][t]{0.5\textwidth}
    \begin{center}
    \begin{tikzpicture}
    \draw [fill=black] (0, 0) circle [radius = 0.05];
    \draw [fill=black] (0, -1) circle [radius = 0.05];
    \draw [fill=black] (0, -2) circle [radius = 0.05];
    \draw [fill=black] (-0.75, -1.75) circle [radius = 0.05];
    \draw [fill=black] (-0.75, -2.75) circle [radius = 0.05];
    \draw [fill=black] (0, -3) circle [radius = 0.05];
    \draw [-latex] (0, 0) -- (0,-1);
    \draw [-latex, dashed] (0, -1) -- (-0.75,-1.75);
    \draw [-latex, dashed] (0, -1) -- (0,-2);
    \draw [-latex] (0, -2) -- (0, -3);
    \draw [-latex] (-0.75, -1.75) -- (-0.75,-2.75);
    \draw [-latex, dashed, color=green] (-0.75,-2.75) -- (0, -2);
    \draw [-latex, dotted, color=blue] plot [smooth, tension=1] coordinates {(0,-3) (0.2, -3.5) (1, -1.5) (0.2, 0.5) (0,0)};
    \node at (-1, -3) {$u$};
    \node at (0.3, -2) {$v$};
%    \node at (0.3, -3) {$x$};
%    \node at (0.3, 0) {$y$};
    \node at (-0.9, -0.9) {lca$(u,v)$};
    \end{tikzpicture}
    \end{center}
    \end{minipage}
    \begin{minipage}[t][5.0cm][t]{0.5\textwidth}
    \begin{center}
    \begin{tikzpicture}
    \draw [fill=black] (0, 0) circle [radius = 0.05];
    \draw [fill=black] (0, -1) circle [radius = 0.05];
    \draw [fill=black] (0.75,-1.75) circle [radius = 0.05];
    \draw [fill=black] (-0.75, -1.75) circle [radius = 0.05];
    \draw [fill=black] (-0.75, -2.75) circle [radius = 0.05];
    \draw [fill=black] (0.75, -2.75) circle [radius = 0.05];
    \draw [-latex] (0, 0) -- (0,-1);
    \draw [-latex, dashed] (0, -1) -- (-0.75,-1.75);
    \draw [-latex, dashed] (0, -1) -- (0.75,-1.75);
    \draw [-latex] (0.75,-1.75) -- (0.75, -2.75);
    \draw [-latex] (-0.75, -1.75) -- (-0.75,-2.75);
    \draw [-latex, dashed] (-0.75, -2.75) -- (0,-3.5);
    \draw [-latex] (0,-3.5) -- (0,-4.5);
    \draw [-latex, dashed, color=green] (0.75,-2.75) -- (0, -3.5);
    \draw [-latex, dotted, color=blue] plot [smooth, tension=1] coordinates {(0, -4.5) (0.4, -4.75) (1.5, -2) (0.3, 0.5) (0,0)};
    \node at (1, -2.75) {$u$};
    \node at (-0.3, -3.5) {$v$};
%    \node at (0.3, -4.5) {$x$};
%    \node at (0.3, 0) {$y$};
    \node at (-0.9, -0.9) {lca$(u,v)$};
        \end{tikzpicture}
    \end{center}
    \end{minipage}
    \caption{Cross-edges. The dotted curves in blue show some path from $v$ to lca$(u,v)$.}
    \label{fig:cross-edge}
\end{figure}
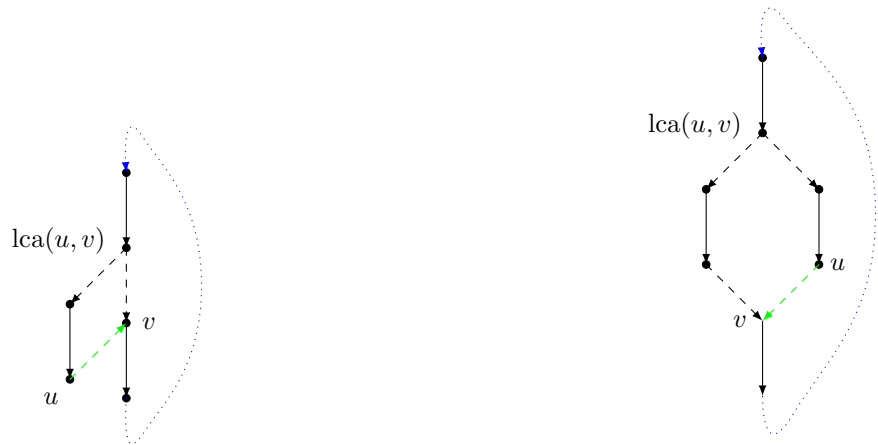
\end{proof}

\bibliography{references}

\begin{thebibliography}{10}

\bibitem{Jacobi}
Jacobi's bound.
\newblock
  \url{https://www.lix.polytechnique.fr/~ollivier/JACOBI/jacobiEngl.htm}.
\newblock Accessed: 2022-07-13.

\bibitem{AWZ17}
Stephan Artmann, Robert Weismantel, and Rico Zenklusen.
\newblock A strongly polynomial algorithm for bimodular integer linear
  programming.
\newblock In {\em Proceedings of the 49th Annual ACM SIGACT Symposium on Theory
  of Computing}, STOC 2017, page 1206–1219, New York, NY, USA, 2017.
  Association for Computing Machinery.
\newblock \href {https://doi.org/10.1145/3055399.3055473}
  {\path{doi:10.1145/3055399.3055473}}.

\bibitem{Balas1998}
Egon Balas.
\newblock Disjunctive programming: Properties of the convex hull of feasible
  points.
\newblock {\em Discrete Applied Mathematics}, 89(1):3--44, 1998.
\newblock URL:
  \url{https://www.sciencedirect.com/science/article/pii/S0166218X9800136X},
  \href {https://doi.org/https://doi.org/10.1016/S0166-218X(98)00136-X}
  {\path{doi:https://doi.org/10.1016/S0166-218X(98)00136-X}}.

\bibitem{BraunPokutta}
Gábor Braun and Sebastian Pokutta.
\newblock The matching problem has no fully polynomial size linear programming
  relaxation schemes.
\newblock {\em IEEE Transactions on Information Theory}, 61(10):5754--5764,
  2015.
\newblock \href {https://doi.org/10.1109/TIT.2015.2465864}
  {\path{doi:10.1109/TIT.2015.2465864}}.

\bibitem{Cevallos_2018}
Alfonso Cevallos, Stefan Weltge, and Rico Zenklusen.
\newblock Lifting linear extension complexity bounds to the mixed-integer
  setting.
\newblock In {\em Proceedings of the Twenty-Ninth Annual {ACM}-{SIAM} Symposium
  on Discrete Algorithms}, pages 788--807. Society for Industrial and Applied
  Mathematics, jan 2018.
\newblock URL: \url{https://doi.org/10.1137\%2F1.9781611975031.51}, \href
  {https://doi.org/10.1137/1.9781611975031.51}
  {\path{doi:10.1137/1.9781611975031.51}}.

\bibitem{DBLP:journals/corr/abs-2203-00671}
Li~Chen, Rasmus Kyng, Yang~P. Liu, Richard Peng, Maximilian~Probst Gutenberg,
  and Sushant Sachdeva.
\newblock Maximum flow and minimum-cost flow in almost-linear time.
\newblock {\em CoRR}, abs/2203.00671, 2022.

\bibitem{edm65}
Jack Edmonds.
\newblock Maximum matching and a polyhedron with $0,1$ vertices.
\newblock {\em Journal of Research of the National Bureau of Standards},
  69:125--130, 1965.

\bibitem{edm65matching}
Jack Edmonds.
\newblock Paths, trees, and flowers.
\newblock {\em Canadian Journal of Mathematics}, 17:449--467, 1965.

\bibitem{ElMaalouly}
Nicolas ElMaalouly.
\newblock Exact matching: Algorithms and related problems, 2022.
\newblock URL: \url{https://arxiv.org/abs/2203.13899}, \href
  {https://doi.org/10.48550/ARXIV.2203.13899}
  {\path{doi:10.48550/ARXIV.2203.13899}}.

\bibitem{FennerGT16}
Stephen~A. Fenner, Rohit Gurjar, and Thomas Thierauf.
\newblock Bipartite perfect matching is in {quasi-NC}.
\newblock In {\em {STOC}}, pages 754--763, 2016.

\bibitem{FMP+12}
Samuel Fiorini, Serge Massar, Sebastian Pokutta, Hans~Raj Tiwary, and Ronald
  de~Wolf.
\newblock Linear vs. semidefinite extended formulations: Exponential separation
  and strong lower bounds.
\newblock In {\em Proceedings of the Forty-Fourth Annual ACM Symposium on
  Theory of Computing}, STOC '12, page 95–106, New York, NY, USA, 2012.
  Association for Computing Machinery.
\newblock \href {https://doi.org/10.1145/2213977.2213988}
  {\path{doi:10.1145/2213977.2213988}}.

\bibitem{GKMT17}
Rohit Gurjar, Arpita Korwar, Jochen Messner, and Thomas Thierauf.
\newblock Exact perfect matching in complete graphs.
\newblock {\em ACM Trans. Comput. Theory}, 9(2), apr 2017.
\newblock \href {https://doi.org/10.1145/3041402} {\path{doi:10.1145/3041402}}.

\bibitem{GurjarT16}
Rohit Gurjar and Thomas Thierauf.
\newblock Linear matroid intersection is in quasi-{NC}.
\newblock In {\em Proceedings of the 48th Annual {ACM} {SIGACT} Symposium on
  Theory of Computing, {STOC}}, pages 821--830, 2017.

\bibitem{GurjarTV18}
Rohit Gurjar, Thomas Thierauf, and Nisheeth~K. Vishnoi.
\newblock Isolating a vertex via lattices: Polytopes with totally unimodular
  faces.
\newblock In {\em {ICALP}}, pages 74:1--74:14, 2018.

\bibitem{ImpagliazzoW97}
Russell Impagliazzo and Avi Wigderson.
\newblock \emph{P = BPP} if \emph{E} requires exponential circuits:
  Derandomizing the {XOR} lemma.
\newblock In {\em STOC}, pages 220--229, 1997.

\bibitem{ElMaalouly-Steiner}
Nicolas~El Maalouly and Raphael Steiner.
\newblock Exact matching in graphs of bounded independence number, 2022.
\newblock URL: \url{https://arxiv.org/abs/2202.11988}, \href
  {https://doi.org/10.48550/ARXIV.2202.11988}
  {\path{doi:10.48550/ARXIV.2202.11988}}.

\bibitem{MVV87}
Ketan Mulmuley, Umesh~V. Vazirani, and Vijay~V. Vazirani.
\newblock Matching is as easy as matrix inversion.
\newblock In {\em Proceedings of the Nineteenth Annual ACM Symposium on Theory
  of Computing}, STOC '87, page 345–354, New York, NY, USA, 1987. Association
  for Computing Machinery.
\newblock \href {https://doi.org/10.1145/28395.383347}
  {\path{doi:10.1145/28395.383347}}.

\bibitem{RAZBOROV1992385}
A.A. Razborov.
\newblock On the distributional complexity of disjointness.
\newblock {\em Theoretical Computer Science}, 106(2):385--390, 1992.
\newblock URL:
  \url{https://www.sciencedirect.com/science/article/pii/030439759290260M},
  \href {https://doi.org/https://doi.org/10.1016/0304-3975(92)90260-M}
  {\path{doi:https://doi.org/10.1016/0304-3975(92)90260-M}}.

\bibitem{Rothvoss17}
Thomas Rothvoss.
\newblock The matching polytope has exponential extension complexity.
\newblock {\em J. ACM}, 64(6), sep 2017.
\newblock \href {https://doi.org/10.1145/3127497} {\path{doi:10.1145/3127497}}.

\bibitem{Sinha}
Makrand Sinha.
\newblock Lower bounds for approximating the matching polytope.
\newblock In {\em Proceedings of the Twenty-Ninth Annual ACM-SIAM Symposium on
  Discrete Algorithms}, SODA '18, page 1585–1604, USA, 2018. Society for
  Industrial and Applied Mathematics.

\bibitem{nST17}
Ola Svensson and Jakub Tarnawski.
\newblock The matching problem in general graphs is in quasi-nc.
\newblock In {\em {FOCS}}, pages 696--707, 2017.

\bibitem{Yannak88}
Mihalis Yannakakis.
\newblock Expressing combinatorial optimization problems by linear programs.
\newblock In {\em Proceedings of the Twentieth Annual ACM Symposium on Theory
  of Computing}, STOC '88, page 223–228, New York, NY, USA, 1988. Association
  for Computing Machinery.
\newblock \href {https://doi.org/10.1145/62212.62232}
  {\path{doi:10.1145/62212.62232}}.

\end{thebibliography}

%\begin{subappendices}
%\setcounter{section}{0} % Start section at A
%\renewcommand*{\thesection}{\Alph{section}}
% \appendix
%\input{parts/appendices}
%\end{subappendices}

\end{document}